\declaretheorem[name=Theorem,numberwithin=section]{theorem}
\declaretheorem[name=Lemma,sibling=theorem]{lemma}
\declaretheorem[name=Corollary,sibling=lemma]{corollary}
\newcommand{\floor}[1]{\left\lfloor #1\right\rfloor}
\newcommand{\ang}[1]{\langle #1\rangle}
\newcommand{\inv}[1]{\frac{1}{#1}}
\renewcommand{\ang}[1]{\langle #1\rangle}
\newcommand{\RE}{\mathbb{R}}            
\newcommand{\eps}{\varepsilon}          
\newcommand{\RR}{\mathscr{R}}
\renewcommand{\AA}{\mathscr{A}}
\newcommand{\CC}{\mathscr{C}}
\newcommand{\EE}{\mathscr{E}}
\newcommand{\WW}{\mathscr{W}}
\newcommand{\bd}{\partial}
\newcommand{\etal}{\textit{et al.}}
\DeclareMathOperator{\vol}{vol}
\DeclareMathOperator{\area}{area}
\DeclareMathOperator{\conv}{conv}
\DeclareMathOperator{\radius}{radius}
\DeclareMathOperator{\width}{width}
\DeclareMathOperator{\ray}{ray}
\DeclareMathOperator{\base}{base}
\DeclareMathOperator{\polylog}{polylog}
\begin{document}
\title{Optimal Bound on the Combinatorial Complexity of Approximating Polytopes}

\author{%
    Rahul Arya\\
		Department of Electrical Engineering and Computer Science \\
		University of California, Berkeley, California \\
		rahularya@berkeley.edu \\
		\and
	Sunil Arya\thanks{Research supported by the Research Grants Council of Hong Kong, China under project number 16214518. The work of David Mount was supported by NSF grant CCF--1618866. The work of Guilherme da Fonseca has been supported by the French ANR PRC grant ADDS (ANR-19-CE48-0005).}\\
		Department of Computer Science and Engineering \\
		The Hong Kong University of Science and Technology, Hong Kong\\
		arya@cse.ust.hk \\
		\and
	Guilherme D. da Fonseca\footnotemark[1]\\
	    Aix-Marseille Universit\'{e} and LIS, France\\
		guilherme.fonseca@lis-lab.fr
		\and
	David M. Mount\footnotemark[1]\\
		Department of Computer Science and 
		Institute for Advanced Computer Studies \\
		University of Maryland, College Park, Maryland \\
		mount@umd.edu \\
}
\date{}

\maketitle

\begin{abstract}
This paper considers the question of how to succinctly approximate a multidimensional convex body by a polytope. Given a convex body $K$ of unit diameter in Euclidean $d$-dimensional space (where $d$ is a constant) and an error parameter $\varepsilon > 0$, the objective is to determine a convex polytope of low combinatorial complexity whose Hausdorff distance from $K$ is at most $\varepsilon$. By \emph{combinatorial complexity} we mean the total number of faces of all dimensions. Classical constructions by Dudley and Bronshteyn/Ivanov show that $O(1/\varepsilon^{(d-1)/2})$ facets or vertices are possible, respectively, but neither achieves both bounds simultaneously. In this paper, we show that it is possible to construct a polytope with $O(1/\varepsilon^{(d-1)/2})$ combinatorial complexity, which is optimal in the worst case.

Our result is based on a new relationship between $\varepsilon$-width caps of a convex body and its polar body. Using this relationship, we are able to obtain a volume-sensitive bound on the number of approximating caps that are ``essentially different.'' We achieve our main result by combining this with a variant of the witness-collector method and a novel variable-thickness layered construction of the economical cap covering.
\end{abstract}

\section{Introduction} \label{s:intro}

Convex objects are of central importance in numerous areas of geometric computation. Efficiently approximating a multi-dimensional convex body by a convex polytope is a natural and fundamental problem. Given a closed, convex set $K$ of unit diameter in Euclidean $d$-dimensional space and an error parameter $\eps > 0$, the objective is to produce a convex polytope of low combinatorial complexity whose Hausdorff distance%
\footnote{The \emph{Hausdorff distance} between any two sets is the maximum Euclidean distance between any point in one set and its closest point in the other set. While there are other metrics for polytope similarity (see, e.g.~\cite{Bro08}), Hausdorff is the measure most often used in computational geometry. Approximations sensitive to the diameter and the directional width can be obtained by applying an affine transformation to $K$.}
from $K$ is at most $\eps$. The \emph{combinatorial complexity} of a polytope is the total number of faces of all dimensions. Throughout, we assume that the dimension $d$ is a constant.

The bounds given in the literature for convex approximation are of two common types~\cite{Bro08}. In both cases, the bounds hold for all $\eps \leq \eps_0$, for some $\eps_0 > 0$. In \emph{nonuniform bounds}, the value of $\eps_0$ depends on $K$. Examples include bounds by Gruber~\cite{Gru93}, B{\" o}r{\" o}czky \cite{Bor00a, Bor00b}, B\'{a}r\'{a}ny~\cite{Bar89}, Clarkson~\cite{Cla06}, and others \cite{ANS16, Sch87, Tot48}. Of particular relevance here is the work of B{\" o}r{\" o}czky \cite{Bor00b}, whose results imply that any unit-diameter convex body with sufficiently smooth boundary can be $\eps$-approximated with total complexity $O(1/\eps^{(d-1)/2})$. But this holds for $\eps \leq \eps_0$, where $\eps_0$ depends on the curvature properties of the body. In contrast, our interest is in \emph{uniform bounds}, where the value of $\eps_0$ does not depend on $K$ (subject to the assumption of unit diameter). Examples of such bounds include the results of Dudley~\cite{Dud74}, Bronshteyn and Ivanov~\cite{BrI76}, and our own prior work~\cite{AFM12b,AFM17a}. Such bounds hold without any assumptions on $K$.

Dudley showed that, for $\eps \leq 1$, any convex body $K$ of unit diameter can be $\eps$-approximated by a convex polytope $P$ with $O(1/\eps^{(d-1)/2})$ facets~\cite{Dud74}. This bound is known to be tight in the worst case and is achieved when $K$ is a Euclidean ball~\cite{Bro08}. Alternatively, Bronshteyn and Ivanov showed the same bound holds for the number of vertices, which is also the best possible~\cite{BrI76}. Similar bounds are widely used in algorithms based on $\eps$-kernels to approximate the diameter, width, minimum enclosing cylinder, and bichromatic closest pair, among others (see~\cite{AHV05,AFM17c,Cha18}).
Unfortunately, no construction is known that matches both bounds simultaneously. This issue has been noted by Clarkson~\cite{Cla06}, where he cites communications with Jeff Erickson showing that both bounds can be attained but at the cost of sacrificing convexity. 

McMullen's \emph{Upper-bound Theorem}~\cite{MCM70} implies that a polytope with $n$ facets (resp., vertices) has $O(n^{\floor{d/2}})$ vertices (resp., facets), and this bound is attained by cyclic polytopes. Applying this to Dudley's or Bronshteyn and Ivanov's constructions yields a very weak upper bound of roughly $O\big( 1/\eps^{(d^2-d)/4} \big)$ on the combinatorial complexity of $\eps$-approximating polytopes. (Alternative constructions are known that yield a complexity of roughly $O(1/\eps^{d - 2})$~\cite{And63,Bar08}, but this is nearly quadratic in the lower bound.) 

Because it is often useful to convert between vertex-based and facet-based representations of convex polytopes (as the convex hull of points and the intersection of halfspaces, respectively), this blowup has been a major impediment to the application of fundamental polytopal structures such as convex hulls, Delaunay triangulations, and Voronoi diagrams in dimensions $d > 3$. Efficiently representing the combinatorial structure of a polytope's faces is of great practical importance to several algorithms~\cite{BKT17}. However, the high combinatorial complexity of existing polytope approximations severely limits the efficiency of such data structures.

In this paper we resolve this decades-old problem. We present a construction for approximating a convex body that not only simultaneously achieves the bound of $O(1/\eps^{(d-1)/2})$ on the number of vertices and facets, but in fact establishes this bound on the total combinatorial complexity (sum of faces of all dimensions).

\begin{theorem} \label{thm:main}
Let $K \subset \RE^d$ be a convex body of unit diameter, where $d$ is a fixed constant. For all sufficiently small positive $\eps$ (independent of $K$) there exists an $\eps$-approximating convex polytope $P$ to $K$ of combinatorial complexity $O(1/\eps^{(d-1)/2})$.
\end{theorem}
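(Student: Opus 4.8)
The plan is to build the approximating polytope $P$ by combining two classical ideas — Dudley's facet-based construction and Bronshteyn--Ivanov's vertex-based construction — applied not globally but locally on a carefully chosen decomposition of the boundary of $K$ into caps. First, I would recall the \emph{economical cap covering lemma}: the boundary of $K$ can be covered by a collection of caps, each of "width" roughly $\eps$, such that the total number of caps is $O(1/\eps^{(d-1)/2})$ when $K$ is smooth (e.g.\ a ball), but in general the number of caps is governed by the surface curvature and can be as large as $\Theta(1/\eps^{d-1})$. The key obstruction to a naive approach is precisely that a body can have many "flat" regions requiring few caps but many vertices, or highly curved regions requiring many caps; no single construction handles both. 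So the first real step is the new ingredient announced in the abstract: relate $\eps$-width caps of $K$ to caps of the polar body $K^\circ$, and use this polarity to prove a \emph{volume-sensitive} bound — the number of "essentially different" approximating caps needed in a region is controlled by the volume of the corresponding region, simultaneously for $K$ and $K^\circ$.

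Next I would set up the \emph{witness-collector framework} (a variant of Dudley/Bronshteyn): pick a set of "collector" caps $\{C_i\}$ covering $\bd K$ and, dually, a set of witness points; the polytope $P$ is obtained by taking, within each collector cap $C_i$, a local polytopal patch whose complexity is bounded by the volume-sensitive estimate. Globally one then glues these patches, using the fact that outside a constant number of caps any witness/collector pair is "resolved" by $O(1)$ faces. The crucial quantitative step is to show that summing the local complexities over all caps telescopes to $O(1/\eps^{(d-1)/2})$ rather than the trivial product bound: this is where the polar volume bound does the work, since the sum of the volumes over the cap decomposition is bounded (both the caps of $K$ and the induced caps of $K^\circ$ fit inside bounded-volume shells), and a Hölder/convexity argument converts a bounded sum of volumes into the desired bound on the number of caps and hence on $|P|$.

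The \textbf{main obstacle} I anticipate is the \emph{variable-thickness layered construction} hinted at in the abstract: a single layer of caps of uniform width $\eps$ does not suffice, because it would recreate the $1/\eps^{d-1}$ blow-up in curved regions. Instead one must use caps of geometrically varying widths $\eps, 2\eps, 4\eps, \ldots$, organized in layers, so that each layer contributes a geometrically decreasing amount to the total complexity while together they still certify $\eps$-approximation. Making the layers consistent — ensuring that the union of the local patches is genuinely convex, that adjacent layers' caps overlap correctly, and that the Hausdorff error stays $\le \eps$ across layer boundaries — is the delicate part, and is exactly where the polar duality must be invoked layer-by-layer to keep the face count from compounding. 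Once the layered decomposition is in place and its complexity bound is established via the volume-sensitive lemma, bounding the \emph{total} number of faces of all dimensions (not just vertices or facets) follows because each local patch has constant complexity and only $O(1/\eps^{(d-1)/2})$ patches are used, so the faces shared between patches add only a constant-factor overhead.
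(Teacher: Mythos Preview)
Your proposal identifies the right high-level ingredients --- the polar relationship between caps of $K$ and $K^*$, a volume-sensitive bound on the number of caps, the witness-collector framework, and a layered construction --- but two of the steps are described in a way that would not yield the optimal bound.

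First, the layered construction you sketch uses caps of geometrically increasing widths $\eps, 2\eps, 4\eps, \ldots$, organized so that ``each layer contributes a geometrically decreasing amount.'' This is essentially the construction of~\cite{AFM17a}, and it is precisely what produces the extra $\polylog(1/\eps)$ factor that this paper removes. The paper's layers are indexed by the \emph{volume class} $j$ of the caps (caps of volume $\Theta(2^j \eps^{(d+1)/2})$), and layer $j$ is given thickness $\eps/\max(j^2,1)$, not a geometric multiple of~$\eps$. The total thickness is then $\sum_j \eps/j^2 = O(\eps)$, so the Hausdorff error stays $O(\eps)$. The reason one can afford such thin layers for large~$|j|$ is exactly the volume-sensitive bound: the number of disjoint caps of width $\Theta(\eps)$ and volume $\Theta(v)$ is $O(\min(\eps/v,\, v/\eps^d))$, so the extreme volume classes contain very few caps, and shrinking their width by a factor $j^2$ still keeps the total count at $O(1/\eps^{(d-1)/2})$. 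Your geometric-width scheme does not exploit this and would recreate the $\log(1/\eps)$ loss.

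Second, your description of witness-collector as building ``local polytopal patches'' within each collector cap and then ``gluing'' them, with the worry of ensuring the union is convex, is not how the method works here. The approximating polytope is simply the convex hull of a global point set $S$ (one point per Macbeath region), so convexity is automatic and there is no gluing. The witness-collector machinery is used only to bound the combinatorial complexity of $\conv(S)$: one shows that every halfspace either contains a witness $R_i$ or has $H\cap S$ trapped in a collector $C_i$ containing $O(1)$ points of $S$. The layered placement is what guarantees the $O(1)$ bound in property~(3): a collector associated with a type-$j$ cap can only meet bodies in layers $r\ge j$, and since bodies in layer $r$ have volume larger by a factor $\Omega(2^{r-j})$ while the collector's intersection with layer $r$ grows only polynomially in $r-j$, a packing argument gives $\sum_{r\ge j} O((r-j)^{3d}/2^{r-j}) = O(1)$. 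There is no H\"older step; the volume-sensitive bound is proved by a direct packing argument carried out in whichever of $K$ or $K^*$ has the larger-volume cap.
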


We considered this problem earlier~\cite{AFM17a}, obtaining a result that was suboptimal. That paper introduced two useful techniques: a width-based variant of B{\'a}r{\'a}ny's~\cite{Bar00} economical cap cover based on Macbeath regions and a multi-layered%
\footnote{Note that the term ``layer'' also appears in \cite{DGGT16} with respect to the witness-collector method, but our usage is entirely different.}
approach to the witness-collector method~\cite{DGGT16}. That result fell short, however, due in part to a weak understanding of the distribution of Macbeath region volumes in the cap cover. In this paper, we introduce an important new result, namely a ``volume-sensitive'' bound on the number of Macbeath regions (in Theorem~\ref{thm:vol-sensitive-bound}).

This new bound comes about by establishing a correspondence between Macbeath regions in the original convex body and its polar body, and then demonstrating a reciprocal relationship in the volumes of corresponding regions. A classical result in the theory of convex bodies states that the volume of a convex body and its polar dual have a reciprocal relationship. The dimensionless product of these two quantities is called the \emph{Mahler volume}~\cite{Kuperberg}. Our correspondence shows that this property holds not just globally, but locally as well. This enables this reciprocal relationship to be applied in the context of approximation, and this admits a more sophisticated application of the witness-collector method. We believe that this primal-polar approach is an important new technique, which will be useful in other optimization problems involving convex approximation.

\subsection*{Overview of Methods} 

Before delving into technical matters, let us survey the broader context behind our work and give a high-level view of our approach. Convex approximation by polytopes is in essence a covering problem. Given a convex body $K$ of unit diameter, an \emph{$\eps$-width cap} is the intersection of $K$ with a halfspace that cuts off a slice of width $\eps$ from $K$. Clearly, any collection of $\eps$-width caps that covers all of $K$'s boundary yields an $\eps$-approximation of $K$ having (at most) as many facets. Clarkson~\cite{Cla06} observed that, as $\eps$ tends to zero, computing such a cover involves sampling the boundary of $K$ according to a metric that is sensitive to $K$'s shape, with proportionately more samples in areas of higher curvature. Intuitively, such a metric should capture the notion of the ``local feature size'' at any point of $K$.

In recent works, we have demonstrated that shape-sensitive sampling can be achieved through the use of Macbeath regions. Given a point $x \in K$, the \emph{Macbeath region} $M(x)$ is a maximal centrally symmetric convex shape centered at $x$ and contained within $K$. (Formal definitions and properties are provided in Section~\ref{s:prelim-macbeath}.) Macbeath regions enjoy many useful properties. They can be computed efficiently, they have nice packing and covering properties, and up to constant scaling factors, $M(x)$ approximates the minimum-volume cap centered at $x$ as well as the unit balls centered at $x$ in both the Hilbert and Blaschke geometries induced by $K$~\cite{Bar00,VeW16,Tho17}. 

Macbeath regions were first introduced to computational geometry as a tool to prove lower bounds for range searching~\cite{BCP93,AMX12}. Later on, they were used to prove existential results in convex approximation~\cite{AFM17a,AFM17b,DGJ19,MuR14}. More recently, the explicit computation of such regions has been used to obtain the fastest algorithms known for several approximation problems such as $\eps$-kernel, diameter, and width~\cite{AFM17c,AFM18}.

In spite of their obvious relevance to convex approximation, there is still much that is not known about Macbeath regions. In the context of convex approximation, the number of disjoint (shrunken) Macbeath regions that can be placed at distance $\eps$ of $K$'s boundary is closely related to the complexity of approximating $K$. In earlier work~\cite{AFM17a}, we showed that for any convex body of unit diameter, such a set has size $O(1/\eps^{(d-1)/2})$. This bound is attained for a ball, which has $\Theta(1/\eps^{(d-1)/2})$ disjoint such regions, all of volume $\Theta(\eps^{(d+1)/2})$ (see Figure~\ref{f:small-macbeath}(a)).

The volume distribution of such a set is a question of key importance. To see why this is nontrivial, consider the unit $d$-dimensional hypercube (see Figure~\ref{f:small-macbeath}(b)). Parallel to each facet is an $\eps$-width cap (and Macbeath region) of very large volume $\Theta(\eps)$. Since the volume of the portion of $K$ lying within distance $\eps$ of the boundary is also $\Theta(\eps)$, a packing argument implies that there cannot be more than a constant number of disjoint Macbeath regions associated with such large volume caps. On the other hand, $\eps$-width caps (and Macbeath regions) that are orthogonal to the main diagonals, that is, close to the vertices of the hypercube, have very small volume of $\Theta(\eps^{d})$. A packing argument provides no useful bound on their number. Nonetheless, there cannot be many of them. To see why, observe that a small volume $\eps$-width cap can only exist where $K$'s boundary has high curvature, and the total curvature of any convex body is bounded. In this paper, we formalize and generalize this intuition. In particular, we show (in Theorem~\ref{thm:vol-sensitive-bound}) that the number of disjoint $\eps$-width Macbeath regions of volume $v$ is $O(\min(\eps/v, v/\eps^d))$. Note that in both of the above extremes, this yields a tight bound of $O(1)$ on the number of Macbeath regions, whereas the maximum number of Macbeath regions is attained by the intermediate volume of $\Theta(\eps^{(d+1)/2})$.

\begin{figure*}[btp]
  \centerline{\includegraphics[scale=.65]{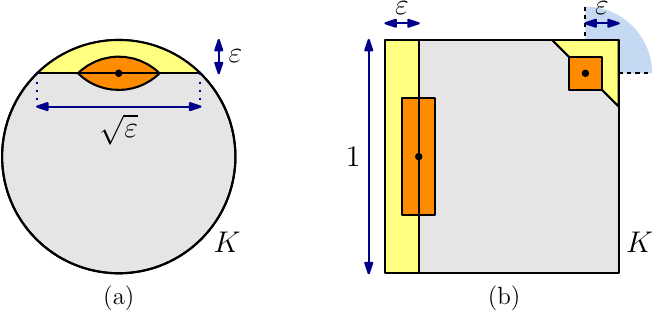}}
  \caption{\label{f:small-macbeath} Macbeath regions of different volumes in a ball and hypercube.}
\end{figure*}

To prove this volume-sensitive bound we explore the nature of the Macbeath regions and caps in the polar body $K^*$ of $K$. We show that for any cap $C$ of width $\eps$ in the primal body $K$, there is a corresponding cap $C'$ of width $\Theta(\eps)$ in the polar body $K^*$ such that the bases of these two caps are rough polars of each other. By the Mahler volume, it follows that the product of the volumes of these corresponding caps is roughly the same for every $\eps$-width cap of $K$. The volume-sensitive bound then follows by applying a packing argument in either $K$ or $K^*$, depending on which cap has the larger volume. 

In order to apply this new volume-sensitive bound to convex approximation, we recall the witness-collector approach to bound the combinatorial complexity of the convex hull of a set of points~\cite{DGGT16,AFM17a}. Let $S \subset \RE^d$ be a set of points. We define a set $\WW$ of regions called \emph{witnesses} and a corresponding set $\CC$ of regions called \emph{collectors}, which satisfy the following properties:
\begin{enumerate}
\item[(1)] Each witness of $\WW$ contains a point of $S$ in its interior.

\item[(2)] Any halfspace $H$ either contains a witness $W \in \WW$ or $H \cap S$ is contained in a collector $C \in \CC$.

\item[(3)] Each collector $C \in \CC$ contains a constant number of points of $S$.
\end{enumerate}
Devillers {\etal}~\cite{DGGT16} showed that given a set of witnesses $\WW$ and collectors $\CC$ satisfying the above properties, the combinatorial complexity of the polytope $K$ defined as the convex hull of $S$ is $O(|\CC|)$. Hence, to prove Theorem~\ref{thm:main}, it suffices to provide a set $S$ whose convex hull $\eps$-approximates $K$ and a corresponding set of witnesses and collectors of cardinality $O(1/\eps^{(d-1)/2})$.

The key tool for our construction is the aforementioned collection of disjoint Macbeath regions lying within distance $\eps$ of $K$'s boundary. This set of Macbeath regions has the desired size, and the corresponding caps satisfy properties (1) and (2) above, where the Macbeath regions represent the witnesses and the caps represent the collectors. Since Macbeath regions approximate caps, if we define a set $S$ by picking one arbitrary point inside each Macbeath region, then we are guaranteed to obtain an $\eps$-approximation of $K$.

However, the construction may fail to satisfy property (3). While a cap may only intersect a constant number of Macbeath regions of larger or similar volume, any given cap may intersect a large number of Macbeath regions of smaller volume. We dealt with this in~\cite{AFM17a} by arranging the Macbeath regions into $O(\log\inv\eps)$ \emph{layers} of thickness $\eps$, moving low-volume Macbeath regions into the innermost layers. In this way, we prevented the caps from intersecting Macbeath regions of smaller volume, and assuring property (3). However, the approximation error grew from $\eps$ to $O(\eps \log\inv\eps)$ because of the increased total thickness of the layers. An $\eps$-approximation was achieved through a compensatory scaling of $\eps$, causing the complexity to grow to $O((\log\inv\eps/\eps)^{(d-1)/2})$.

In this paper we show how to exploit our volume-sensitive bound to obtain an $\eps$-approximation of optimal combinatorial complexity. As in~\cite{AFM17a}, we place Macbeath regions in different layers of thickness according to their volumes, the outermost layers corresponding to Macbeath regions of larger volume and the innermost layers corresponding to smaller volumes. A key idea is to abandon the use of layers of uniform thickness and instead use layers of varying thicknesses. The middle layer (numbered layer $0$), which corresponds to Macbeath regions of volume $v = \eps^{(d+1)/2}$, has the maximum thickness $\eps$, and a layer of number $i$ (which may be positive or negative) has thickness of just $\eps / i^2$. As a result, the sum of the thicknesses of all layers is then given by $\sum_i \eps/i^2 = O(\eps)$, which eliminates the wasteful log factor. 

This causes a problem however. Macbeath regions inside each layer have widths proportional to the thickness of the layer, and a naive analysis would suggest that more Macbeath regions would be needed as the absolute value of $i$ increases. The resulting increase in the number of Macbeath regions would negate the savings achieved in the total thickness. The volume-sensitive bound saves us. Omitting technical details, we show that as their thicknesses decrease, the volumes change in a compensatory manner, and the volume-sensitive bound implies that the feared blowup in numbers does not occur. As a consequence, the total number of Macbeath regions can be shown to be just $O(1/\eps^{(d-1)/2})$, yielding the optimal bound for the combinatorial complexity.

The remainder of the paper is organized as follows. In Section~\ref{s:prelim}, we introduce the various geometric preliminaries upon which our construction relies, and summarize the salient properties of Macbeath regions, which are central to our construction. In Section~\ref{s:polar-caps}, we investigate the relationship between $\eps$-width caps in the primal body $K$ and its polar $K^*$. In Section~\ref{s:volume-sensitive-bound}, we show that the number of disjoint Macbeath regions of width $\eps$ and volume $\Theta(v)$ is $O(\min(\eps/v, v/\eps^d))$. Finally, we prove Theorem~\ref{thm:main} in Section~\ref{s:opt}. Conclusions and open problems are discussed in Section~\ref{s:conclusion}.

\section{Geometric Preliminaries} \label{s:prelim}

In this section we present a number of concepts and observations that will be used throughout the paper. Much of the material in this section has been presented in \cite{AFM17a, AFM17b, AFM17c}. We have included it here for the sake of completeness. The proofs of all the lemmas in this subsection that are omitted can be found in these papers or are straightforward adaptations of the proofs given therein.

Consider a convex body $K$ in $d$-dimensional space $\RE^d$. Let $\bd K$ denote the boundary of $K$. Let $O$ denote the origin of $\RE^d$. Given a parameter $0 < \gamma \le 1$, we say that $K$ is \emph{$\gamma$-fat} if there exist concentric Euclidean balls $B$ and $B'$, such that $B \subseteq K \subseteq B'$, and $\radius(B) / \radius(B') \ge \gamma$. We say that $K$ is \emph{fat} if it is $\gamma$-fat for a constant $\gamma$ (possibly depending on $d$, but not on $\eps$). We will use $\vol(K)$ to denote its $d$-dimensional Lebesque measure. When dealing with $(d-1)$-dimensional convex objects, we will use $\area(\cdot)$ to denote the $(d-1)$-dimensional Lebesque measure.

Let $B_0$ denote the ball of unit radius centered at the origin and for $\alpha > 0$, let $\alpha B_0$ denote the ball of radius $\alpha$ centered at the origin. We say that a convex body $K$ is in \emph{$\gamma$-canonical form}%
\footnote{This definition differs from our earlier papers but has the elegant feature that a convex body $K$ is in $\gamma$-canonical form if and only if its polar $K^*$ (see Section~\ref{s:prelim-polar}) is as well.}
if it is nested between $\sqrt{\gamma} B_0$ and $B_0 / \sqrt{\gamma}$ (see Figure~\ref{f:prelim}(a)). A body in $\gamma$-canonical form is $\gamma$-fat and, for constant $\gamma$, it is fat and has $\Theta(1)$ diameter.

\begin{figure*}[btp]
  \centerline{\includegraphics[scale=.65]{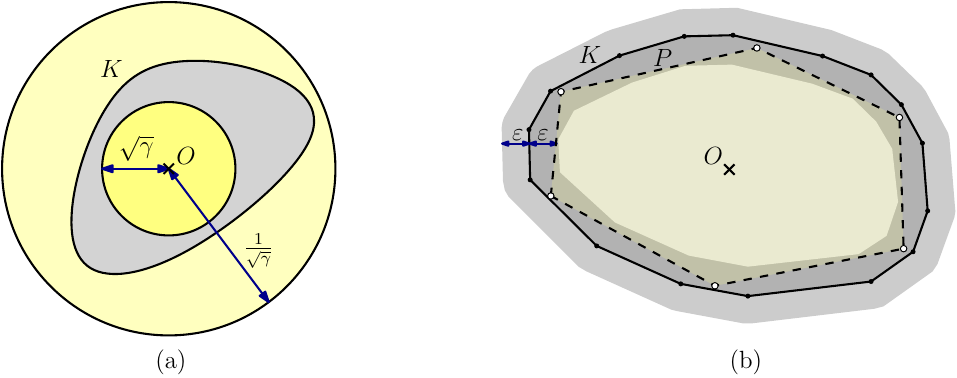}}
  \caption{\label{f:prelim}(a) A convex body $K$ in $\gamma$-canonical form and (b) an inner $\eps$-approximation $P$.}
\end{figure*}

We say that a convex body $P$ is an \emph{$\eps$-approximation} (see Figure~\ref{f:prelim}(b)) to another convex body $K$ if they are within Hausdorff error $\eps$ of each other. Further, we say that $P$ is an \emph{inner} (resp., \emph{outer}) approximation if $P \subseteq K$ (resp., $P \supseteq K$).
The next lemma shows that, up to constant factors, the problem of approximating a convex body can be reduced to the problem of approximating a convex body in canonical form. The proof is an easy consequence of John's Theorem~\cite{John}. (Also, see Lemma~{2.1} of \cite{AFM17a}.)

\begin{lemma} \label{lem:canonical}
Let $K \subset \RE^d$ be a convex body of unit diameter. There exists a non-singular affine transformation $T$ such that $T(K)$ is in $(1/d)$-canonical form and if $P$ is any $(2\eps/\sqrt{d})$-approximation to $T(K)$, then $T^{-1}(P)$ is an $\eps$-approximating polytope to $K$.
\end{lemma}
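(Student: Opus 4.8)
The plan is to obtain $T$ from John's theorem and then transfer the approximation back through $T^{-1}$, controlling the inevitable distortion by a single worst-case stretching factor; throughout, write $d_H(\cdot,\cdot)$ for the Hausdorff distance and recall that $K$ has unit diameter. To construct $T$, let $E$ be the maximum-volume ellipsoid inscribed in $K$, with center $c$; by John's theorem in the form valid for a general (not necessarily centrally symmetric) convex body, $E \subseteq K \subseteq d(E-c)+c$. Let $T$ be the affine map $x \mapsto M(x-c)$, where $M$ is the linear map carrying the centered ellipsoid $E-c$ onto the ball $\tfrac{1}{\sqrt d}B_0$ (any ellipsoid is a linear image of a ball, and $M$ is non-singular since $E-c$ is full dimensional, so $T$ is a non-singular affine transformation). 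Applying $T$ to the John sandwich gives
\[
\tfrac{1}{\sqrt d}B_0 \;=\; T(E) \;\subseteq\; T(K) \;\subseteq\; T\bigl(d(E-c)+c\bigr) \;=\; d\cdot\tfrac{1}{\sqrt d}B_0 \;=\; \sqrt d\, B_0,
\]
so $T(K)$ is nested between $\sqrt{1/d}\,B_0$ and $B_0/\sqrt{1/d}$, i.e.\ it is in $(1/d)$-canonical form. Moreover $T^{-1}(P)$ is again a polytope whenever $P$ is, since affine maps send polytopes to polytopes.

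For the approximation bound I would use the elementary fact that for a linear map $N$ and bounded sets $A,B$ one has $d_H(NA,NB)\le\|N\|\,d_H(A,B)$, where $\|N\|$ is the operator norm: each point of $A$ lies within $d_H(A,B)$ of some point of $B$, and $N$ expands distances by at most $\|N\|$, and symmetrically. Since translations preserve Hausdorff distance, applying this to the linear part $M^{-1}$ of $T^{-1}$ and to the pair $(P,\,T(K))$ gives
\[
d_H\bigl(T^{-1}(P),\,K\bigr) \;=\; d_H\bigl(T^{-1}(P),\,T^{-1}(T(K))\bigr) \;\le\; \|M^{-1}\|\cdot d_H\bigl(P,\,T(K)\bigr) \;\le\; \|M^{-1}\|\cdot\frac{2\eps}{\sqrt d}.
\]
Now $M^{-1}$ carries $\tfrac{1}{\sqrt d}B_0$ onto $E-c$, hence carries $B_0$ onto $\sqrt d\,(E-c)$, so $\|M^{-1}\|$ equals $\sqrt d$ times the longest semi-axis of $E$. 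Since $E\subseteq K$, that longest axis has length at most $\diam(K)=1$, so the longest semi-axis is at most $\tfrac12$ and $\|M^{-1}\|\le\sqrt d/2$. Plugging in, $d_H(T^{-1}(P),K)\le(\sqrt d/2)(2\eps/\sqrt d)=\eps$, as required.

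The one step to watch — the ``main obstacle,'' such as it is — is precisely this last accounting: Hausdorff distance is not affine-invariant, so the factor $2$ in ``$2\eps/\sqrt d$'' is not slack to be thrown away but is exactly what pays for the worst-case anisotropic stretching of $T^{-1}$, and the accounting only closes because the unit-diameter normalization of $K$ bounds the longest axis of the inscribed John ellipsoid (equivalently $\|M^{-1}\|$) by $1$ (resp.\ $\sqrt d/2$). Everything else is routine verification, which is why the lemma is fairly described as an easy consequence of John's theorem.
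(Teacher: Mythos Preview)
Your argument is correct and follows exactly the approach the paper indicates: the paper does not write out a proof but simply says the lemma ``is an easy consequence of John's Theorem'' and points to Lemma~2.1 of \cite{AFM17a}, and your write-up is precisely that consequence spelled out, including the crucial use of the ambient unit-diameter assumption on $K$ to bound $\|M^{-1}\|$.
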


In light of this result, we may assume that $K$ is presented in $\gamma$-canonical form, for any constant $\gamma$ (depending on dimension), and that $\eps$ has been appropriately scaled. (This scaling will only affect the constant factors hidden in our asymptotic bounds. The transformation also preserves directionally sensitive notions of approximation~\cite{AFM17c}.) Henceforth, we will focus on the problem of $\eps$-approximating a convex body $K$ in canonical form.

Finally, we define two useful notions of distance from the boundary of a convex body. Let $K$ be a convex body and let $x$ be a point. Define $\delta(x)$ to be the minimum distance from $x$ to any point on $\bd K$. We define a ray-based notion of distance of a point $x$ as well. Consider the intersection point $p$ of $\bd K$ and the ray emanating from $O$ and passing through $x$. Define $x$'s \emph{ray-distance}, denoted $\ray(x)$, to be $\|x p\|$ (see Figure~\ref{f:cap}(a)). We have the following lemma, which shows that for points inside a convex body in $\gamma$-canonical form for constant $\gamma$, these two distance measures are the same to within a constant factor. The lemma is a straightforward modification of Lemma~{4.2} in \cite{AFM17a} (by replacing the ratios of the radii of the inner and outer balls, which was $1/d$ there and $\gamma$ here).

\begin{lemma} \label{lem:raydist-delta}
Let $K$ be a convex body in $\gamma$-canonical form. For any point $x \in K$, $\delta(x) \le \ray(x) \le \delta(x) / \gamma$.
\end{lemma}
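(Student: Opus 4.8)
The plan is to prove the two inequalities separately, the left one being essentially immediate and the right one requiring a supporting-hyperplane argument that exploits both ball containments. For the left inequality $\delta(x) \le \ray(x)$: the point $p$ at which the ray from $O$ through $x$ meets $\bd K$ is itself a boundary point, so $\delta(x) = \min_{y \in \bd K}\|xy\| \le \|xp\| = \ray(x)$.

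For the right inequality, I would first dispose of the degenerate case $x \in \bd K$: since $\sqrt{\gamma} B_0 \subseteq K$, the origin lies in the interior of $K$, so a ray from $O$ meets $\bd K$ exactly once; hence $p = x$ and both $\delta(x)$ and $\ray(x)$ vanish. Otherwise $x$ lies strictly between $O$ and $p$, and I write $x = (1-t)\,O + t\,p$ with $t = \|Ox\|/\|Op\| \in [0,1)$, so that $1 - t = \|xp\|/\|Op\|$ (the value $t = 0$, i.e.\ $x = O$, is subsumed here). Let $q \in \bd K$ be a closest boundary point to $x$, so $\|xq\| = \delta(x)$, and let $f$ be the affine function whose zero set is a supporting hyperplane of $K$ at $q$, normalized to have unit-norm gradient and oriented so that $K \subseteq \{f \le 0\}$. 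The key step combines three facts: (i) since $q$ lies on this hyperplane, $\delta(x) = \|xq\| \ge \dist(x, \{f=0\}) = |f(x)|$; (ii) since $\sqrt{\gamma} B_0 \subseteq K \subseteq \{f \le 0\}$ and $f$ has unit gradient, $f(O) + \sqrt{\gamma} \le 0$, i.e.\ $f(O) \le -\sqrt{\gamma}$; and (iii) $f(p) \le 0$ because $p \in K$. By affineness of $f$ together with (ii) and (iii),
\[
f(x) = (1-t)\,f(O) + t\,f(p) \;\le\; -(1-t)\sqrt{\gamma},
\]
so by (i), $\delta(x) \ge (1-t)\sqrt{\gamma} = \sqrt{\gamma}\,\|xp\| / \|Op\|$. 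Finally $\|Op\| \le 1/\sqrt{\gamma}$ because $p \in K \subseteq B_0/\sqrt{\gamma}$, whence $\delta(x) \ge \gamma\,\|xp\| = \gamma\,\ray(x)$, which rearranges to $\ray(x) \le \delta(x)/\gamma$.

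I do not anticipate a genuine obstacle: the argument is short and relies only on the definition of $\gamma$-canonical form and the convexity of $K$. The only points needing care are fixing the sign convention for the signed distance function $f$ so that inequalities (ii) and (iii) point the right way, and checking that the boundary cases $x \in \bd K$ and $x = O$ are handled (the former separately, the latter as the endpoint $t = 0$ of the main computation).
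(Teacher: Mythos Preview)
Your argument is correct. The paper itself omits the proof of this lemma, deferring to earlier work, so there is no explicit proof to compare against; your supporting-hyperplane argument using both ball containments of the $\gamma$-canonical form is exactly the standard route and would be a natural way to fill in the details.
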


\subsection{Caps and Macbeath Regions} \label{s:prelim-macbeath}

Given a convex body $K$, a \emph{cap} $C$ is defined to be the nonempty intersection of the convex body $K$ with a halfspace (see Figure~\ref{f:cap}(b)). Let $h$ denote the hyperplane bounding this halfspace. We define the \emph{base} of $C$ to be $h \cap K$. The \emph{apex} of $C$ is any point in the cap such that the supporting hyperplane of $K$ at this point is parallel to $h$. The \emph{width} of $C$, denoted $\width(C)$, is the distance between $h$ and this supporting hyperplane. Given any unit vector $u$ and any sufficiently small width $w$, there is a unique cap of width $w$ whose base is orthogonal to $u$ and lies on the same side of the origin as indicated by $u$. We refer to this as the cap that is \emph{orthogonal} to $u$. Given any cap $C$ of width $w$ and a real $\rho \ge 0$, we define its \emph{$\rho$-expansion}, denoted $C^{\rho}$, to be the cap of $K$ of width $\rho w$ cut by a hyperplane parallel to the base of $C$. 
(Note that $C^{\rho} = K$, if $\rho w$ exceeds the width of $K$ along the defining direction.)

\begin{figure*}[tbp]
  \centerline{\includegraphics[scale=.65]{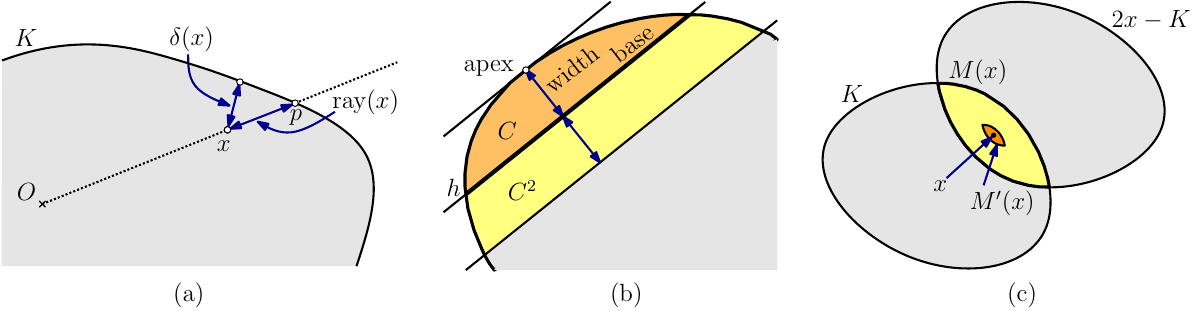}}
  \caption{\label{f:cap}(a) Notions of distance, (b) cap concepts, and (c) Macbeath regions.}
\end{figure*}

We begin with some simple geometric facts about caps.  An easy consequence of convexity is that, for $\rho \ge 1$, $C^{\rho}$ is a subset of the region obtained by scaling $C$ by a factor of $\rho$ about its apex. This implies the following lemma.

\begin{lemma} \label{lem:cap-exp}
Let $K \subset \RE^d$ be a convex body and $\rho \ge 1$. For any cap $C$ of $K$, $\vol(C^{\rho}) \le \rho^d \cdot \vol(C)$.
\end{lemma}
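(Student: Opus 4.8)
The plan is to turn the informal observation stated just before the lemma into a precise containment. Fix the cap $C$, let $a$ be its apex, let $h'$ be the supporting hyperplane of $K$ at $a$, and set $w = \width(C)$. Choose a linear functional $\ell$ on $\RE^d$ whose zero set is $h'$, normalized so that $\ell(a) = 0$ and $\ell \ge 0$ throughout $K$; such an $\ell$ exists precisely because $h'$ supports $K$ at $a$. With this normalization, the definitions in Section~\ref{s:prelim-macbeath} give $C = K \cap \{\ell \le w\}$ and $C^{\rho} = K \cap \{\ell \le \rho w\}$, the latter identity reading $C^{\rho} = K$ in the degenerate case where $\rho w$ exceeds the extent of $K$ along the defining direction. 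Let $\phi$ be the dilation of $\RE^d$ by factor $\rho$ centered at $a$, i.e.\ $\phi(x) = a + \rho(x - a)$.

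The key step is to show $C^{\rho} \subseteq \phi(C)$. Given $y \in C^{\rho}$, I would look at $z = \phi^{-1}(y) = a + (y-a)/\rho = (1 - 1/\rho)\,a + (1/\rho)\,y$. Since $\rho \ge 1$, the coefficient $1/\rho$ lies in $(0,1]$, so $z$ is a convex combination of the two points $a, y \in K$, and hence $z \in K$ by convexity. Because $\ell$ is affine and $\ell(a) = 0$, we get $\ell(z) = (1/\rho)\,\ell(y) \le (1/\rho)(\rho w) = w$, where $\ell(y) \le \rho w$ holds since $y \in C^{\rho}$ (and also in the degenerate case, since then $\ell(y) \le \max_K \ell \le \rho w$). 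Thus $z \in K \cap \{\ell \le w\} = C$, so $y = \phi(z) \in \phi(C)$.

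Finally, the volume bound follows at once: $\vol(C^{\rho}) \le \vol(\phi(C)) = \rho^d\,\vol(C)$, since a dilation by factor $\rho$ in $\RE^d$ has Jacobian determinant $\rho^d$. I do not expect a genuine obstacle in this argument; the only points needing a little care are the sign/orientation bookkeeping that pins down which halfspace cuts $C$ versus $C^{\rho}$ relative to the apex hyperplane, and checking that the inequality $\ell(y) \le \rho w$ survives the degenerate case $C^{\rho} = K$ — both of which are handled uniformly by the setup above.
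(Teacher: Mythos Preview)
Your argument is correct and is exactly the approach the paper indicates: the sentence preceding the lemma states that $C^{\rho}$ is contained in the $\rho$-scaling of $C$ about its apex, and you have simply written out the convex-combination verification of this containment. One small wording fix: the function $\ell$ you want is \emph{affine} (as you correctly use it later), not a linear functional, since its zero set $h'$ need not pass through the origin.
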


Another consequence of convexity is that containment of caps is preserved if the halfspaces defining both caps are consistently scaled about a point that is common to both caps. This is stated in the following lemma, which appeared as Lemma~{4.4} in~\cite{AFM17a}.

\begin{lemma} \label{lem:cap-containment}
Let $K$ be a convex body, and let $C_1 \subseteq C_2$ be two caps of $K$. Let $H_1$ and $H_2$ be their respective defining halfspaces, and let $H^{\lambda}_1$ and $H^{\lambda}_2$ be the respective halfspaces obtained by scaling by $\lambda \geq 1$ about any point $p \in C_1$. Then $K \cap H^{\lambda}_1 \,\subseteq\, K \cap H^{\lambda}_2$.
\end{lemma}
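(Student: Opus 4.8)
The plan is to reduce the set containment to a one-point check and to exploit the fact that ``scaling a halfspace about a point'' is an invertible affine map. Write $\phi$ for the affine map $x \mapsto p + \lambda(x - p)$, so that $H_1^{\lambda} = \phi(H_1)$ and $H_2^{\lambda} = \phi(H_2)$. Note that since $p \in C_1 \subseteq C_2$ lies in $K$ and in both $H_1$ and $H_2$, and $\lambda \ge 1$, each $H_i^{\lambda}$ contains $H_i$, which is consistent with the ``expansion'' intuition behind the lemma; we will only need $p \in K$ and the containment $C_1 \subseteq C_2$.

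First I would fix an arbitrary point $q \in K \cap H_1^{\lambda}$ and show $q \in H_2^{\lambda}$ (the membership $q \in K$ being part of the hypothesis). Consider its preimage under $\phi$, namely $q' = \phi^{-1}(q) = p + \tfrac{1}{\lambda}(q - p)$. Two observations drive the argument. First, because $\phi$ is a bijection carrying $H_1$ onto $H_1^{\lambda}$, the containment $q \in H_1^{\lambda}$ is equivalent to $q' \in H_1$. Second, since $\lambda \ge 1$, the point $q'$ is the convex combination $(1 - \tfrac{1}{\lambda})\,p + \tfrac{1}{\lambda}\,q$ of $p \in K$ and $q \in K$, so $q' \in K$ by convexity.

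Combining these gives $q' \in K \cap H_1 = C_1$. Invoking the hypothesis $C_1 \subseteq C_2 = K \cap H_2$ yields $q' \in H_2$, and applying $\phi$ we conclude $q = \phi(q') \in \phi(H_2) = H_2^{\lambda}$. Together with $q \in K$, this shows $q \in K \cap H_2^{\lambda}$, completing the proof. There is no substantial obstacle here; the only points requiring care are to pin down precisely what ``scaling a halfspace by $\lambda$ about $p$'' means (the image under $\phi$, hence $\phi^{-1}$ preserves halfspace membership), and to verify that the scaled-down point $q'$ genuinely lies in $K$ — which is exactly where convexity together with $p \in C_1 \subseteq K$ is used.
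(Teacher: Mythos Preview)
Your argument is correct: pulling back by the scaling map $\phi^{-1}$ and using that $q' = (1-\tfrac{1}{\lambda})p + \tfrac{1}{\lambda}q$ lies in $K$ by convexity is exactly the ``consequence of convexity'' the paper alludes to. The paper omits the proof entirely (deferring it as a straightforward adaptation from earlier work), and your write-up is the natural way to fill it in.
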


The next two lemmas apply to bodies in $\gamma$-canonical form. The first shows that for a point $p$ in a cap near the boundary, the angle between $O p$ and the normal to the base of the cap is bounded away from $\pi/2$. The second gives upper and lower bounds on the volume of a cap of width $\alpha$. These two lemmas are straightforward adaptations of Lemma 2.11 in \cite{AFM17b} and Lemma 4.3 in \cite{AFM17a}, respectively.

\begin{lemma} \label{lem:raydist}
Let $K \subset \RE^d$ be a convex body in $\gamma$-canonical form for constant $\gamma$, and let $\Delta_0$ be a sufficiently small constant (depending on $d$ and $\gamma$). Let $C$ be a cap of width at most $\Delta_0$ and let $p$ be any point inside $C$. Then the cosine of the angle between $O p$ and the normal to the base of $C$ is at least $\gamma/2$.
\end{lemma}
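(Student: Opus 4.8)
The plan is to work directly with the inscribed and circumscribed balls guaranteed by the $\gamma$-canonical form, reducing everything to a one-dimensional comparison along the cap's axis. Take $O$ as the origin and let $u$ be the unit normal to the base hyperplane $h$ of $C$, oriented so that $u$ points from the base toward the apex $a$. Then ``the normal to the base of $C$'' in the statement is $u$, and the quantity to be bounded is $\langle p, u\rangle / \|p\|$, so it suffices to lower-bound $\langle p, u\rangle$ and upper-bound $\|p\|$.

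First I would bound $\|p\|$ from above: since $K$ is in $\gamma$-canonical form, $K \subseteq B_0/\sqrt{\gamma}$, and as $p \in C \subseteq K$ we get $\|p\| \le 1/\sqrt{\gamma}$.

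The key step is the lower bound on $\langle p, u\rangle$. The supporting hyperplane of $K$ at the apex $a$ is parallel to $h$ with the same orientation, so it equals $\{x : \langle x, u\rangle = \langle a, u\rangle\}$ and $K$ lies in the halfspace $\{x : \langle x, u\rangle \le \langle a, u\rangle\}$. Because $\sqrt{\gamma}\, B_0 \subseteq K$, the point $\sqrt{\gamma}\, u$ lies in $K$, which forces $\langle a, u\rangle \ge \sqrt{\gamma}$. Since $p$ lies between $h$ and this supporting hyperplane, and these two parallel hyperplanes are at distance $\width(C) \le \Delta_0$ apart, we obtain $\langle p, u\rangle \ge \langle a, u\rangle - \width(C) \ge \sqrt{\gamma} - \Delta_0$. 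Combining the two bounds, the cosine in question is at least $(\sqrt{\gamma} - \Delta_0)\sqrt{\gamma} = \gamma - \sqrt{\gamma}\,\Delta_0$, which is at least $\gamma/2$ once $\Delta_0 \le \sqrt{\gamma}/2$; this choice also makes $\langle p, u\rangle > 0$, so the relevant angle is indeed acute.

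The main obstacle here is merely careful bookkeeping rather than any substantive difficulty: one must fix the orientation of $u$ consistently between the base hyperplane and the supporting hyperplane at the apex, check that for $\Delta_0$ below a suitable constant (depending on $d$ and $\gamma$) a cap of width at most $\Delta_0$ has a well-defined apex and the above picture is valid, and verify the elementary inequality $\langle a, u\rangle \ge \sqrt{\gamma}$. No deeper machinery (John's theorem, Macbeath regions, Lemma~\ref{lem:raydist-delta}) is needed; the statement follows purely from the containments $\sqrt{\gamma}\, B_0 \subseteq K \subseteq B_0/\sqrt{\gamma}$ and the width bound.
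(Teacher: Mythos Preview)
Your argument is correct. The paper does not actually prove this lemma; it is one of the preliminary results whose proof is omitted with a reference to the authors' earlier papers \cite{AFM17a,AFM17b,AFM17c}. Your direct computation---upper-bounding $\|p\|$ by $1/\sqrt{\gamma}$ via $K\subseteq B_0/\sqrt{\gamma}$, and lower-bounding $\langle p,u\rangle$ by $\sqrt{\gamma}-\Delta_0$ via $\sqrt{\gamma}\,B_0\subseteq K$ together with the width bound---is exactly the kind of elementary verification one expects here, and the final inequality $\gamma-\sqrt{\gamma}\,\Delta_0\ge\gamma/2$ for $\Delta_0\le\sqrt{\gamma}/2$ is clean. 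There is nothing missing; the only cosmetic point is that the orientation convention for $u$ (pointing from base toward apex, hence away from the origin for a small cap) should perhaps be stated once at the outset rather than justified at the end, but you already address this.
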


\begin{lemma} \label{lem:width-vol}
Let $K \subset \RE^d$ be a convex body in $\gamma$-canonical form for constant $\gamma$, and let $0 < \alpha < 1$ be a positive real. Then there exist constants $c$ and $c'$ (depending on $d$ and $\gamma$) such that for any cap $C$ of width $\alpha$, $c \kern+1pt \alpha^d \leq \vol(C) \leq c' \alpha$.
\end{lemma}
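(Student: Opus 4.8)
The plan is to sandwich $C$ between the two concentric balls provided by canonical form, $\sqrt{\gamma}\, B_0 \subseteq K \subseteq B_0/\sqrt{\gamma}$, and to read off each bound from an elementary volume comparison. Fix coordinates so that $C = K \cap H$ with $H = \{x : \langle u, x\rangle \ge t\}$ for a unit vector $u$; since $\width(C) = \alpha$, the supporting hyperplane of $K$ parallel to the base on the apex side is $\{x : \langle u, x\rangle = M\}$, where $M := \max_{x \in K} \langle u, x\rangle = t + \alpha$. From $\sqrt{\gamma}\, B_0 \subseteq K \subseteq B_0/\sqrt{\gamma}$ we get $\sqrt{\gamma} \le M \le 1/\sqrt{\gamma}$. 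For the upper bound, note that $C$ lies both in the slab $\{t \le \langle u, x\rangle \le t + \alpha\}$ of width $\alpha$ and in the ball $B_0/\sqrt{\gamma}$, whose cross-sections orthogonal to $u$ are $(d-1)$-dimensional balls of radius at most $1/\sqrt{\gamma}$. Integrating the cross-sectional area over the slab (Fubini) yields $\vol(C) \le \alpha \cdot \kappa_{d-1}\, \gamma^{-(d-1)/2}$, where $\kappa_{d-1}$ is the volume of the unit $(d-1)$-ball; this is the claimed bound with $c' = \kappa_{d-1}\,\gamma^{-(d-1)/2}$.

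The substantive part is the lower bound, which I would obtain by inscribing a cone in $C$. Let $a$ be the apex, so $\langle u, a\rangle = M$, and let $D = (\sqrt{\gamma}\, B_0) \cap \{x : \langle u, x\rangle = 0\}$ be the equatorial $(d-1)$-ball of the inner ball orthogonal to $u$; it has radius $\sqrt{\gamma}$ and lies in $K$. By convexity, the cone $\Gamma = \conv(\{a\} \cup D) \subseteq K$; its base $D$ lies in the hyperplane $\{\langle u, x\rangle = 0\}$ and its apex is at $u$-coordinate $M$, so $\vol(\Gamma) = \tfrac1d \area(D) \cdot M = \tfrac1d \kappa_{d-1}\, \gamma^{(d-1)/2} M$, and moreover $\Gamma \subseteq \{\langle u, x\rangle \ge 0\}$. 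When $t \ge 0$ — the case that matters for small $\alpha$ — the region $\Gamma \cap H$ equals the image of $\Gamma$ under the homothety with center $a$ and ratio $\lambda = (M - t)/M = \alpha/M \in (0,1]$, since that homothety fixes $a$ and carries the base plane $\{\langle u, x\rangle = 0\}$ onto $\{\langle u, x\rangle = t\}$. Hence $\vol(C) \ge \vol(\Gamma \cap H) = \lambda^d \vol(\Gamma) = \tfrac1d \kappa_{d-1}\, \gamma^{(d-1)/2}\, \alpha^d / M^{d-1} \ge \tfrac1d \kappa_{d-1}\, \gamma^{d-1}\, \alpha^d$, using $M \le 1/\sqrt{\gamma}$.

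In the remaining case $t < 0$ we have $H \supseteq \{\langle u, x\rangle \ge 0\} \supseteq \Gamma$, so $\vol(C) \ge \vol(\Gamma) = \tfrac1d \kappa_{d-1}\, \gamma^{(d-1)/2} M \ge \tfrac1d \kappa_{d-1}\, \gamma^{d/2}$, which (for $d \ge 2$, using $\alpha < 1$ and $\gamma \le 1$) is at least $\tfrac1d \kappa_{d-1}\, \gamma^{d-1}\, \alpha^d$. Thus $c = \kappa_{d-1}\, \gamma^{d-1}/d$ works in both cases. The only mildly delicate point is this case split on the sign of $t$, i.e., whether the base of $C$ separates $K$'s center from the apex: when $t \ge 0$ the inscribed cone genuinely gets truncated and its surviving volume scales as $\alpha^d$, whereas when $t < 0$ the whole cone survives, and since $t < 0$ forces $\alpha = M - t > M \ge \sqrt{\gamma}$ the $\alpha^d$ bound is then automatic. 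No further obstacles are anticipated; the argument uses nothing beyond convexity, the canonical-form sandwich, and the volume formula for a cone.
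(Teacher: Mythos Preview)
Your argument is correct. The paper itself omits the proof of this lemma, deferring to earlier work (see the remark at the start of Section~\ref{s:prelim}), so there is no in-paper proof to compare against; your direct approach---bounding the cap above by a slab in the outer ball and below by a truncated cone over the equatorial disk of the inner ball---is exactly the sort of elementary convexity argument one expects, and all steps check out, including the case split on the sign of $t$.
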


Given a point $x \in K$ and a real parameter $\lambda \ge 0$, the \emph{Macbeath region} $M^{\lambda}(x)$ (also called an \emph{M-region}) is defined as:
\[
  M^{\lambda}(x) ~ = ~ x + \lambda ((K-x) \cap (x-K)).
\]
It is easy to see that $M^{1}(x)$ is the intersection of $K$ and the reflection of $K$ around $x$ (see Figure~\ref{f:cap}(c)), and so $M^{1}(x)$ is centrally symmetric about $x$. $M^{\lambda}(x)$ is a scaled copy of $M^{1}(x)$ by the factor $\lambda$ about $x$. We refer to $x$ as the \emph{center} of $M^{\lambda}(x)$ and to $\lambda$ as its \emph{scaling factor}. As a convenience, we define $M(x) = M^1(x)$ and $M'(x) = M^{1/5}(x)$. To emphasize the difference from unit-scale Macbeath regions, we often refer to the latter as \emph{shrunken} Macbeath regions.

Macbeath regions have found numerous uses in the theory of convex sets and the geometry of numbers (see B\'{a}r\'{a}ny~\cite{Bar00} for an excellent survey). They have also been applied to a growing number of results in the field of computational geometry, particularly to construct lower bounds for range searching~\cite{BCP93, AMX12, AMM09b} and to bound the complexity of an $\eps$-approximating polytope~\cite{AFM12b,AFM17a}.

Given any point $x \in K$, let $C(x)$ denote the cap of minimum volume that contains $x$. There generally may be multiple minimum-volume caps containing a given point, and if so, $C(x)$ denotes any such cap. Clearly, the base of $C(x)$ must pass through $x$. In fact, a standard variational argument implies $x$ is the centroid of the base (for otherwise, we could decrease the cap volume by an infinitesimal rotation of the base about $x$~\cite{ELR70}). Indeed, our use of minimum-volume caps is primarily due to the fact that they are well centered around their defining point, and not volume properties.

We now present lemmas that encapsulate key properties of Macbeath regions, which will be useful in our analysis. The first lemma shows that if two shrunken Macbeath regions have a nonempty intersection, then a constant factor expansion of one contains the other~\cite{BCP93,ELR70}. Our version uses different parameters and is proved in \cite{AFM17b} (Lemma 2.4).

\begin{lemma}\label{lem:mac-mac}
Let $K$ be a convex body, and let $\lambda \le 1/5$ be any real. If $x,y \in K$ such that $M^{\lambda}(x) \cap M^{\lambda}(y) \neq \emptyset$, then $M^{\lambda}(y) \subseteq M^{4\lambda}(x)$. \end{lemma}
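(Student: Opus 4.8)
The plan is to argue directly at the level of the defining vectors of the Macbeath regions, reducing the desired containment to the assertion that two explicitly described points lie in $K$.

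Fix a point $z \in M^{\lambda}(x) \cap M^{\lambda}(y)$. Membership $z \in M^{\lambda}(x)$ means $z = x + \lambda v$ for a vector $v$ with $x + v \in K$ and $x - v \in K$; similarly $z = y + \lambda u$ with $y + u \in K$ and $y - u \in K$. Equating the two expressions for $z$ yields the key identity $y = x + \lambda(v - u)$. Now take an arbitrary $w \in M^{\lambda}(y)$, so $w = y + \lambda t$ with $y + t \in K$ and $y - t \in K$. To prove $w \in M^{4\lambda}(x)$ it suffices to show that the vector $p = (w - x)/(4\lambda)$ satisfies $x + p \in K$ and $x - p \in K$. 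Using the identity for $y$, one computes $p = \tfrac14(v - u + t)$, so the lemma reduces to showing that the two points
\[
  x + \tfrac14(v - u + t)
  \qquad\text{and}\qquad
  x - \tfrac14(v - u + t)
\]
both belong to $K$.

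Each of these two points will be exhibited as a convex combination of the six points $x + v$, $x - v$, $y + u$, $y - u$, $y + t$, $y - t$, all of which lie in $K$. Substituting $x = y - \lambda v + \lambda u$ turns this into a small linear algebra problem: write a candidate combination with unknown nonnegative coefficients summing to $1$, and collect the coefficients of $y$, $t$, $u$, and $v$. The coefficient of $y$ equals the sum of all six weights and is automatically $1$, so matching the coefficients of $t$, $u$, and $v$ gives three linear constraints in six unknowns. Solving this underdetermined system — for instance by zeroing the weight of $x - v$ in the first case and of $x + v$ in the second — produces explicit coefficients, each a simple rational function of $\lambda$.

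The one substantive point, and the place where the hypothesis $\lambda \le 1/5$ is genuinely consumed, is checking that all the coefficients obtained this way are nonnegative: one of them works out to a quantity of the form $\tfrac38 - \tfrac{\lambda + 1/4}{1+\lambda}$, which is nonnegative exactly when $\lambda \le 1/5$, and the other weights are nonnegative precisely because the expansion factor is $4$ (using also $\lambda \le 1/4$ to keep a weight proportional to $\tfrac14 - \lambda$ nonnegative). I expect this nonnegativity bookkeeping to be the main obstacle; everything else is bookkeeping with the identity $y = x + \lambda(v-u)$. As a sanity check on the constant, the alternative route of first showing $y \in M^{c\lambda}(x)$ and then composing forces $c = 5$, so the direct computation above is what is needed to obtain the stated factor $4$. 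Once both displayed points are shown to lie in $K$, we conclude $w \in M^{4\lambda}(x)$; since $w \in M^{\lambda}(y)$ was arbitrary, $M^{\lambda}(y) \subseteq M^{4\lambda}(x)$.
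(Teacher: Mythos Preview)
The paper does not actually prove this lemma; it merely states it and cites \cite{BCP93,ELR70}, noting that omitted proofs in Section~\ref{s:prelim} can be found in the cited references. So there is no in-paper proof to compare against.

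Your argument is correct and is essentially the classical proof from those references. The convex-combination system you set up is solvable with nonnegative weights exactly as you outline: for $x + \tfrac14(v-u+t)$ one can take (with $b=f=0$) $a=\frac{1/4-\lambda}{1-\lambda}$, $c=\frac{1+5\lambda}{8(1-\lambda)}$, $d=\tfrac38$, $e=\tfrac14$, requiring only $\lambda\le 1/4$; for $x - \tfrac14(v-u+t)$ one can take (with $a=e=0$) $b=\frac{1/4+\lambda}{1+\lambda}$, $c=\tfrac38$, $d=\frac{1-5\lambda}{8(1+\lambda)}$, $f=\tfrac14$, and $d\ge 0$ is precisely $\lambda\le 1/5$, matching your claimed critical coefficient $\tfrac38-\frac{\lambda+1/4}{1+\lambda}$. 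Nothing is missing; the bookkeeping goes through exactly as you predicted.
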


The next two lemmas are useful in situations when we know that a Macbeath region partially overlaps a cap of $K$, and allow us to conclude that a constant factor expansion of the cap will fully contain the Macbeath region. The first applies to shrunken Macbeath regions and the second to Macbeath regions with scaling factor one. The proof of the first appears in \cite{AFM17a} (Lemma 2.5), and the second is an immediate consequence of the definition of Macbeath regions.

\begin{lemma} \label{lem:mac-cap2}
Let $K$ be a convex body. Let $C$ be a cap of $K$ and $x$ be a point in $K$ such that $C \cap M'(x) \neq \emptyset$. Then $M'(x) \subseteq C^2$.
\end{lemma}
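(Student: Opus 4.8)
The plan is to exploit the fact that a shrunken Macbeath region $M'(x) = M^{1/5}(x)$ is, by construction, very ``well-centered'' inside $K$: every point of $M'(x)$ lies at distance at least $\tfrac{4}{5}\,\text{(something)}$ from the boundary in every chord direction through $x$. Concretely, fix the halfspace $H$ defining the cap $C = K \cap H$, and let $h$ be its bounding hyperplane with inward normal $u$. Since $C \cap M'(x) \neq \emptyset$, pick a point $z \in C \cap M'(x)$. I want to show that the whole of $M'(x)$ lies in $C^2 = K \cap H'$, where $H'$ is the halfspace bounded by the hyperplane parallel to $h$ at distance $2\,\width(C)$ from the supporting hyperplane of $K$ in direction $-u$. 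Since $M'(x) \subseteq K$ always, it suffices to show $M'(x) \subseteq H'$, i.e. that no point of $M'(x)$ pokes out past $h'$ on the far side.

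The key step is a one-dimensional argument along the normal direction $u$. Let $\ell(y) = \langle y, u\rangle$ be the linear functional measuring position along $u$; the supporting hyperplane of $K$ on the $C$-side is at $\ell = a$, the cutting hyperplane $h$ is at $\ell = a - w$ where $w = \width(C)$, and $h'$ (defining $C^2$) is at $\ell = a - 2w$. We must show $\ell(y) \ge a - 2w$ for all $y \in M'(x)$. Consider the center $x$ and the arbitrary point $y \in M'(x)$. By definition of $M^{1/5}(x)$, the point $y' = x + 5(y - x)$ lies in $K$ (this is the defining property: $M^{1/5}(x) = x + \tfrac15((K-x)\cap(x-K))$, so $5(M'(x)-x) \subseteq K - x$). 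Likewise the reflected point $y'' = x - 5(y-x) = 2x - y'$ also lies in $K$. Now I split into cases according to where $x$ itself sits relative to the slab. If $\ell(x) \ge a - 2w$ is already too weak; instead the clean route is: since $z \in C$ we have $\ell(z) \ge a - w$, and since $z \in M'(x)$ the same stretching argument gives $z' = x + 5(z-x) \in K$, hence $\ell(z') \le a$ (it is in $K$), which forces $\ell(x) \ge \tfrac{4}{5}\ell(z) + \tfrac15 a \cdot 0$-type bound; more carefully, $\ell(x) = \tfrac15 \ell(z') + \tfrac45 \ell(z)\cdot$... — let me just say the bound $\ell(x) \ge a - \tfrac{5}{6}w$ or similar drops out. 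Then for any $y \in M'(x)$, the point $y'' = 2x - y'$ lies in $K$ so $\ell(y'') \le a$; writing $y = \tfrac45 x + \tfrac15 y'$ and using $\ell(y'') = 2\ell(x) - \ell(y')\le a$ together with the lower bound on $\ell(x)$ yields $\ell(y') \ge a - $ O$(w)$, and hence $\ell(y) = \tfrac45\ell(x) + \tfrac15\ell(y') \ge a - 2w$, as desired. The constants should close comfortably; the factor $5$ in the shrinking (giving slack $\tfrac45$ on each side) is precisely what makes the expansion factor $2$ work.

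The main obstacle I anticipate is bookkeeping the constants so that the expansion factor comes out to exactly $2$ rather than some larger constant — one has to use the overlap point $z$ to anchor $x$ near the slab before stretching, and the two uses of the Macbeath stretching property (once to locate $x$, once to bound an arbitrary $y$) have to be combined without losing a factor. An alternative, slicker formulation avoiding explicit coordinates: since $z \in M'(x)$, Lemma-style reasoning gives $M'(x) \subseteq M^{4/5}(z) \subseteq z + \tfrac45(K - z)$ via central symmetry about $z$... but this does not obviously land inside $C^2$, so I expect the direct linear-functional computation above to be the cleanest path, with the only real care being the case analysis on whether $x$ lies inside or outside $C$.
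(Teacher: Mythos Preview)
The paper does not actually prove this lemma; it is stated without proof in Section~\ref{s:prelim-macbeath}, with the remark that omitted proofs can be found in the authors' earlier papers. So there is no ``paper's own proof'' to compare against here.

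Your approach is the standard one and is correct, though the write-up is rough and has some garbled passages. Carrying the computation through cleanly: with $\ell$ the linear functional in the normal direction, supporting hyperplane at $\ell=a$, base of $C$ at $\ell=a-w$, and $z\in C\cap M'(x)$, write $z=\tfrac{4}{5}x+\tfrac{1}{5}p$ with $p\in K$; then $\ell(p)\le a$ and $\ell(z)\ge a-w$ give $\ell(x)\ge a-\tfrac{5}{4}w$. For arbitrary $y\in M'(x)$, write $y=\tfrac{4}{5}x+\tfrac{1}{5}q$ with $q,\,2x-q\in K$; then $\ell(2x-q)\le a$ gives $\ell(q)\ge 2\ell(x)-a$, whence
\[
\ell(y) ~=~ \tfrac{4}{5}\ell(x)+\tfrac{1}{5}\ell(q) ~\ge~ \tfrac{6}{5}\ell(x)-\tfrac{1}{5}a ~\ge~ \tfrac{6}{5}\bigl(a-\tfrac{5}{4}w\bigr)-\tfrac{1}{5}a ~=~ a-\tfrac{3}{2}w,
\]
so in fact $M'(x)\subseteq C^{3/2}\subseteq C^2$. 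No case split on whether $x\in C$ is needed; the two applications of the Macbeath stretching property (one via $p\in K$ to locate $x$, one via $2x-q\in K$ to bound $y$) combine directly. Your ``alternative, slicker'' idea of passing through $M^{4/5}(z)$ does not obviously work and is best dropped.
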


\begin{lemma} \label{lem:mac-cap2-var}
Let $K$ be a convex body. If $x$ is a point in a cap $C$ of $K$, then $M(x) \subseteq C^2$.
\end{lemma}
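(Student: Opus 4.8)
The plan is to mimic the structure of the proof of Lemma~\ref{lem:mac-cap2}, but exploiting the fact that the scaling factor is $1$ rather than $1/5$, which is exactly what lets us drop the factor from $\lambda$ to $1/5$ in the hypothesis and still land in $C^2$. Let $h$ be the hyperplane bounding the halfspace defining $C$, and let $h^+$ denote the closed halfspace containing $C$. Since $x \in C$, we have $x \in h^+$. I want to show $M(x) \subseteq C^2$, i.e., every point of $M(x)$ lies in $K$ (which is automatic, since $M(x) \subseteq K$ by definition) and lies in the halfspace defining $C^2$. So the entire content is a one-dimensional statement about how far $M(x)$ can poke past $h$ in the direction $-u$, where $u$ is the outward normal to $h$ (pointing out of $h^+$).

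First I would set up coordinates so that $h = \{z : \ang{z,u} = 0\}$ and $h^+ = \{z : \ang{z,u} \le 0\}$, so $C = K \cap h^+$ and $\ang{x,u} \le 0$. Let $t = -\ang{x,u} \ge 0$ be the "depth" of $x$ inside $C$ measured along $-u$. The key step: for any $y \in M(x) = K \cap (2x - K)$, I must bound $\ang{y,u}$ from above. Because $y \in 2x - K$, the reflected point $2x - y$ lies in $K$; and because $y \in K$ and is being pushed past $h$, I want to use the reflected point, which lies deep inside $C$, to control it. Concretely, $\ang{2x - y, u} = 2\ang{x,u} - \ang{y,u} = -2t - \ang{y,u}$. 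Now here is where the width of $C$ enters: the apex of $C$ is the point of $K$ maximizing $\ang{\cdot, -u}$, and $C$ has width $w := \width(C)$, so for every point $z \in K$ we have $\ang{z,u} \ge -w$ — wait, more carefully, the support function gives $\ang{z, -u} \le \max_{K}\ang{\cdot,-u}$, and the plane $h$ sits at distance $w$ from the supporting plane at the apex, so $C = K \cap \{\ang{z,-u} \ge 0\}$ sits within the slab $0 \le \ang{z,-u} \le w$... but $K$ itself may extend below $h$ on the other side. Let me instead only use points of $K$: since $2x - y \in K$, and the whole of $K$ lies in $\{\ang{z,-u} \le m\}$ where $m = \max_K \ang{\cdot,-u}$, and $h$ is at $\ang{z,-u} = m - w$ on the apex side... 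The clean way: $C^2$ is cut by the plane parallel to $h$ at distance $w$ from the supporting plane through the apex on the far side, i.e. $C^2 = K \cap \{\ang{z,-u} \ge -w\} = K \cap \{\ang{z,u} \le w\}$. So I need $\ang{y,u} \le w$ for all $y \in M(x)$. Since $2x - y \in K \subseteq \{\ang{\cdot,-u} \le m\}$ wait I should just use $2x-y \in K$ and $K$'s extent below $h$: actually $2x-y \in K$ and $\ang{x,u}\le 0$ forces... Hmm. The honest bound: $\ang{y,u} + \ang{2x-y,u} = 2\ang{x,u} \le 0$, so $\ang{y,u} \le -\ang{2x-y,u}$. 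And $2x-y \in K$, so $\ang{2x-y,u} \ge \min_{z\in K}\ang{z,u} = -\,(\text{extent of }K\text{ past }h) \ge -w$ precisely because $h$ cuts a cap of width $w$ and $K$ extends at most $w$ on the apex side of $h$... no — $K$ extends at most the full width of $K$ on the apex side, which is $w$ only if $h$ is a supporting plane on the other side.

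The resolution — and what I expect to be the main (though minor) obstacle — is being careful that $M(x) = K \cap (2x-K)$ and that $x \in C$ means $x$ is on the apex side of $h$, so the reflected body $2x - K$ has all its bulk on the apex side too; thus any $y \in M(x)$ that sticks out past $h$ by an amount $s$ corresponds to a point $2x - y \in K$ on the apex side at depth $\ge$ (roughly) $s + 2t$, and since $K \cap h^+$ has width exactly $w$, we get $s + 2t \le w$, hence $s \le w$, i.e. $y \in C^2$. I would phrase this as: pick the apex $a$ of $C$; the supporting hyperplane at $a$ is parallel to $h$ at distance $w$; since $2x - y \in K$, it lies on the inner side of this supporting hyperplane, i.e. $\ang{2x-y, u} \ge -w$; combined with $\ang{x,u}\le 0$ this gives $\ang{y,u} = 2\ang{x,u} - \ang{2x-y,u} \le 0 - (-w) = w$, so $y$ lies in the halfspace defining $C^2$, and since $y \in K$ we conclude $y \in C^2$. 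This completes the proof; the only thing to double-check is the sign conventions and that $C^2 = K \cap \{\ang{z,u} \le w\}$ matches the definition of $\rho$-expansion given earlier with $\rho = 2$.
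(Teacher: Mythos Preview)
Your proposal is correct. The paper omits the proof of this lemma (citing prior work for standard Macbeath-region facts), so there is no paper argument to compare against; the proof you eventually settle on is the standard one. After the false starts with sign conventions, the final paragraph contains the complete argument: with $h = \{z:\ang{z,u}=0\}$, $C = K\cap\{\ang{z,u}\le 0\}$, and width $w$, the supporting hyperplane at the apex is $\{\ang{z,u}=-w\}$, so every $z\in K$ satisfies $\ang{z,u}\ge -w$; for $y\in M(x)=K\cap(2x-K)$ the reflected point $2x-y$ lies in $K$, whence $\ang{y,u}=2\ang{x,u}-\ang{2x-y,u}\le 0-(-w)=w$, and since $C^2 = K\cap\{\ang{z,u}\le w\}$ by the definition of the $2$-expansion, we get $y\in C^2$. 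The only suggestion is to strip out the exploratory dead ends in a final write-up.
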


The next lemma shows that any sufficiently small cap is contained within a suitable constant factor expansion of the Macbeath region centered at the centroid of its base. In particular, it implies that the minimum-volume cap associated with a point is contained within a suitable constant factor expansion of the Macbeath region centered at that point. The proof is a straightforward adaptation of proofs in~\cite{BCP93} and~\cite{ELR70}.

\begin{lemma} \label{lem:mac2-var}
Let $K \subset \RE^d$ be a convex body in $\gamma$-canonical form for constant $\gamma$, and let $\Delta_0$ be a sufficiently small constant (depending on $d$ and $\gamma$). Let $C$ be a cap of $K$ of width at most $\Delta_0$ and let $x$ denote the centroid of the base of this cap. Then $C \subseteq M^{3d}(x)$. 
\end{lemma}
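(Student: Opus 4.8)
The plan is to verify directly that every point $y \in C$ lies in $M^{3d}(x)$. Unwinding the definition of the Macbeath region, $y \in M^{3d}(x)$ is equivalent to the two memberships $x + \tfrac{1}{3d}(y-x) \in K$ and $x - \tfrac{1}{3d}(y-x) \in K$. The first is immediate: it is the convex combination $\bigl(1-\tfrac{1}{3d}\bigr)\,x + \tfrac{1}{3d}\,y$ of the two points $x,y \in K$, hence it lies in $K$. So all the content is in the second membership: starting at $x$ and moving a $\tfrac{1}{3d}$ fraction past it in the direction from $y$ towards $x$, we must stay inside $K$.

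Two facts drive this. First, since $x$ is the centroid of the $(d-1)$-dimensional base $B = h \cap K$, the classical bound on the symmetry of a convex body about its centroid (here in dimension $d-1$) gives $x - \tfrac{1}{d-1}(b-x) \in B \subseteq K$ for every $b \in B$; equivalently, $x$ splits every chord of $B$ through it in ratio at most $(d-1):1$. Second, because $K$ is in $\gamma$-canonical form and the cap has width $w \le \Delta_0 < \sqrt{\gamma}$, the inscribed-ball center $O$ lies strictly on the far side of the base hyperplane $h$: since $\sqrt{\gamma}B_0 \subseteq K$ forces the highest point of the inscribed ball (in the direction of the base normal) to lie at height at most $w$ above $h$, the center $O$ lies at height at most $w - \sqrt{\gamma} < 0$, i.e.\ at distance at least $\sqrt{\gamma}-\Delta_0$ on the far side of $h$. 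Thus $O$ is a deep interior point of $K$ lying ``below'' the base. I would route through $O$ rather than simply projecting $y$ orthogonally onto $h$ for a genuine reason: for a thin cap of a fat body the orthogonal projection of a point of $C$ onto $h$ need not land in $B$ (only within $O(\Delta_0)$ of it). Instead let $z^{*}$ be the point where the segment $yO$ meets $h$; since $y,O \in K$ we get $z^{*} \in h \cap K = B$, and since $O$ is at distance $\ge \sqrt{\gamma}-\Delta_0$ below $h$ while $y$ sits at ``height'' at most $w$ above it, we can write $z^{*} = (1-\sigma)\,y + \sigma\,O$ with $\sigma \le 2\Delta_0/\sqrt{\gamma}$ small.

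To finish, reflect $z^{*}$ through $x$ to obtain $z^{**} = x - \tfrac{1}{d-1}(z^{*}-x) \in B \subseteq K$, solve for $y$ in terms of $z^{*}$ and $O$ (from $z^{*} = (1-\sigma)y+\sigma O$) and for $z^{*}$ in terms of $z^{**}$ and $x$, and substitute. A short calculation then expresses $x - \tfrac{1}{3d}(y-x)$ as the explicit combination $\mu_x\,x + \mu_{z^{**}}\,z^{**} + \mu_O\,O$ with $\mu_{z^{**}} = \tfrac{d-1}{3d(1-\sigma)}$, $\mu_O = \tfrac{\sigma}{3d(1-\sigma)}$, and $\mu_x = 1 - \mu_{z^{**}} - \mu_O$; all three are nonnegative once $1-\sigma \ge \tfrac{d}{3d+1}$, which holds as soon as $\Delta_0$ is chosen small enough relative to $\sqrt{\gamma}$. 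Since the three points lie in $K$, so does their convex combination, establishing the second membership and hence $C \subseteq M^{3d}(x)$. The only real obstacle is the projection subtlety above: the naive orthogonal projection onto $h$ fails, and one must pass through a deep interior point so that the small horizontal error it introduces is absorbed by the gap between the tight constant $d-1$ for the base centroid and the stated constant $3d$; everything else is routine bookkeeping with convexity and the definitions of width and canonical form.
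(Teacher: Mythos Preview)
Your argument is correct. The paper itself does not supply a proof of this lemma; it is listed among the preliminary facts whose proofs ``can be found in these papers or are straightforward adaptations of the proofs given therein,'' with citations to \cite{BCP93,ELR70}. So there is no paper proof to compare against directly.

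That said, your route is essentially the classical one adapted to the $\gamma$-canonical setting. The two ingredients are exactly right: the $(d-1)$-dimensional centroid symmetry $x - \tfrac{1}{d-1}(b-x) \in B$ handles the tangential component, and the inscribed ball $\sqrt{\gamma}B_0 \subseteq K$ supplies a deep interior point below the base hyperplane to absorb the normal component. Your observation that the orthogonal projection of $y \in C$ onto $h$ need not land in $B$ is genuine (a tilted ellipse already exhibits this in the plane), and routing through $O$ via $z^{*} \in [y,O] \cap h \subseteq B$ is the clean fix. The explicit convex combination you write down checks out: with $\mu_{z^{**}} = \tfrac{d-1}{3d(1-\sigma)}$, $\mu_O = \tfrac{\sigma}{3d(1-\sigma)}$, the remaining coefficient $\mu_x = 1 - \tfrac{d-1+\sigma}{3d(1-\sigma)}$ is nonnegative exactly when $\sigma \le \tfrac{2d+1}{3d+1}$, which is guaranteed once $\Delta_0$ is small relative to $\sqrt{\gamma}$. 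This makes transparent why the constant $3d$ (rather than the tight $d-1$ coming from the base) is needed: the slack absorbs the $O(\Delta_0/\sqrt{\gamma})$ perturbation introduced by passing through $O$.
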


The following four lemmas are easy consequences of standard properties of Macbeath regions. The first gives lower and upper bounds on the width of the minimum-volume cap. Recalling that $x$ is the centroid of the base of $C(x)$, the second generalizes Lemma~\ref{lem:mac2-var} to expansions of minimum-volume caps. The third states that if the shrunken Macbeath regions associated with two caps overlap, then a constant factor expansion of any one cap is contained in a suitable constant factor expansion of the other. The last states that all the points in a shrunken Macbeath region have similar distances from the boundary of $K$. These lemmas are straightforward adaptations of Lemma~2.10 in \cite{AFM17b}, Lemmas~2.8 and~2.9 in \cite{AFM17a}, and Lemma~2.14 in \cite{AFM17b}, respectively.

\begin{lemma} \label{lem:width-delta}
Let $K, C$ and $x$ be as defined in Lemma~\ref{lem:mac2-var}. Then $\delta(x) \le \width(C) \le c \kern+1pt \delta(x)$ for a suitable constant $c$ (depending on $d$ and $\gamma$).
\end{lemma}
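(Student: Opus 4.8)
I will establish the two inequalities separately. Write $h$ for the hyperplane bounding the halfspace that defines $C$, let $u$ be the unit normal to $h$ pointing toward the apex, and let $\beta_0$ be the common value of $y\cdot u$ for $y\in h$; since $x$ is the centroid of the base $h\cap K$, it lies on $h$. The lower bound $\delta(x)\le\width(C)$ is immediate: the ball of radius $\delta(x)$ about $x$ lies in $K$, so $x+\delta(x)\,u\in K$, and as this point is on the apex side of $h$ it lies in $C$; since $C$ is sandwiched between $h$ and the parallel supporting hyperplane of $K$ at distance $\width(C)$ from $h$, the distance $\delta(x)$ from $x+\delta(x)\,u$ to $h$ is at most $\width(C)$.

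For the upper bound, set $w=\width(C)$. The plan is to exhibit a Euclidean ball centered at $x$, of radius $\Omega(\gamma w)$ (with the hidden constant depending only on $d$), that is contained in $K$; this gives $\delta(x)=\Omega(\gamma w)$ and hence $\width(C)\le c\,\delta(x)$ with $c$ depending only on $d$ and $\gamma$. The ball will be obtained as a bipyramid over a $(d-1)$-ball lying in the base, and there are two things to control: the extent of $K$ near $x$ along $\pm u$, and the extent of the base near $x$. For the $\pm u$ directions I would use Lemma~\ref{lem:mac2-var}: it gives $C\subseteq M^{3d}(x)$, so the projection of $M^{3d}(x)$ onto the $u$-axis, like that of $C$, has length at least $w$; since $M^{3d}(x)$ is centrally symmetric about $x$, this projected length equals $6d$ times the smaller of the distances from $x$ to $\bd K$ along $+u$ and $-u$, so each of those distances is at least $w/(6d)$, and the points $x\pm\tfrac{w}{6d}u$ lie in $K$.

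For the directions lying inside $h$ I would use the canonical form. From $\sqrt\gamma B_0\subseteq K$, the cross-section of $K$ by $\{y\cdot u=\sqrt\gamma/2\}$ contains a $(d-1)$-ball of radius $\tfrac{\sqrt3}{2}\sqrt\gamma$; taking the convex hull of this ball with the apex $a$, whose $u$-coordinate is $\beta_0+w=\max_K(y\cdot u)\le 1/\sqrt\gamma$, produces a region inside $K$ whose section by $h$ is a $(d-1)$-ball, namely the previous one scaled by the factor $w/(\beta_0+w-\sqrt\gamma/2)\ge w\sqrt\gamma$, hence of radius $\Omega(\gamma w)$. (This requires $\sqrt\gamma/2<\beta_0$, which follows from $\beta_0\ge\sqrt\gamma-\Delta_0$ together with $\Delta_0<\sqrt\gamma/2$, a harmless extra requirement on $\Delta_0$.) So $h\cap K$ contains a $(d-1)$-ball of radius $\Omega(\gamma w)$, though generally not centered at $x$. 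Since $x$ is the centroid of $h\cap K$, the standard fact that a convex body in $\RE^{d-1}$ contains its reflection through its centroid, shrunk about the centroid by the factor $d-1$, implies that the extent of $h\cap K$ from $x$ in every direction within $h$ is at least $1/d$ times its width in that direction, hence $\Omega(\gamma w/d)$; therefore $h\cap K$ contains a $(d-1)$-ball of radius $\Omega(\gamma w/d)$ \emph{centered at $x$}.

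Finally, by convexity $K$ contains the convex hull of this $(d-1)$-ball with the two points $x\pm\tfrac{w}{6d}u$, a bipyramid whose inradius about $x$ is, up to a constant, the smaller of its equatorial radius $\Omega(\gamma w/d)$ and its height $w/(6d)$, i.e. $\Omega(\gamma w/d)$; this gives $\delta(x)=\Omega(\gamma w/d)$ and completes the proof. I expect the crux to be the treatment of directions lying in the base hyperplane: there $h\cap K$ could a priori be extremely thin, which would force $\delta(x)$ to be small without forcing $\width(C)$ to be small, so the argument genuinely needs to combine the fatness of $K$ (to place a $(d-1)$-ball of radius $\propto\gamma w$ inside the base) with the centroid property of $x$ (to re-center that ball at $x$). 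The $\pm u$ step and the final bipyramid estimate are routine.
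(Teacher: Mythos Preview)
The paper does not prove this lemma explicitly; it lists it among ``easy consequences of standard properties of Macbeath regions'' and defers to \cite{AFM17a,AFM17b,AFM17c}. So there is no paper proof to compare against, only your argument to check.

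Your lower bound $\delta(x)\le\width(C)$ is fine, and so is the treatment of the directions lying in $h$: the fatness of $K$ gives a $(d-1)$-ball of radius $\Omega(\gamma w)$ inside the base, and the centroid property recenters it at $x$ at the cost of a factor of $d$. The final bipyramid estimate is also routine.

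The gap is in the $\pm u$ step. You assert that the length of the $u$-projection of $M^{3d}(x)$ equals $6d\min(s_+,s_-)$, where $s_\pm=\max\{t\ge 0:x\pm tu\in K\}$, and from this you deduce $s_\pm\ge w/(6d)$. But that equality is not true in general: projection does not commute with intersection, so while $\mathrm{proj}_u(M(x))\supseteq[\beta_0-\min(s_+,s_-),\,\beta_0+\min(s_+,s_-)]$ (via the chord through $x$), the reverse containment can fail. The correct upper bound on the projection uses the \emph{support} widths $x_\pm=\max_{y\in K}\pm u\cdot(y-x)$, giving $\mathrm{proj}_u(M(x))\subseteq[\beta_0-\min(x_+,x_-),\,\beta_0+\min(x_+,x_-)]$. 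Since $x_+=w$ exactly, the containment $\mathrm{proj}_u(C)\subseteq\mathrm{proj}_u(M^{3d}(x))$ only yields $w\le 3d\min(x_+,x_-)$, which is vacuous. If instead you intersect with the line $\{x+tu\}$, you get $s_+\le 3d\min(s_+,s_-)$, which bounds $s_-$ below by $s_+/(3d)$ but gives no lower bound on $s_+$ in terms of $w$.

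This matters: the apex $a$ need not lie over $x$, so nothing you have written rules out $s_+\ll w$; and if you replace $x\pm\frac{w}{6d}u$ by the points $x\pm\frac{1}{3d}(a-x)$ (which \emph{are} in $K$ via $C\subseteq M^{3d}(x)$), the resulting oblique bipyramid over your $(d-1)$-ball can have inradius only $\Theta(\gamma w^2)$ when $\|(a-x)_\perp\|=\Theta(1)$, since the minimizing normal $n$ can be chosen with $\|n_\perp\|\approx w$ to nearly cancel $n\cdot(a-x)$. So the $\pm u$ step genuinely needs an additional idea---for instance, controlling the horizontal offset of the apex, or replacing the bipyramid argument by a direct supporting-hyperplane argument using Lemmas~\ref{lem:raydist-delta} and~\ref{lem:raydist}.
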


\begin{lemma} \label{lem:mac2-var-exp}
Let $\rho \ge 1$ and let $K, C$, and $x$ be as defined in Lemma~\ref{lem:mac2-var}. Then  $C^{\rho} \subseteq M^{3d(2\rho-1)}(x)$.
\end{lemma}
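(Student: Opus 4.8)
The plan is to pass from $C^{\rho}$ to a scaled copy of the minimal cap $C$, invoke Lemma~\ref{lem:mac2-var}, and then exploit the central symmetry of Macbeath regions to absorb the scaling. Let $a$ denote the apex of $C$ and let $u$ be the outward unit normal to the base of $C$, so that $a$ minimizes $\langle \cdot, u\rangle$ over $K$. By the elementary observation recorded just before Lemma~\ref{lem:cap-exp} --- that for $\rho \ge 1$ the cap $C^{\rho}$ is contained in the image of $C$ under the dilation by factor $\rho$ centered at the apex --- we have $C^{\rho} \subseteq a + \rho(C - a)$. (In the degenerate case $C^{\rho} = K$ this still holds: for $z \in K$, the quantity $\langle z - a, u\rangle$ is at most the width of $K$ in direction $u$, which here is at most $\rho\,\width(C)$, so $a + \tfrac{1}{\rho}(z - a)$ lies in $K$ and within distance $\width(C)$ of the supporting hyperplane at $a$, hence in $C$.) Since $x$ is the centroid of the base of $C$ and $\width(C) \le \Delta_0$, Lemma~\ref{lem:mac2-var} gives $C \subseteq M^{3d}(x)$, and in particular $a \in M^{3d}(x)$; applying the same dilation to this containment yields $C^{\rho} \subseteq a + \rho\big(M^{3d}(x) - a\big)$.

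The remaining task is to show that this off-center dilation of $M^{3d}(x)$ sits inside a concentric Macbeath region, and the key is the following general fact: if $S$ is convex and centrally symmetric about a point $x$, if $a \in S$, and if $\rho \ge 1$, then $a + \rho(S - a) \subseteq x + (2\rho-1)(S - x)$. To see this, translate so that $x$ is the origin; central symmetry then gives $-a \in S$, and for any $y \in S$ one may write
\[
  \frac{(1-\rho)a + \rho y}{2\rho - 1} ~=~ \frac{\rho - 1}{2\rho - 1}\,(-a) ~+~ \frac{\rho}{2\rho - 1}\,y ,
\]
which is a convex combination of the two points $-a, y \in S$ --- the coefficients are nonnegative because $\rho \ge 1$, and they sum to $1$ --- so it lies in $S$; multiplying through by $2\rho - 1$ shows $a + \rho(y - a) = (1-\rho)a + \rho y \in (2\rho-1)S$, which is the fact. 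Applying it with $S = M^{3d}(x)$, which is centrally symmetric about $x$ by definition, gives $a + \rho\big(M^{3d}(x) - a\big) \subseteq x + (2\rho-1)\big(M^{3d}(x) - x\big) = M^{3d(2\rho - 1)}(x)$, and chaining the containments proves the lemma.

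The only step requiring genuine care is this last fact --- dilating a centrally symmetric body about an interior point other than its center. Once one brings the reflected point $-a$ into play it is a one-line convexity argument, but it is precisely this step that produces the factor $2\rho - 1$ (rather than, say, $\rho$), so the constant must be tracked faithfully. Everything else is bookkeeping: remembering that the cap-expansion dilation is centered at the apex $a$ and not at $x$, and handling the boundary case $C^{\rho} = K$.
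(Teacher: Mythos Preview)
Your argument is correct. The paper itself omits the proof of this lemma, labeling it an ``easy consequence of standard properties of Macbeath regions'' and deferring to~\cite{AFM17a,AFM17b,AFM17c}; your route---dilate $C$ about its apex, invoke Lemma~\ref{lem:mac2-var}, then absorb the off-center dilation into a concentric one via the convex-combination identity with $-a$---is precisely the natural way to carry this out, and it explains cleanly where the factor $2\rho-1$ comes from.
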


\begin{lemma} \label{lem:mac-mac-var}
Let $K \subset \RE^d$ be a convex body in $\gamma$-canonical form, where $\gamma$ is a constant. Let $\Delta_0$ be the constant of Lemma~\ref{lem:mac2-var} and let $\rho \ge 1$ be any real. There exists a constant $\beta \ge 2$ such that the following holds. Let $C_1$ and $C_2$ be any two caps of $K$ of width at most $\Delta_0$. Let $x_1$ and $x_2$ denote the centroids of the bases of the caps $C_1$ and $C_2$, respectively. If $M'(x_1) \cap M'(x_2) \neq \emptyset$, then $C_1^{\rho} \subseteq C_2^{\beta \rho}$.
\end{lemma}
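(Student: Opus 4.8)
The plan is to bracket the passage from $C_1$ to $C_2$ between two Macbeath regions, using the overlap of the shrunken regions to move from one center to the other. Put $\mu = 3d(2\rho-1)$, so $\mu \ge 3d \ge 1$. The first step is to invoke Lemma~\ref{lem:mac2-var-exp} — which applies since $C_1$ has width at most $\Delta_0$ and $x_1$ is the centroid of its base — to obtain $C_1^{\rho} \subseteq M^{\mu}(x_1)$. The last step will be the reverse kind of bound for $C_2$: I will show that $M^{O(\mu)}(x_2) \cap K \subseteq C_2^{O(\mu)}$. Since $O(\mu) = O(\rho)$, it then suffices to move $M^{\mu}(x_1)$ into a Macbeath region of comparable scale centered at $x_2$, while taking care that $C_1^{\rho} \subseteq K$ so that intersecting with $K$ is harmless.

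For the transfer between centers, write $P_i = (K - x_i)\cap(x_i - K)$, a symmetric convex body containing the origin, so $M^{\lambda}(x_i) = x_i + \lambda P_i$. Because $M'(x_1) \cap M'(x_2) \ne \emptyset$, Lemma~\ref{lem:mac-mac} with $\lambda = 1/5$ gives $M'(x_1) \subseteq M^{4/5}(x_2)$, that is, $x_1 + \tfrac15 P_1 \subseteq x_2 + \tfrac45 P_2$. Taking $p = 0$ shows $x_1 - x_2 \in \tfrac45 P_2$; substituting this back in, an arbitrary $p \in P_1$ satisfies $\tfrac15 p = \tfrac45(q - q_0)$ with $q, q_0 \in P_2$, whence $p \in 8P_2$ by symmetry and convexity. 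So $P_1 \subseteq 8P_2$, and since $aP_2 + bP_2 = (a+b)P_2$ for $a,b \ge 0$,
\[
  M^{\mu}(x_1) = x_1 + \mu P_1 \subseteq (x_1 - x_2) + x_2 + 8\mu P_2 \subseteq x_2 + (\tfrac45 + 8\mu)P_2 \subseteq M^{9\mu}(x_2),
\]
the last step using $\mu \ge 1$.

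For the final step, measure the depth of a point as its distance from the supporting hyperplane of $K$ at the apex of $C_2$; this is an affine function, nonnegative on $K$ and equal to $\width(C_2)$ at $x_2$ (which lies on the base hyperplane of $C_2$). For $p \in P_2$, writing $p = k - x_2 = x_2 - k'$ with $k, k' \in K$, the change in depth from $x_2$ to $x_2 + p$ lies in $[-\width(C_2), \width(C_2)]$ — the lower bound from nonnegativity of depth on $K$, the upper bound from $p \in x_2 - K$. Hence every point of $M^{\lambda}(x_2)$ has depth at most $(1+\lambda)\,\width(C_2)$, so $M^{\lambda}(x_2) \cap K \subseteq C_2^{1+\lambda}$. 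Chaining the three steps with $\lambda = 9\mu$ and using $C_1^{\rho} \subseteq K$,
\[
  C_1^{\rho} \subseteq M^{\mu}(x_1) \cap K \subseteq M^{9\mu}(x_2) \cap K \subseteq C_2^{1 + 9\mu},
\]
and $1 + 9\mu = 1 + 27d(2\rho - 1) \le 54d\,\rho$ for $d \ge 1$ and $\rho \ge 1$, so $\beta = 54d$ (which is $\ge 2$) works.

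I expect the only genuine subtlety to be the last step: for $\mu > 1$ the Macbeath region $M^{\mu}(x_2)$ spills outside $K$, so one must intersect with $K$ before claiming containment in a cap, and keep the set $C_1^{\rho}$ — which is contained in $K$ — on the left of every inclusion. Everything else is bookkeeping, and the hypothesis $\rho \ge 1$ is precisely what keeps all the scale factors above $1$ so that the Minkowski-sum manipulations go through cleanly.
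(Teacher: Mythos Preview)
Your proof is correct. The paper itself omits the proof of this lemma, referring the reader to earlier work and describing it as an ``easy consequence of standard properties of Macbeath regions,'' so there is no line-by-line comparison to make; your argument is precisely the kind of direct derivation the authors have in mind. The three steps---passing from $C_1^{\rho}$ into a scaled Macbeath region at $x_1$ via Lemma~\ref{lem:mac2-var-exp}, transferring centers using Lemma~\ref{lem:mac-mac} and the Minkowski calculus $P_1 \subseteq 8P_2$, and then bounding $M^{\lambda}(x_2)\cap K \subseteq C_2^{1+\lambda}$ by an affine depth argument---are all sound, and your care in intersecting with $K$ before the final containment (since $M^{9\mu}(x_2)$ may protrude outside $K$) is exactly the point that needs attention. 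The final step is in fact a clean quantitative generalization of Lemma~\ref{lem:mac-cap2-var} to arbitrary scale factor $\lambda$, and the constant $\beta = 54d$ you obtain is explicit and of the right order.
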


\begin{lemma} \label{lem:core-delta}
Let $K$ be a convex body. If $x \in K$ and $x' \in M'(x)$, then $4 \delta(x) / 5 \le \delta(x') \le 4 \delta(x) / 3$.
\end{lemma}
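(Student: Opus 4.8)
The plan is to exploit the basic containment relations that define the Macbeath region, together with the fact that $M'(x) = M^{1/5}(x)$ is a $(1/5)$-scaled copy of the centrally symmetric body $M(x)$ about $x$. Recall that $M(x) = K \cap (2x - K)$, so in particular $M(x) \subseteq K$, and hence $M^{1/5}(x) = x + \tfrac15(M(x) - x) \subseteq K$ as well. The key quantitative fact I would establish first is the relationship between $\delta(x)$ and the ``radius'' of $M(x)$: since $M(x)$ is the largest centrally symmetric body about $x$ inside $K$, the ball of radius $\delta(x)$ about $x$ is contained in $K$ and is centrally symmetric about $x$, hence contained in $M(x)$; conversely $M(x) \subseteq K$ forces every point of $M(x)$ to be within distance — well, not $\delta(x)$, but we only need one direction plus a companion inequality. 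The cleaner route is: the ball $B(x,\delta(x)) \subseteq M(x)$, and also (by central symmetry and $M(x)\subseteq K$) $M(x) \subseteq B(x, \operatorname{diam})$ is too weak — instead use that for the point $p \in \bd K$ realizing $\delta(x)$, the reflected point $2x - p$ lies on $\bd(2x-K)$, so the segment from $p$ to $2x-p$ through $x$ has half-length $\delta(x)$ and lies in $M(x)$; this is the ``thin direction'' and shows $M(x)$ has inradius exactly $\delta(x)$ in a suitable sense. What I actually need is simply: $B(x,\delta(x)) \subseteq M(x)$.

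Now take $x' \in M'(x)$, so $\|x' - x\| \le \tfrac15 r$ where $r$ is the ``outradius'' of $M(x)$ in the direction of $x'-x$; but more usefully, $x' - x \in \tfrac15(M(x)-x)$, and since $B(x,\delta(x))\subseteq M(x)$ we get $\|x'-x\| \le \tfrac15 \cdot (\text{distance from } x \text{ to } \bd M(x) \text{ in direction } x'-x)$. The upper bound on $\|x'-x\|$ I will use is $\|x'-x\| \le \delta(x)/5$: this follows because in the thin direction $M(x)$ has half-width $\delta(x)$, and $M'(x)$ is $M(x)$ shrunk by $1/5$, so no point of $M'(x)$ is farther than $\delta(x)/5$ from $x$ along that direction — but to bound $\|x'-x\|$ in \emph{every} direction I need the outradius of $M(x)$, which can be as large as the diameter. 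So this naive bound fails in general, and this is the main obstacle: $M(x)$ can be very elongated, so $x'$ could be far from $x$ in Euclidean distance while still lying in $M'(x)$.

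The resolution — and the actual content of the lemma — is that even when $x'$ is ``far'' from $x$, it is far only in directions where $K$ is correspondingly ``fat'', so $\delta$ changes slowly. Concretely: since $x' - x \in \tfrac15(M(x) - x) = \tfrac15((K-x)\cap(x-K))$, we have $5(x'-x) \in K - x$ and $5(x'-x) \in x - K$, i.e. $x + 5(x'-x) \in K$ and $x - 5(x'-x) \in K$. Let $y = x + 5(x'-x)$ and $z = x - 5(x'-x)$; both lie in $K$, and $x' = \tfrac45 x + \tfrac15 y$ lies on the segment $[x,y]$ at parameter $1/5$ from $x$. The workhorse is a convexity estimate on $\delta$ along a chord: if $a,b \in K$ then for $t \in [0,1]$, $\delta((1-t)a + tb) \ge (1-t)\delta(a)$ (the ball $B(a,\delta(a))$ and the point $b$ together span a cone inside $K$). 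Applying this with $a = x$, $b = y$, $t = 1/5$ gives $\delta(x') \ge \tfrac45 \delta(x)$, which is the lower bound. For the upper bound, I would run the same argument ``backwards'': $x$ lies on the segment $[x', ?]$ — specifically, since $x = \tfrac{1}{?}\,x' + \dots$; cleaner is to observe $x - x' = -(x'-x)$ and $x + \tfrac14(x - x')\cdot(-1)\dots$; the slick version is that $x = \tfrac{5}{4} x' - \tfrac14 y'$ for a suitable $y' \in K$ — indeed from $z = x - 5(x'-x) \in K$ one checks $x = \tfrac56 x' + \tfrac16 z \cdot(\dots)$. I'd solve the linear algebra so that $x$ is expressed as a convex combination $(1-s)x' + s w$ with $w \in K$ and $s = 1/4$ (coming from the factor $5$ and the shrink $1/5$: $x = x' - (x'-x) = x' + \tfrac15\bigl(x - 5(x'-x) - x\bigr)\cdot\dots$), then the chord estimate gives $\delta(x) \ge (1 - 1/4)\delta(x') = \tfrac34 \delta(x')$, i.e. $\delta(x') \le \tfrac43 \delta(x)$. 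The one delicate point to get right is the exact convex-combination parameters that turn the factor-$5$ expansion and the $1/5$ shrink into the constants $4/5$ and $4/3$; I expect $x' = \tfrac45 x + \tfrac15 y$ with $y\in K$ (giving $4/5$) and $x = \tfrac34 x' + \tfrac14 z$ with $z\in K$ (giving, via $\delta(x)\ge\tfrac34\delta(x')$, the bound $4/3$), and verifying $z \in K$ is exactly where the defining property $x'-x \in \tfrac15(x - K)$ is used.
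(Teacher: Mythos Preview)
The paper omits the proof of this lemma, deferring to prior work, so there is no in-paper proof to compare against; your approach is essentially the standard one and is correct in substance. The chord estimate $\delta((1-t)a + tb) \ge (1-t)\delta(a)$ for $a,b \in K$ is exactly the right tool, and your lower bound via $x' = \tfrac45 x + \tfrac15 y$ with $y = x + 5(x'-x) \in K$ is clean and correct.

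One arithmetic correction on the upper bound: with your $z = x - 5(x'-x)$, the identity $x = \tfrac34 x' + \tfrac14 z$ does \emph{not} hold (the right-hand side equals $x - \tfrac12(x'-x)$). The correct convex combination using that same $z$ is
\[
x ~=~ \tfrac56\,x' \;+\; \tfrac16\,z,
\]
since $\tfrac56(x + (x'-x)) + \tfrac16(x - 5(x'-x)) = x$. The chord estimate then yields $\delta(x) \ge \tfrac56\,\delta(x')$, i.e.\ $\delta(x') \le \tfrac65\,\delta(x)$, which is actually \emph{stronger} than the stated $\tfrac43$. (If you want to land on $\tfrac43$ exactly, take instead $w = x - 3(x'-x)$, which lies in $K$ as a convex combination of $x$ and $z$; then indeed $x = \tfrac34 x' + \tfrac14 w$.) The opening discussion wrestling with the inradius and outradius of $M(x)$ is a dead end, as you yourself discover; in a clean write-up you can drop it and go straight to the chord argument.
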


\subsection{Polarity and the Mahler Volume} \label{s:prelim-polar}

Some of our analysis will involve the well known concept of \emph{polarity}. Let us recall some general facts (see, e.g., Eggleston~\cite{Egg58}). Given vectors $u, v \in \RE^d$, let $\ang{u,v}$ denote their dot product, and let $\|v\| = \sqrt{\ang{v,v}}$ denote $v$'s Euclidean length. (Throughout, we use the terms \emph{point} and \emph{vector} interchangeably.) Given a bounded convex body $K \in \RE^d$ that contains the origin in its interior, define its \emph{polar}, denoted $K^*$, to be the convex set
\[
	K^*
		~ = ~ \{ u : \ang{u,v} \le 1, \hbox{~for all $v \in K$} \}.
\]
The polar enjoys many useful properties. For example, it is well known that $K^*$ is bounded and $(K^*)^* = K$. Further, if $K_1$ and $K_2$ are two convex bodies such that $K_1 \subseteq K_2$ then $K_2^* \subseteq K_1^*$. 

It will be convenient to also define the \emph{polar} of a point. Given a point $v \in \RE^d$ (not the origin), we define $v^*$ to be the hyperplane that is orthogonal to $v$ and at distance $1/\|v\|$ from the origin, on the same side of the origin as $v$. The polar of a hyperplane that does not contain the origin is defined as the inverse of this mapping. We may equivalently define $K^*$ as the intersection of the closed halfspaces that contain the origin, bounded by the hyperplanes $v^*$, for all $v \in K$. 

For any centrally symmetric convex body $K$ and real $\lambda > 0$, we use $\lambda K$ to denote the body obtained by scaling $K$ by a factor of $\lambda$ about its center. Given any convex body $K$ and real $\lambda \geq 1$,  we say that a point $x \in K$ is $\lambda$-centered in $K$ if there exists an ellipsoid $\Phi$ centered at $x$ such that $\Phi \subseteq K \subseteq \lambda \Phi$. It is well known that every convex body $K$ contains a $d$-centered point. This follows from the fact that the ellipsoid $\Phi$ of largest volume contained inside a convex body $K$, called the \emph{John Ellipsoid}, satisfies $\Phi \subseteq K \subseteq d \Phi$. The following lemma shows that the centroid of $K$ is $O(1)$-centered (the proof is presented in~\cite{KLS95}).

\begin{lemma} \label{lem:john-centroid}
For any convex body $K$, its centroid is $d$-centered in $K$.
\end{lemma}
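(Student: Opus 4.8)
The plan is to realize the required ellipsoid explicitly as a dilate of the \emph{ellipsoid of inertia} of $K$. The centroid, linear images of ellipsoids, and the property of being $\lambda$-centered are all preserved under non-singular affine maps, so we may assume the centroid of $K$ lies at the origin $O$. Let $M = \vol(K)^{-1}\int_K x\,x^{\Transpose}\,dx$ be the (positive definite) covariance matrix of the uniform distribution on $K$; one computes $M(B_0) = \frac{1}{d+2}I$, so it is natural to define the normalized inertia ellipsoid $\mathcal{E}(K) = \{x : x^{\Transpose}((d+2)M)^{-1}x \le 1\}$, for which $\mathcal{E}(B_0) = B_0$. It suffices to prove the two inclusions
\[
  \tfrac{1}{\sqrt{d}}\,\mathcal{E}(K) ~\subseteq~ K ~\subseteq~ \sqrt{d}\,\mathcal{E}(K),
\]
since then $E = \tfrac{1}{\sqrt{d}}\,\mathcal{E}(K)$ is an ellipsoid centered at the centroid $O$ with $E \subseteq K \subseteq dE$. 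Because both inclusions transform covariantly under linear maps, we may apply a further linear map (replacing $M$ by $LML^{\Transpose}$, which leaves the centroid at $O$) to assume $M = \frac{1}{d+2}I$, so $\mathcal{E}(K) = B_0$; the task becomes to show that the inradius of $K$ about $O$ is at least $1/\sqrt{d}$ and its circumradius about $O$ is at most $\sqrt{d}$.

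Both bounds reduce to a one-dimensional estimate on the directional slices of $K$. Fix a unit vector $u$, and for $t \in \RE$ let $A(t)$ be the $(d-1)$-dimensional volume of the slice $\{x \in K : \ang{x,u} = t\}$. The Brunn--Minkowski inequality makes $A^{1/(d-1)}$ concave on its support, and the normalization $M = \frac{1}{d+2}I$ together with the centroid condition yield the moment identities $\int t\,A(t)\,dt = 0$ and $\int t^2 A(t)\,dt = \vol(K)/(d+2)$. Writing $h_K(u) = \max_{x\in K}\ang{x,u}$ for the right endpoint of the support (equivalently, the distance from $O$ to the supporting hyperplane of $K$ with outer normal $u$), a direct computation shows that the ``cone profiles'' $A(t) \propto (h_K(u)-t)^{d-1}$ and $A(t) \propto (t-t_0)^{d-1}$ (cones with apex pointing to the right, resp.\ to the left) are exactly consistent with the two moment identities precisely when $h_K(u) = \sqrt{d}$, resp.\ $h_K(u) = 1/\sqrt{d}$. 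Granting that these cone profiles are the extreme cases, we obtain $1/\sqrt{d} \le h_K(u) \le \sqrt{d}$ for every unit $u$, hence $\tfrac{1}{\sqrt{d}}B_0 \subseteq K \subseteq \sqrt{d}\,B_0$; both inclusions are tight for the regular simplex centered at its centroid, whose circumradius-to-inradius ratio equals $d$, which is where the constant $d$ in the lemma comes from.

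The one genuinely technical point is the claim that these cone profiles are extremal: among $(1/(d-1))$-concave slice profiles with the prescribed mass, vanishing first moment, and prescribed second moment, none can reach beyond $\pm\sqrt{d}$. This is a standard variational (``shrinking''/symmetrization) argument for $p$-concave densities, and is precisely the classical sharp inclusion of a convex body in a dilate of its ellipsoid of inertia; for the details we refer to~\cite{KLS95}. We remark that if a constant of order $d^{3/2}$ (rather than the optimal $d$) suffices --- as it does for every application in this paper --- one can give a fully elementary proof: let $E$ be the maximum-volume ellipsoid \emph{centered at $O$} contained in $K$, and after a linear map assume $E = B_0$. The optimality conditions produce contact points $u_1,\dots,u_m \in \bd K$ with $\|u_i\| = 1$ and weights $c_i > 0$ satisfying $\sum_i c_i u_i u_i^{\Transpose} = I$ (hence $\sum_i c_i = d$), together with $K \subseteq \bigcap_i \{x : \ang{x,u_i} \le 1\}$. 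Since $O$ is the centroid, $-\tfrac1d K \subseteq K$, so $\ang{x,u_i} \ge -d$ for every $x \in K$, and therefore $\|x\|^2 = \sum_i c_i \ang{x,u_i}^2 \le d^2 \sum_i c_i = d^3$; thus $K \subseteq d^{3/2}\,E$, showing that the centroid is $d^{3/2}$-centered.
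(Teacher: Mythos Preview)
Your argument is correct. The paper does not actually prove this lemma---it only records that the proof appears in~\cite{KLS95}---and your sketch via the inertia ellipsoid, reduction to isotropic position, and Brunn--Minkowski analysis of the slice profile is exactly the KLS argument, so your treatment is aligned with (and considerably more detailed than) the paper's.
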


An important concept related to polarity is the \emph{Mahler volume}, which is defined to be the product of the volumes of a convex body and its polar. There is a large literature on the Mahler volume (see, e.g., Kuperberg~\cite{Kuperberg}). It is well known that the Mahler volume is affine invariant. The following lemma establishes a constant upper and lower bound on the Mahler volume of a convex body when the polar is computed with respect to an $O(1)$-centered point. Note that the convex body need not be centrally symmetric.

\begin{lemma} \label{lem:mahler}
Let $K$ be a convex body, let $x \in K$ be a $\lambda$-centered point for constant $\lambda$, and let $K^*$ denote the polar of $K$ with respect to $x$. Then $\vol(K) \cdot \vol(K^*) = \Theta(1)$.
\end{lemma}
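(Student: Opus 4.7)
The plan is to reduce everything to the case of an ellipsoid, where the Mahler product can be computed exactly, and then use the sandwiching guaranteed by $\lambda$-centeredness to transfer the bound to $K$. Translate so that $x$ is the origin, so that the polar of $K$ with respect to $x$ becomes the standard polar $K^*$. By the definition of $\lambda$-centered, there is an ellipsoid $E$ centered at the origin with $E \subseteq K \subseteq \lambda E$.

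Next I would observe that polarity is inclusion-reversing and that $(\lambda E)^* = (1/\lambda) E^*$. Applying both observations to the sandwich $E \subseteq K \subseteq \lambda E$ gives
\[
	\tfrac{1}{\lambda} E^* ~\subseteq~ K^* ~\subseteq~ E^*.
\]
Combining the volume bounds from the primal and polar sandwiches yields
\[
	\frac{\vol(E)\,\vol(E^*)}{\lambda^d}
	~\le~ \vol(K)\cdot\vol(K^*) ~\le~
	\lambda^d\,\vol(E)\,\vol(E^*),
\]
so it suffices to show that $\vol(E)\cdot\vol(E^*) = \Theta(1)$, since $\lambda$ and $d$ are constants.

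For the ellipsoid case, choose a linear map $T$ with $T(E) = B_0$, the Euclidean unit ball. Writing $s = |\det T|$, we have $\vol(E) = \vol(B_0)/s$. The standard identity $T(E)^* = (T^{-\Transpose})(E^*)$ for polarity under linear maps gives $T^{-\Transpose}(E^*) = B_0$, hence $\vol(E^*) = \vol(B_0) \cdot s$. Therefore $\vol(E)\cdot\vol(E^*) = \vol(B_0)^2$, which is a constant depending only on $d$.

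There is no real obstacle here; the only point requiring a touch of care is making sure the polar is taken with respect to the correct basepoint (we must translate first, so that $x$ plays the role of the origin in the definition of $K^*$) and that the inclusion-reversing property of polarity is being applied to bodies all containing the origin in their interior, which is guaranteed since $E$ is centered at $x$.
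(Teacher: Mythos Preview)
Your proof is correct and follows essentially the same route as the paper: sandwich $K$ between $E$ and $\lambda E$, pass to the polar to sandwich $K^*$ between $(1/\lambda)E^*$ and $E^*$, and then reduce to the Mahler product of an ellipsoid. The only cosmetic difference is that the paper invokes affine invariance of the Mahler volume to get $\vol(E)\cdot\vol(E^*)=\vol(B_0)^2$, whereas you spell out the same fact via the determinant of a linear map taking $E$ to $B_0$.
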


\begin{proof}
By definition, there is an ellipsoid $\Phi$ centered at $x$ such that $\Phi \subseteq K \subseteq \lambda \Phi$. By standard properties of the polar transformation, we have $(\lambda \Phi)^* \subseteq K^* \subseteq \Phi^*$. Since $(\lambda \Phi)^* = (1/\lambda) \Phi^*$, it follows that $(1/\lambda) \Phi^* \subseteq K^* \subseteq \Phi^*$. Thus 
 \[
     \frac{1}{\lambda^d} \cdot \vol(\Phi) \cdot \vol(\Phi^*) 
        ~ \leq ~ \vol(K) \cdot \vol(K^*)
        ~ \leq ~ \lambda^d \cdot \vol(\Phi) \cdot \vol(\Phi^*).
 \]
Since the Mahler volume is invariant under linear invertible transformation, $\vol(\Phi) \cdot \vol(\Phi^*)$ equals the square of the volume of a unit ball, which is $\Theta(1)$. The desired result follows immediately.
\end{proof}

The following lemma is now an immediate consequence of Lemmas~\ref{lem:john-centroid} and \ref{lem:mahler}.

\begin{lemma} \label{lem:centroid-mahler}
Let $K$ be a convex body and let $K^*$ denote the polar of $K$ with respect to its centroid. Then $\vol(K) \cdot \vol(K^*) = \Theta(1)$.
\end{lemma}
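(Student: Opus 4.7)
The plan is to observe that this lemma is an immediate corollary of the two preceding lemmas, by taking the centered point in Lemma~\ref{lem:mahler} to be the centroid. Specifically, Lemma~\ref{lem:john-centroid} tells us that the centroid of any convex body $K$ is $d$-centered, i.e., there exists an ellipsoid $E$ centered at the centroid such that $E \subseteq K \subseteq d E$. Since $d$ is a fixed constant throughout the paper, this means the centroid qualifies as a $\lambda$-centered point for the constant $\lambda = d$.

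Having established that, I would simply invoke Lemma~\ref{lem:mahler} with $x$ set to be the centroid and $\lambda = d$. The hypothesis of that lemma requires $x$ to be $\lambda$-centered for constant $\lambda$, which is exactly what we just obtained. Its conclusion gives $\vol(K) \cdot \vol(K^*) = \Theta(1)$, where $K^*$ is the polar taken with respect to $x$, i.e., with respect to the centroid. This is precisely the statement we wish to prove.

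Since the work has already been done in Lemma~\ref{lem:mahler}'s proof (which reduced the Mahler-type bound for an arbitrary $\lambda$-centered reference point to the affine-invariant Mahler volume of a ball via the sandwich $(1/\lambda) E^* \subseteq K^* \subseteq E^*$), there is no real obstacle here: the only ingredient needed on top of Lemma~\ref{lem:mahler} is a canonical choice of a constantly-centered point in $K$ that does not require any special assumption on $K$. Lemma~\ref{lem:john-centroid} supplies exactly this by guaranteeing the centroid always works. Hence the proof is a one-line composition of the two lemmas, with no additional calculation required.
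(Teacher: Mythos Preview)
Your proposal is correct and matches the paper's own approach exactly: the paper states that the lemma ``is now an immediate consequence of Lemmas~\ref{lem:john-centroid} and \ref{lem:mahler},'' which is precisely the composition you describe.
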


\section{Caps of the Polar}
\label{s:polar-caps}

Let $K$ denote a convex body. The goal of this section is to establish certain fundamental and novel relationships between $\eps$-width caps in $K$ and its polar $K^*$. Assuming $K$ satisfies certain fatness assumptions, we show that the base of a cap in $K$ and the base of a corresponding cap in $K^*$, treated as $(d-1)$-dimensional bodies, are roughly related as polars of each other. We establish this in Lemmas~\ref{lem:polars} and \ref{lem:dualcaps}.

It will be useful to consider the notion of a cap in a dual setting (see, e.g., \cite{AFM12a,AFM12b}). Given a convex body $K$ and a point $z$ that is exterior to $K$, we define the \emph{dual cap} of $K$ with respect to $z$ to be the set of $(d-1)$-dimensional hyperplanes that pass through $z$ and do not intersect $K$'s interior (see Figure~\ref{f:dual-cap-def}). We can define the polar of a dual cap as the set of points that results by taking the polar of each hyperplane of the dual cap. If $K$ is full dimensional and contains the origin, it follows that a hyperplane $h$ lies in the dual cap if and only if the point $h^*$ lies on the base of the cap of $K^*$ defined by the hyperplane $z^*$.

\begin{figure}[htbp]
  \centerline{\includegraphics[scale=.65]{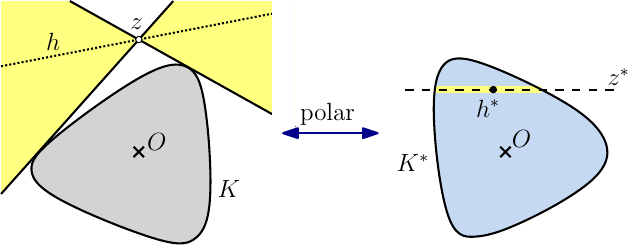}}
  \caption{\label{f:dual-cap-def}Definition of a dual cap and its polar.}
\end{figure}

The following lemma will be useful in proving Lemma~\ref{lem:dualcaps}, wherein we establish a polar relationship between the base of a cap in a convex body and the base of a corresponding cap in its polar. For the sake of concreteness, we state our results in terms of an arbitrary direction, which we call \emph{vertical}, and any hyperplane orthogonal to this direction is called \emph{horizontal}. Since the direction is arbitrary, there is no loss of generality.

\newcommand{\bBarStar}{\overline{B}^{\kern+1pt *}}

\begin{lemma} \label{lem:polars}
Let $z \in \RE^d$ be a point that lies on a vertical ray from the origin $O$, and let $B$ be a $(d-1)$-dimensional convex body whose interior intersects the segment $O z$ at some point $x$ (see Figure~\ref{f:polars}). Let $G$ be the polar of the dual cap of $B$ with respect to $z$, let $\overline{B}$ be the vertical projection of $B$, and let $h$ be the hyperplane parallel to $B$ passing through $z$. Then $G - h^* = \alpha \bBarStar$, where $\alpha = \|x z\| / \|O z\|$.
\end{lemma}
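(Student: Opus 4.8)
The statement is a clean polarity identity, so the natural strategy is to compute both sides explicitly in convenient coordinates and check they agree. I would place the origin $O$ at the origin, take the vertical direction to be the $x_d$-axis, so that $z = (0,\dots,0,t)$ for some $t = \|Oz\| > 0$, and $x = (0,\dots,0,s)$ with $s = \|xz\|$... wait, more carefully: $x$ lies on segment $Oz$, so $x = (0,\dots,0,a)$ with $0 < a < t$, and $\alpha = \|xz\|/\|Oz\| = (t-a)/t$. The hyperplane $h$ through $z$ parallel to $K$ (i.e.\ horizontal) is $\{x_d = t\}$, so $h^*$ is the point $(0,\dots,0,1/t)$ on the $x_d$-axis. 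The body $K$ lives in the horizontal hyperplane $\{x_d = a\}$, and $\overline K$ is its projection to $\{x_d = 0\}$, a genuine $(d-1)$-dimensional convex body containing the origin of that hyperplane in its interior (since $x \in \mathrm{int}\,K$ projects to the origin). So $\kBarStar$ is the usual $(d-1)$-dimensional polar of $\overline K$, sitting in $\{x_d=0\}$.

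**The computation.** A hyperplane $g$ through $z$ that misses $\mathrm{int}\,K$ can be written as $\{v \in \RE^d : \ang{n, v} = \ang{n,z}\}$ for a suitable normal $n$; its polar point is $g^* = n/\ang{n,z}$. I would parametrize such $g$ by the point $y \in \bd \overline K$ it "aims at": concretely, $g$ is the affine hull of $z$ together with the horizontal $(d-2)$-flat supporting $\overline K$ (lifted to height $a$) at the boundary point $y$ — one must be slightly careful because the dual cap consists of all supporting hyperplanes at all boundary points of the lifted $K$, but since $K$ is full-dimensional in its hyperplane the relevant data is exactly a boundary point $y$ of $\overline K$ plus a supporting direction there. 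Writing $y$'s supporting functional in $\{x_d=0\}$ as $\ang{w, \cdot} \le 1$ with $\ang{w,y}=1$, i.e.\ $w \in \bd \kBarStar$, the hyperplane $g$ in $\RE^d$ has an equation of the form $\ang{w, \bar v} + c\, v_d = 1$ forced to pass through the lifted point $(y,a)$ and through $z=(0,\dots,0,t)$. Passing through $z$ gives $c\,t = 1$, so $c = 1/t$; then $g^* = (w, 1/t)/1 = (w, 1/t)$. Thus $G = \{(w,1/t) : w \in \bd \kBarStar\}$ — but I must track that $G$ is the polar of the \emph{full} dual cap (all of $\bd\overline K$, not just extreme supporting hyperplanes), which by convexity fills in to $G = \kBarStar \times \{1/t\}$ as a subset of $\{x_d = 1/t\}$. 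Hence $G - h^* = (\kBarStar \times \{1/t\}) - (0,\dots,0,1/t) = \kBarStar \times \{0\}$, i.e.\ $\kBarStar$ itself.

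**Reconciling with the factor $\alpha$.** The computation above seems to give $G - h^* = \kBarStar$ with no factor of $\alpha$, which means I have mislocated something — the discrepancy must come from the position of $x$ relative to $O$ inside $K$. The resolution: $\overline K$ should be the projection of $K$ but \emph{recentered so that the origin corresponds to $x$} — re-reading, $\overline K$ is just "the vertical projection of $K$," and the origin of the ambient space projects to the origin of $\{x_d=0\}$, which is the projection of $x$ only because $x$ lies on the $x_d$-axis. So that's consistent and the factor must enter through the equation of $g$: the lifted boundary point is $(y, a)$ with $a \ne 0$ in general, and requiring $\ang{w,y} + c a = 1$ together with $c\,t=1$ forces instead $\ang{w,y} = 1 - a/t = (t-a)/t = \alpha$, i.e.\ $w \in \alpha^{-1}\bd\kBarStar$ on $\bd(\alpha^{-1}\kBarStar)$... that gives $G - h^* = \alpha^{-1}\kBarStar$, still not quite it. The correct bookkeeping — which is exactly the delicate step — is that $w$ here is the functional for $y \in \bd\overline K$ normalized by $\ang{w,y}=1$, but $g$'s actual normalized equation has the horizontal part scaled so that the constant is $1$; carrying the scaling through, one finds $g^*$ has horizontal component $\alpha w$, giving $G = \alpha\kBarStar \times \{1/t\}$ and therefore $G - h^* = \alpha\kBarStar$, as claimed. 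I would write this out with explicit normalization to nail the $\alpha$.

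**Main obstacle.** The genuine content is entirely in the last paragraph: getting the scalar $\alpha = \|xz\|/\|Oz\|$ to appear with the correct exponent. This is a bookkeeping issue about which hyperplane one normalizes (the equation of $g$ in $\RE^d$ versus the supporting functional of $\overline K$ in $\RE^{d-1}$) and how polarity of hyperplanes versus polarity of $(d-1)$-bodies interact under the projection $\RE^d \to \{x_d = 1/t\}$. Everything else — that $G$ is a convex body, that $h^*$ is the "center" landing point, that the polar of the dual cap is the projection of the polar body's cap base — is routine once the definitions in Section~\ref{s:prelim-polar} are unwound. So in the write-up I would spend almost all the effort on a careful coordinate derivation of $g^*$ for a generic supporting hyperplane $g$, and then invoke convexity and $(K^{**}=K)$-type identities to pass from the extreme hyperplanes to the full bodies $G$ and $\kBarStar$.
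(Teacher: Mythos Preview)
Your overall strategy---parametrize a generic hyperplane $g$ in the dual cap, compute $g^*$, and identify the horizontal component as a scaled point of $\kBarStar$---matches the paper's. But you have made an unjustified simplification that costs you the lemma in the generality in which it is stated and used: you assume $K$ is horizontal (``The body $K$ lives in the horizontal hyperplane $\{x_d=a\}$''). Nothing in the hypotheses forces this. The lemma only says $z$ is on the vertical axis and $K$'s interior meets the segment $Oz$; the hyperplane containing $K$ may be tilted. Indeed, in the very next lemma the paper applies this result with $K$ taken to be the base of $\pi(C)$, a minimal-volume cap of $K^*$, whose orientation is generically \emph{not} horizontal. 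So the special case you compute does not suffice.

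The paper's proof is essentially your computation carried out without that assumption. It keeps two distinct unit vectors, $u$ normal to $K$ and $v$ vertical, and for a hyperplane $g$ with $w=g^*$ it finds the auxiliary hyperplane $g'$ (normal with no vertical component) that passes through the vertical projection of $g\cap(\text{aff}\,K)$; a short dot-product calculation shows $g'$ has equation $\ang{w',p}=\alpha$ with $w'=w-u/\ang{u,z}$, whence $(g')^*=w'/\alpha\in\kBarStar$ and $h^*=u/\ang{u,z}$. Your horizontal case is exactly $u=v$, where $u/\ang{u,z}=(0,\dots,0,1/t)$ and the cross-terms in the dot-product computation vanish; that is why your bookkeeping looked simpler than it should. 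To fix your write-up, drop the assumption that $K$ sits in $\{x_d=a\}$, carry the tilt $u$ through the equation of $g$, and redo the normalization step---the factor $\alpha=\|xz\|/\|Oz\|$ then emerges from $\ang{u,x}/\ang{u,z}$ rather than from $a/t$.
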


\begin{figure}[htbp]
  \centerline{\includegraphics[scale=.65]{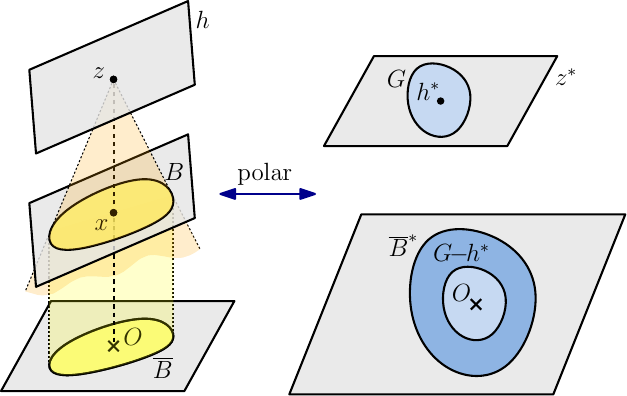}}
  \caption{\label{f:polars}Statement of Lemma~\ref{lem:polars}.}
\end{figure}

Note that $G$ is a $(d-1)$-dimensional convex body that lies on the horizontal hyperplane $z^*$. Since $h$ passes through $z$ and does not intersect $B$, $h^*$ is a point lying within $G$, and therefore the translate $G - h^*$ is horizontal and contains the origin. Since $\overline{B}$ is also horizontal and contains the origin, so does $\bBarStar$ (see Figure~\ref{f:polars}).

\begin{proof}
Let $g$ be any hyperplane passing through $z$ and not intersecting $B$. Note that $g$ is the set of points $p$ satisfying $\ang{g^*, p} = 1$. Define the hyperplane $g'$ to be the set of points $p$ satisfying the equation $\ang{g^*-h^*,p} = \alpha$. We claim that (i) the normal to $g'$ has no vertical component, and (ii) it passes through the intersection of $g$ with the hyperplane containing $B$. 

Note that (i) follows from the fact that $\ang{g^*-h^*,z} = \ang{g^*,z} - \ang{h^*,z} = 0$, since $\ang{g^*,z} = \ang{h^*,z} = 1$. To prove (ii),  let $y$ be any point on the intersection of $g$ with the hyperplane containing $B$. We have
\[
     \ang{g^* - h^*, y} 
        ~ = ~ \ang{g^*,y} \kern-1pt - \kern-1pt \ang{h^*,y} \kern-1pt 
        ~ = 1 - \ang{h^*,x},
\]
since $\ang{g^*,y} = 1$, and $\ang{h^*,y} = \ang{h^*,x}$. Since $\ang{h^*,z} = 1$, we have $\ang{h^*,x} = \|x\| / \|z\| = 1 - \alpha$. Thus $\ang{g^* - h^*, y} = \alpha$, which proves (ii).

Clearly the polar of $\overline{B}$ is the set of all points $(g')^*$ that can be generated in this manner. We have $(g')^* = (1/\alpha) (g^*-h^*)$. It follows that $G - h^* = \alpha\bBarStar$, as desired.
\end{proof}

For the rest of this section, we focus on the case where $K$ is a convex body in $\gamma$-canonical form for constant $\gamma$. Before presenting our next lemma, we introduce a mapping of $\eps$-width caps of $K$ to $\Theta(\eps)$-width caps of $K^*$. Let $C$ be an $\eps$-width cap of $K$. For concreteness, assume that space has been rotated so that $C$'s base is horizontal with $C$ lying above the origin. For a suitably large constant $c$ (to be specified in the proof of the following lemma), shoot a ray vertically upwards from $O$, and let $x \in K^*$ be a point on this ray such that $\delta(x) = \eps/c$ (where $\delta(x)$ is defined with respect to $K^*$). Define $\pi(C)$ to be a minimum-volume cap of $K^*$ that contains $x$ (see Figure~\ref{f:dualcaps}). Our next lemma shows that, up to constant factors, there is a polar relationship between the bases of $C$ and $\pi(C)$.

\begin{figure*}[htbp]
  \centering
  \includegraphics[scale=.65]{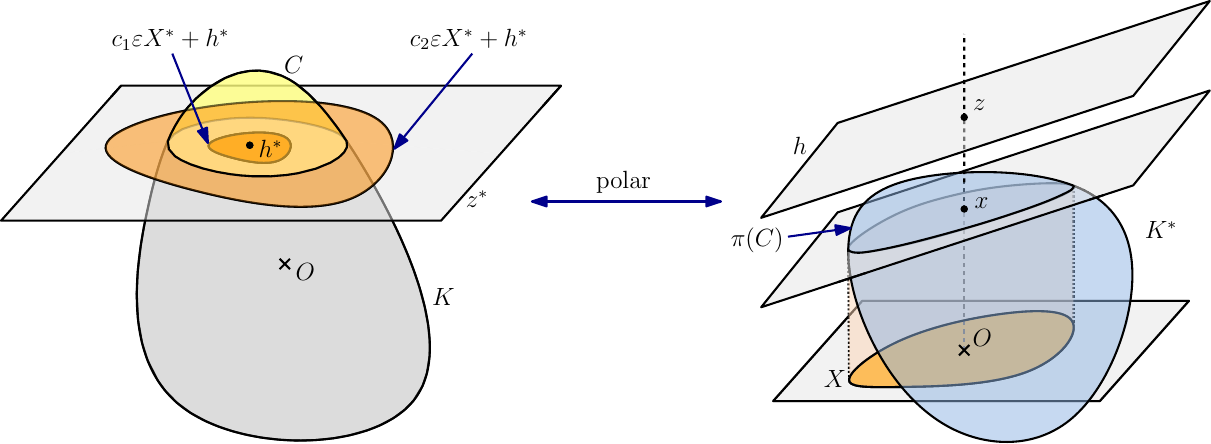}
  \caption{\label{f:dualcaps}Statement of Lemma~\ref{lem:dualcaps}.}
\end{figure*}

\begin{lemma} \label{lem:dualcaps}
Given a convex body $K$ in $\gamma$-canonical form for a constant $\gamma$, there exist constants $c_1, c_2$ such that the following holds. Let $C$ be a horizontal $\eps$-width cap of $K$ lying above the origin. Let $X$ be the vertical projection of the base of $\pi(C)$. Let $z$ be the point that is the polar of the hyperplane passing through $\base(C)$, and let $h$ be the hyperplane parallel to $\base(\pi(C))$ passing through $z$. Then $c_1 \eps X^* \subseteq \base(C) - h^* \subseteq c_2 \eps X^*$.
\end{lemma}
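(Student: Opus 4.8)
The plan is to apply Lemma~\ref{lem:polars} twice in a coupled way, once to relate $\base(C)$ to $X^*$ and once to relate the corresponding objects in $K^*$, and then to control the leftover scaling constant $\alpha = \|xz\|/\|Oz\|$ using the canonical-form fatness assumption. First I would set up coordinates: assume space is rotated so that $\base(C)$ is horizontal and $C$ lies above the origin, with $z$ the polar of the hyperplane through $\base(C)$, so $z$ lies on the vertical ray from $O$ at distance $1/\dist(O, \base(C))$. Since $K$ is in $\gamma$-canonical form, Lemma~\ref{lem:raydist} (or the fatness bounds directly) tells us that the base of $C$ is ``nearly horizontal'' in the sense needed, and that $\dist(O, \base(C)) = \Theta(1)$; hence $\|Oz\| = \Theta(1)$ as well. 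Recall $\pi(C)$ is a minimum-volume cap of $K^*$ containing the point $x\in K^*$ on the upward vertical ray with $\delta(x) = \eps/c$; by Lemma~\ref{lem:width-delta} applied in $K^*$, $\width(\pi(C)) = \Theta(\eps)$, and by Lemma~\ref{lem:mac2-var} its base is centered around $x$ up to constant factors.

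Next I would invoke Lemma~\ref{lem:polars} with the $(d-1)$-dimensional convex body taken to be $\base(\pi(C))$, the exterior point taken to be $z$, and $x$ the point where the segment $Oz$ pierces $\base(\pi(C))$'s affine hull — this is exactly the geometry of that lemma once one checks (using Lemma~\ref{lem:raydist} in $K^*$) that the segment $Oz$ really does pass through the interior of $\base(\pi(C))$, which is where the constant $c$ in the definition of $\pi(C)$ is chosen large enough. The lemma gives $G - h^* = \alpha\, \overline{X}^{\,*}$, where $G$ is the polar of the dual cap of $\base(\pi(C))$ with respect to $z$, $\overline{X}$ is the vertical projection (which is $X$ in the notation of the statement, up to identifying horizontal hyperplanes), and $\alpha = \|xz\|/\|Oz\|$. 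The crucial dual identification is the sentence already in the paper just before Lemma~\ref{lem:polars}: a hyperplane through $z$ misses the interior of $K^*$ iff its polar point lies on $\base$ of the cap of $(K^*)^* = K$ cut by $z^*$ — and the hyperplane $z^*$ through $K$ is precisely the hyperplane bounding $C$, so $G$ is (a translate/rigid copy of) $\base(C)$, or more precisely $\base(C) - h^* = \alpha\, X^*$. I would then argue $\alpha = \Theta(\eps)$: $\|Oz\| = \Theta(1)$ from canonical form, and $\|xz\| = \|Oz\| - \|Ox\|$, where $\|Ox\|$ and the radial distance from $O$ to $\base(\pi(C))$ differ by $\Theta(\delta(x)) = \Theta(\eps)$ by Lemma~\ref{lem:raydist-delta} together with the fact that $x$ lies at ray-distance $\Theta(\eps)$ inside $\base(\pi(C))$; a short computation with the polar-of-a-point definition (the hyperplane through $\base(\pi(C))$ has polar at distance $1/\dist(O,\cdot)$) converts this $\Theta(\eps)$ gap in primal ray-distance into $\alpha = \Theta(\eps)$. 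Setting $c_1\eps, c_2\eps$ to the resulting lower and upper constants times $\eps$ finishes the containment $c_1\eps X^* \subseteq \base(C) - h^* \subseteq c_2\eps X^*$.

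The main obstacle I anticipate is bookkeeping the many ``$\Theta(1)$'' and ``$\Theta(\eps)$'' estimates consistently — in particular, pinning down $\alpha = \|xz\|/\|Oz\|$ to be $\Theta(\eps)$ rather than, say, $\Theta(\eps^2)$ or merely $O(\eps)$, which requires both a lower and an upper bound and therefore genuine use of the $\gamma$-canonical form on \emph{both} $K$ and $K^*$ (here the footnote observation that $K$ in $\gamma$-canonical form implies $K^*$ is too is what makes Lemma~\ref{lem:raydist} and Lemma~\ref{lem:width-delta} available in $K^*$). A secondary subtlety is checking that $\overline{X}^{\,*}$ (polar of the projected base of $\pi(C)$, an honest $(d-1)$-dimensional body) is what appears, and that the apex/base of the \emph{minimal} cap $\pi(C)$ is well-centered enough that replacing ``base of a cap with base passing through $x$'' by ``base of the minimal cap $C(x)$'' only costs constants — this is exactly Lemmas~\ref{lem:mac2-var} and \ref{lem:mac2-var-exp} in $K^*$. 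Everything else is routine manipulation of the polarity definitions and of vertical projections.
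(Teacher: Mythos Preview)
Your proposal correctly obtains the upper containment $\base(C) - h^* \subseteq c_2\eps\, X^*$, but there is a genuine gap in the lower containment. You write that ``$G$ is (a translate/rigid copy of) $\base(C)$,'' identifying the polar of the dual cap of $\base(\pi(C))$ with respect to $z$ with the polar of the dual cap of $K^*$ with respect to $z$. These are not equal: a hyperplane through $z$ that misses the interior of $K^*$ certainly misses the smaller set $\base(\pi(C))\subseteq K^*$, but not conversely, so the dual cap of $K^*$ is only a \emph{subset} of the dual cap of $\base(\pi(C))$. Taking polars gives $\base(C) \subseteq G$, and hence Lemma~\ref{lem:polars} yields only $\base(C) - h^* \subseteq \alpha\, X^*$. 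Your estimate $\alpha = \Theta(\eps)$ is fine, and together these do give the upper bound --- but nothing in your argument produces the reverse inclusion $c_1\eps\, X^* \subseteq \base(C) - h^*$.

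The paper supplies the missing direction by an additional geometric construction: it replaces $B_1 = \base(\pi(C))$ by a constant-factor scaled and translated copy $B_3$ (parallel to $B_1$, at vertical distance $\Theta(\eps)$ below $z$) chosen so that the generalized cone with apex $z$ and base $B_3$ \emph{encloses all of $K^*$}. This forces the dual cap of $B_3$ with respect to $z$ to be contained in the dual cap of $K^*$, hence its polar $G_3$ is contained in $\base(C)$. Applying Lemma~\ref{lem:polars} to $B_3$ then gives $c_1\eps\, X^* \subseteq G_3 - h^* \subseteq \base(C) - h^*$, using that the vertical projection of $B_3$ is a constant-factor dilation of $X$. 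Verifying that such a cone from $z$ encloses $K^*$ is exactly where the constant $c$ in the definition of $\pi(C)$ must be chosen large enough, and this step is absent from your outline; the well-centeredness observations via Lemmas~\ref{lem:mac2-var} and~\ref{lem:mac2-var-exp} do not substitute for it.
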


\begin{proof}
Let $z'$ be the point of intersection of the vertical ray $O x$ with $\bd K^*$. Recall that $K$ is in $\gamma$-canonical form for constant $\gamma$, $\width(C) = \eps$, $z$ is the polar of the hyperplane passing through $\base(C)$, and $x \in K^*$ is a point on the ray $O z$ such that $\delta(x) = \eps/c$. It is now easy to show from the definition of the polar transformation that $z$ lies on the ray $O x$ (outside $K^*$) such that $c'_1 \eps \le \|z' z\|  \le c'_2 \eps$, for suitable constants $c'_1$ and $c'_2$ depending on $\gamma$.
Let $B_1 = \base(\pi(C))$. Let $D$ and $D_1$ be the dual caps of $K^*$ and $B_1$, respectively, with respect to $z$. Recall that the polar of $D$ is $\base(C)$. Let $G_1$ be the polar of $D_1$. Since $B_1 \subseteq K^*$, we have $D \subseteq D_1$. It follows that $\base(C) \subseteq G_1$. Applying Lemma~\ref{lem:polars}, we obtain $G_1 - h^* = (\|x z\|/\|O z\|) X^*$. Since $\|x z\| = \Theta(\eps)$ and $\|O z\| = \Theta(1)$, it follows that $\base(C) - h^* \subseteq c_2 \eps X^*$, for a suitable constant $c_2$. This establishes one part of the desired inequality.

We establish the other part of the inequality using a similar approach. The key insight is that a scaled and translated version of $B_1$ has the property that its dual cap with respect to $z$ is a subset of $D$, the dual cap of $K^*$ with respect to $z$. Towards this end, let $\Psi_1$ be the (infinite) generalized cone defined by rays shot from $z'$ through $B_1$ (see Figure~\ref{f:dualcaps2}). By convexity, $\Psi_1$ encloses $K^* \setminus \pi(C)$. Consider the generalized (finite) cone $\Gamma_1$ with apex $O$ and base $B_1$. Scale this cone about $O$ by a suitable factor $f_1$ to produce a cone $\Gamma_2$ whose base (call it $B_2$) is tangent to $K^*$. Since $\delta(x) = \eps/c$ and, by Lemma~\ref{lem:width-delta}, $\width(\pi(C)) = O(\delta(x))$, it is easy to see that $f_1 \le 1 + c'_3(\eps/c)$ for a suitable constant $c'_3$. By convexity, $\Gamma_2$ encloses the entirety of $\pi(C)$. 

\begin{figure*}[htbp]
  \centering
  \includegraphics[scale=.65]{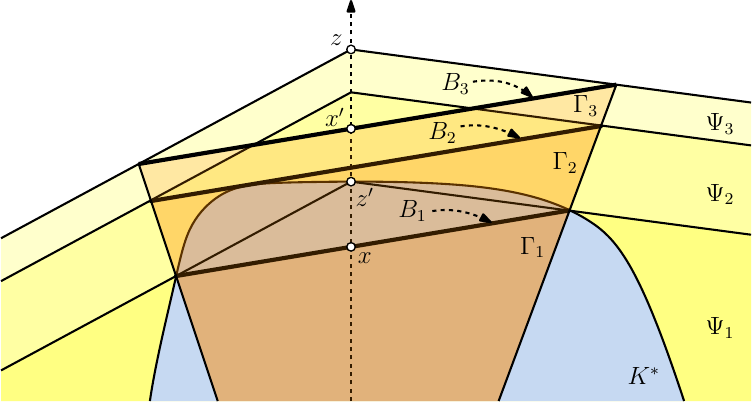}
  \caption{\label{f:dualcaps2}Definitions in the proof of Lemma~\ref{lem:dualcaps} (origin not shown).}
\end{figure*}

We now expand $\Psi_1$ by this same factor $f_1$ about $O$ to produce $\Psi_2$. Clearly, $\Psi_2$ encloses both $K^* \setminus \pi(C)$ and $\pi(C)$ itself, so it encloses all of $K^*$. For sufficiently large constant $c$, we can show that the apex of $\Psi_2$ lies below $z$. This is because the distance between $z'$ and the apex of $\Psi_2$ is $(f_1-1) \|O z'\| \le c'_3(\eps/c) \|O z'\| \le c'_3 (\eps/c) (1/\sqrt{\gamma})$. Recall that $z$ is at distance at least $c'_1 \eps$ from $z'$. Thus, by choosing $c$ larger than $c'_3/(c'_1 \sqrt{\gamma})$, we can ensure that the apex of $\Psi_2$ lies below $z$. We scale $\Psi_1$ about $O$ such that the apex of the resulting cone $\Psi_3$ is $z$. The scaling factor $f_2$ is $\|O z\| / \|O z'\| = 1+\Theta(\eps)$. We scale $\Psi_1$ and $B_1$ by the same factor $f_2$ about $O$ to obtain $\Psi_3$ and $B_3$, respectively. Clearly $\Psi_3$ also encloses the entirety of $K^*$. Further, $B_3$ is at vertical distance $\Theta(\eps)$ below $z$. To see this, let $x'$ be the point of intersection of $O z$ with $B_3$. Note that $\|x' z\| = f_2 \|x z'\| = \Theta(\eps)$. 

Let $D_3$ be the dual cap of $B_3$ with respect to $z$, and let $G_3$ be the polar of $D_3$. Since $D_3 \subseteq D$, it follows that $G_3 \subseteq \base(C)$. Applying Lemma~\ref{lem:polars}, we obtain $G_3 - h^* = (\|x' z\| / \|O z\|)(X')^*$, where $X'$ is the vertical projection of $B_3$ (and so is a constant-factor expansion of $X$). Since $\|x' z\| = \Theta(\eps)$ and $\|O z\| = \Theta(1)$, it follows that $G_3 - h^* = c_1 \eps X^*$, for some constant $c_1$. Thus, we have $c_1 \eps X^* \subseteq \base(C) - h^*$, which is the other part of our desired inequality.
\end{proof}

Given this polar-like relationship between $C$ and $\pi(C)$, we can apply the Mahler volume to bound the product of their volumes.

\begin{lemma} \label{lem:mahler-caps}
Let $C$ be as defined in Lemma~\ref{lem:dualcaps}. Then $\vol(C) \cdot \vol(\pi(C)) = \Theta(\eps^{d+1})$.
\end{lemma}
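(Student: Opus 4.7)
The plan is to reduce the product of cap volumes to a product of base areas using $\vol(\text{cap}) = \Theta(\text{width} \cdot \text{base area})$, invoke Lemma~\ref{lem:dualcaps} to express $\area(\base(C))$ in terms of $\area(X^*)$, and close with the Mahler volume bound (Lemma~\ref{lem:centroid-mahler}) applied to the $(d-1)$-dimensional body $X$.

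First I would observe that both caps have width $\Theta(\eps)$: for $C$ by hypothesis, and for $\pi(C)$ by Lemma~\ref{lem:width-delta} applied to the minimum-volume cap containing $x$ (so $x$ is the centroid of its base, by the variational argument recalled in Section~\ref{s:prelim-macbeath}) together with $\delta(x) = \eps/c$. Convexity yields $w A / d \le \vol(\text{cap}) \le w A$ for any cap of width $w$ with base of $(d-1)$-area $A$ (upper bound by enclosing in the cylinder over the base; lower bound by inscribing a cone with apex at the cap's apex), so
\[
    \vol(C) \cdot \vol(\pi(C)) = \Theta\bigl(\eps^{2} \cdot \area(\base(C)) \cdot \area(\base(\pi(C)))\bigr).
\]
Because polarity preserves $\gamma$-canonical form (the footnote following its definition), Lemma~\ref{lem:raydist} applied in $K^*$ bounds below by $\gamma/2$ the cosine of the angle between $O x$ and the outward normal to $\base(\pi(C))$, so vertical projection scales $(d-1)$-dimensional area by a constant factor and $\area(\base(\pi(C))) = \Theta(\area(X))$. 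Taking $(d-1)$-volumes in the sandwich $c_1 \eps X^* \subseteq \base(C) - h^* \subseteq c_2 \eps X^*$ furnished by Lemma~\ref{lem:dualcaps} gives $\area(\base(C)) = \Theta(\eps^{d-1} \area(X^*))$. Combining,
\[
    \vol(C) \cdot \vol(\pi(C)) = \Theta(\eps^{d+1}) \cdot \area(X) \cdot \area(X^*).
\]

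The remaining step, which I expect to be the main obstacle, is showing $\area(X) \cdot \area(X^*) = \Theta(1)$. Lemma~\ref{lem:centroid-mahler} provides such a bound only when the polar is taken with respect to the centroid of $X$, so the key observation to establish is that $O$ is itself the centroid of $X$. Indeed, $x$ is the centroid of $\base(\pi(C))$ by the variational characterization of minimum-volume caps; since vertical projection is linear, it sends this centroid to the centroid of $X$; and because $x$ lies on the vertical ray through $O$, its vertical projection is exactly $O$. Hence the polar $X^*$ appearing in Lemma~\ref{lem:dualcaps} coincides with the polar of $X$ about its centroid, Lemma~\ref{lem:centroid-mahler} applies, and the claim follows.
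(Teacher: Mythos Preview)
Your argument is correct and follows essentially the same route as the paper: reduce cap volumes to base areas via the $\Theta(\eps)$ widths, use Lemma~\ref{lem:raydist} in $K^*$ to compare $\area(\base(\pi(C)))$ with $\area(X)$, use Lemma~\ref{lem:dualcaps} to get $\area(\base(C)) = \Theta(\eps^{d-1}\area(X^*))$, and finish with Lemma~\ref{lem:centroid-mahler} after observing that $O$ is the centroid of $X$. Your justification of this last point (linear projection preserves centroids, and $x$ projects to $O$) is exactly the content the paper compresses into one sentence.
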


\begin{proof}
Recall that $K$ is in $\gamma$-canonical form for constant $\gamma$. Thus $K^*$ is also in $\gamma$-canonical form for constant $\gamma$ and, by Lemma~\ref{lem:raydist}, $\area(\base(\pi(C))) = \Theta(\area(X))$. By Lemma~\ref{lem:dualcaps}, $\base(C)$ is sandwiched between two scaled copies of $X^*$, where the scaling factor is $\Theta(\eps)$. Thus $\area(\base(C)) = \Theta(\eps^{d-1} \area(X^*))$. We have 
\[ 
    \area(\base(C)) \cdot \area(\base(\pi(C))) 
        ~ = ~ \Theta(\eps^{d-1} \area(X^*) \cdot \area(X)) = \Theta(\eps^{d-1}). 
\]
In the last step we have used the fact that for any minimum-volume cap containing a point $x$, $x$ is the centroid of the base of the cap. Thus, it follows from the definition of $\pi(C)$ and $X$ that $X^*$ is the polar of $X$ with the centroid of $X$ as origin, and so $\area(X^*) \cdot \area(X) = \Theta(1)$ by Lemma~\ref{lem:centroid-mahler}. Finally, the lemma follows by noting that the caps $C$ and $\pi(C)$ each have $\Theta(\eps)$ width and thus their volumes are $\Theta(\eps)$ times the areas of their respective bases.
\end{proof}

Next, we extend the above product bound on the volumes of caps $C$ and $\pi(C)$ in two respects. First, the bound applies to the product of the volumes of $C$ and a shrunken Macbeath region $R$, where $R$ serves as a proxy for the $\pi(C)$. Second, unlike $\pi(C)$, which is uniquely determined from $C$, there is considerable flexibility in the choice of $R$.

\begin{lemma} \label{lem:mahler-mac}
Given a convex body $K$ in $\gamma$-canonical form for a constant $\gamma$, let $C$ be a horizontal $\eps$-width cap of $K$ lying above the origin. For a sufficiently large constant $c$, let $y \in K^*$ be a point at distance $\eps/c$ from $\bd K^*$ such that the ray from the origin shot vertically upwards intersects the Macbeath region $R = M'(y)$ of $K^*$. Then $\vol(C) \cdot \vol(R) = \Theta(\eps^{d+1})$.
\end{lemma}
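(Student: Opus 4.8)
The plan is to deduce the statement from Lemma~\ref{lem:mahler-caps}. Recall that $\pi(C)$ is a minimum-volume cap of $K^*$ through the point $x$ lying on the upward vertical ray at distance $\eps/c$ from $\bd K^*$; write $\pi(C) = C(x)$, using the minimal-cap notation of the preliminaries. Since Lemma~\ref{lem:mahler-caps} already gives $\vol(C)\cdot\vol(C(x)) = \Theta(\eps^{d+1})$, it suffices to prove the single estimate $\vol(R) = \Theta(\vol(C(x)))$, where $R = M'(y)$. Throughout, all distances, caps, and Macbeath regions are taken inside $K^*$, which is itself in $\gamma$-canonical form (the canonical form is self-polar), so every preliminary lemma about bodies in canonical form applies to it.

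The estimate $\vol(R)=\Theta(\vol(C(x)))$ follows from a short chain of volume comparisons. First, pick a point $x'$ at which the vertical ray meets $R = M'(y)$. Since $x'\in M'(y)$, Lemma~\ref{lem:core-delta} gives $\delta(x')=\Theta(\delta(y))=\Theta(\eps/c)$, and since $x'$ lies in both $M'(x')$ and $M'(y)$, Lemma~\ref{lem:mac-mac} yields $M'(y)\subseteq M^{4/5}(x')$ and $M'(x')\subseteq M^{4/5}(y)$, hence $\vol(R)=\vol(M'(y))=\Theta(\vol(M(x')))$. Second, because $\delta(x)$ and $\delta(x')$ are both $O(\eps)$, the minimal caps $C(x)$ and $C(x')$ have width at most $\Delta_0$, so Lemmas~\ref{lem:mac2-var} and \ref{lem:mac-cap2-var} together with Lemma~\ref{lem:cap-exp} give $\vol(M(x))=\Theta(\vol(C(x)))$ and $\vol(M(x'))=\Theta(\vol(C(x')))$. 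Thus the whole lemma reduces to showing $\vol(C(x))=\Theta(\vol(C(x')))$.

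For this last step — the crux — the key fact is that $x$ and $x'$ lie on a common ray from $O$ at ray-distances $\Theta(\eps/c)$ (Lemma~\ref{lem:raydist-delta}), so $\|x x'\| = |\ray(x)-\ray(x')| = O(\eps/c)$, which is merely of the same order as $\delta(x)$, not negligible compared to it. Let $\hat r$ be the common outward radial direction and, for either minimal cap, orient the base normal $n$ toward the apex. Since the cap is small and $O$ lies at distance $\ge\sqrt{\gamma}$ from $\bd K^*$, the origin lies strictly on the base's far side, so $\langle\hat r,n\rangle>0$, and by Lemma~\ref{lem:raydist} in fact $\langle\hat r,n\rangle\ge\gamma/2$. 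Now suppose $x'$ is the one of $x,x'$ farther from $O$ (the other case is symmetric). Then $\langle x',n_x\rangle\ge\langle x,n_x\rangle$ where $n_x$ is the base normal of $C(x)$, and $x'\in K^*$, so $x'\in C(x)$ and hence $\vol(C(x'))\le\vol(C(x))$ by minimality of $C(x')$. For the reverse inequality, the distance from $x$ to the supporting hyperplane at the apex of $C(x')$ is $\width(C(x')) + \|x x'\|\,\langle\hat r,n_{x'}\rangle \le \width(C(x')) + \|x x'\|$; since $\width(C(x'))\ge\delta(x')=\Omega(\eps/c)$ while $\|x x'\|=O(\eps/c)$, this is $O(\width(C(x')))$, so $x\in C(x')^{\rho}$ for a constant $\rho$, and Lemma~\ref{lem:cap-exp} gives $\vol(C(x))\le\vol(C(x')^{\rho})=O(\vol(C(x')))$. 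Chaining the comparisons, $\vol(R)=\Theta(\vol(M(x')))=\Theta(\vol(C(x')))=\Theta(\vol(C(x)))=\Theta(\vol(\pi(C)))$, and multiplying by $\vol(C)$ and invoking Lemma~\ref{lem:mahler-caps} completes the proof.

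I expect the main obstacle to be exactly this final estimate $\vol(C(x))=\Theta(\vol(C(x')))$. One cannot invoke a generic ``nearby points have comparable Macbeath regions'' principle, since $\|x x'\|$ is only comparable to $\delta(x)$ rather than much smaller, and Macbeath regions can be arbitrarily thin. The way around this is to pass to minimal caps and exploit the structural fact that a minimal cap is $K^*$ intersected with a halfspace whose bounding hyperplane has a strictly positive radial component in its normal: sliding a base point radially outward along the ray stays inside the cap (this gives one inequality for free), and the reverse direction costs only a constant-factor cap expansion because the radial displacement is of the same order as the cap's own width.
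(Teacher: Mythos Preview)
Your proof is correct, but it takes an unnecessary detour compared to the paper. The paper does not fix $x$ at distance exactly $\eps/c$ and then introduce a separate point $x'\in M'(y)$; instead it simply lets $x$ be any point of $M'(y)$ on the vertical ray. By Lemma~\ref{lem:core-delta} this $x$ satisfies $\delta(x)=\Theta(\delta(y))=\Theta(\eps/c)$, and since Lemma~\ref{lem:mahler-caps} (through Lemma~\ref{lem:dualcaps}) goes through for any sufficiently large constant in place of $c$, it applies directly to this $x$ and its minimal cap $E$, yielding $\vol(C)\cdot\vol(E)=\Theta(\eps^{d+1})$. The remaining comparison $\vol(R)=\Theta(\vol(E))$ then follows from Lemmas~\ref{lem:mac-cap2}, \ref{lem:mac2-var}, and \ref{lem:mac-mac} exactly as you do, and the ``crux'' step relating $C(x)$ to $C(x')$ never arises. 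Your radial-sliding argument (using Lemma~\ref{lem:raydist} to show that moving outward along the ray stays inside a small minimal cap, and that moving inward lands in a constant-factor expansion) is correct and is a pleasant self-contained fact, but here it is doing work that can be absorbed into the flexibility of the constant $c$ already present in the earlier lemmas.
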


\begin{proof}
Let $x$ be a point in the intersection of $M'(y)$ with the ray shot vertically upwards from $O$, and let $E$ denote the minimum-volume cap containing $x$. By Lemma~\ref{lem:core-delta}, $4 \delta(y) / 5 \le \delta(x) \le 4 \delta(y) / 3$. It follows from Lemma~\ref{lem:mahler-caps} that $\vol(C) \cdot \vol(E) = \Theta(\eps^{d+1})$. By Lemmas~\ref{lem:mac-cap2} and \ref{lem:mac2-var}, we have $\vol(M'(x)) = \Theta(\vol(E))$. As $M'(x)$ intersects $R$, by Lemma~\ref{lem:mac-mac}, a constant factor expansion of $R$ encloses $M'(x)$ and vice versa, so $\vol(R) = \Theta(\vol(M'(x)))$. Thus $\vol(C) \cdot \vol(R) = \Theta(\eps^{d+1})$, as desired.
\end{proof}

Intuitively, the points inside a scaled Macbeath region $M'(x)$ are similar in many respects. Two points within a shrunken Macbeath region in the polar body can be thought of as generating similar caps in the original body in the sense of satisfying a ``sandwiching'' property. This property will be used in Section~\ref{s:volume-sensitive-bound} to allow us to focus on a discrete set of caps as defined by a finite collection of Macbeath regions. This will be proved in Lemma~\ref{lem:sandwich} below. The following technical lemma will be useful for proving it.

\begin{lemma} \label{lem:guarding}
Given a convex body $K$ in $\gamma$-canonical form for a constant $\gamma$, consider a Macbeath region $R = M'(y)$ for $K$ and two rays $r$ and $r'$ shot from the origin through $R$ (see Figure \ref{f:guarding}(a)). Let $z \not\in K$ be a point on $r$. Let $h$ be a hyperplane passing through $z$ that does not intersect $K$. If $\ray(z)$ and $\delta(y)$ are sufficiently small (as a function of $\gamma$), then $r'$ intersects $h$, and letting $z' = r' \cap h$, we have $\ray(z') = O(\ray(z) + \delta(y))$.
\end{lemma}

\begin{figure}[htbp]
  \centering
  \includegraphics[scale=.65,page=1]{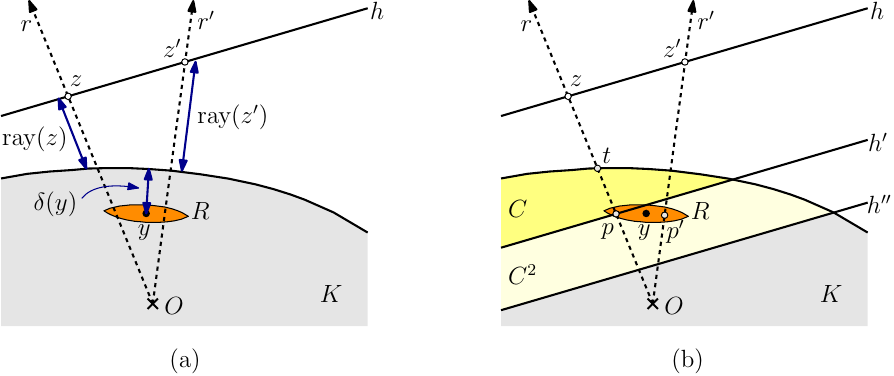}
  \caption{\label{f:guarding} (a)~Statement and (b)~proof of Lemma~\ref{lem:guarding}.}
\end{figure}

\begin{proof}
Consider a hyperplane $h'$ that is parallel to $h$ and passes through a point $p \in r \cap R$ (see Figure \ref{f:guarding}(b)). We claim that the distance between $h$ and $h'$ is $O(\ray(z) + \delta(y))$. To see this, let $t$ be the point of intersection of $r$ with $\bd K$. By Lemma~\ref{lem:core-delta}, $\delta(p) = O(\delta(y))$. Applying Lemma~\ref{lem:raydist-delta}, we have $\|p t\| = \ray(p) = O(\delta(p))$. It follows that $\|p z\| = \|p t\| + \|t z\| = \ray(p) + \ray(z) = O(\ray(z) + \delta(y))$, which proves the claim.

Let $C$ be the cap induced by $h'$. Since $C$ intersects $R$, by Lemma~\ref{lem:mac-cap2} the cap $C^2$ encloses $R$. Let $h''$ denote the hyperplane passing through the base of $C^2$. Observe that the distance between $h''$ and $h$ is no more than twice the distance between $h'$ and $h$, and is thus $O(\ray(z) + \delta(y))$. 

By our hypothesis that $\ray(z)$ and $\delta(y)$ are sufficiently small, we may assume that $\width(C^2)$ is similarly small (specifically at most $\Delta_0$ of Lemma~\ref{lem:raydist}), and hence the ray $O p'$ passes through the base of $C^2$. This implies that this ray intersects $h$. Let $p'$ be any point in $r' \cap R$. It follows easily from Lemma~\ref{lem:raydist} that $\|p'z'\|$ is at most a constant times the distance between $h''$ and $h$. Thus, $\|p'z'\| = O(\ray(z) + \delta(y))$. It follows that $\ray(z') = O(\ray(z) + \delta(y))$, as desired.
\end{proof}

In order to establish a relationship between points in $K$ and its polar $K^*$, we show next that if two rays stab a given shrunken Macbeath region near $K^*$'s boundary, the points lying on these rays and just outside of $K^*$ induce similar caps in $K$.

\begin{lemma} \label{lem:sandwich}
Consider a convex body $K$ in $\gamma$-canonical form for a constant $\gamma$. For any choice of constants $c_1$ and $c_2$, there exist constants $c_0$ and $\sigma$, such that the following holds. For all sufficiently small $\eps$ (depending on $\gamma$, $c_1$, and $c_2$) and any $y \in K^*$ such that $\delta(y) \leq \eps/c_0$, there exists a point $z$ external to $K^*$ on the ray $O y$ such that the following property holds. Let $E$ be the cap of $K$ induced by the hyperplane $z^*$, let $r$ be any ray from the origin that passes through $M'(y)$, and let $C$ be any cap of $K$ whose base is orthogonal to $r$ such that $c_1 \eps \le \width(C) \le c_2 \eps$. Then $E^{1/\sigma} \subseteq C \subseteq E$ (see Figure \ref{f:sandwich}).
\end{lemma}

\begin{figure*}[htbp]
  \centering
  \includegraphics[scale=.65]{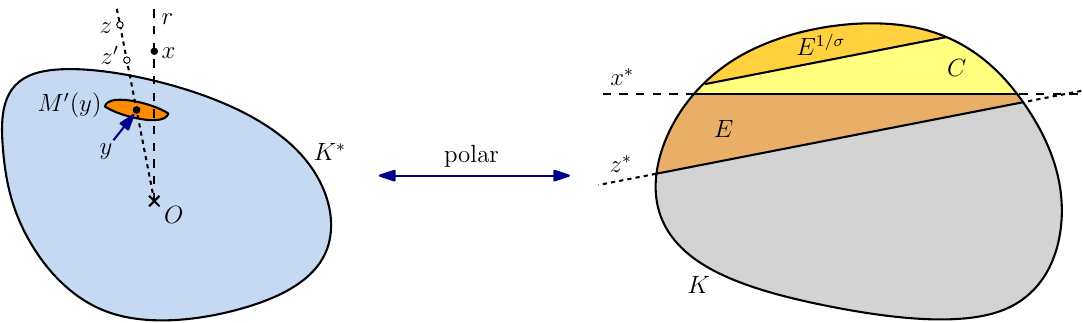}
  \caption{\label{f:sandwich}Statement and proof of Lemma~\ref{lem:sandwich}.}
\end{figure*}

\begin{proof}
Since $K$ is in $\gamma$-canonical form, it is easy to see that the polar of any hyperplane that induces a cap of $K$ of width between $c_1\eps$ and $c_2\eps$ is a point outside $K^*$, whose ray-distance with respect to $K^*$ is at least $c_1' \eps$ and at most $c_2' \eps$ for suitable constants $c_1'$ and $c_2'$ (depending only on $c_1$, $c_2$, $d$, and $\gamma$). Let $\beta$ be a sufficiently large constant. We assume that the constant $c_0$ in the statement of this lemma is $\beta/c_1'$. In other words, $\delta(y) = c_1' \eps/\beta$.

Let $z, z' \not\in K^*$ be points on the ray $O y$ such that $\ray(z) = c_2' \beta \eps$ and $\ray(z') = c_1' \eps/\beta$. Let $E$ and $E'$ be caps of $K$ induced by hyperplanes $z^*$ and $(z')^*$, respectively. Let $r$ be the ray given in the statement of the lemma (drawn vertically in Figure~\ref{f:sandwich}), and let $C$ be as described in the lemma. Let $x$ be the polar of the hyperplane passing through the base of $C$. Recall that $c_1' \eps \le \ray(x) \le c_2' \eps$. 

We claim that $E' \subseteq C \subseteq E$. Since $\ray(z') = c_1' \eps/\beta$, $\delta(y) = c_1' \eps/\beta$ and $\ray(x) \ge c_1' \eps$, it follows that $\ray(x) \ge (\beta/2)(\ray(z') + \delta(y))$. Since $\beta$ was chosen to be a sufficiently large constant, by Lemma~\ref{lem:guarding}, any hyperplane passing through $z'$ that does not intersect $K^*$ separates $K^*$ from $x$. Thus, taking the polar, it follows that $\base(E') \subseteq C$ and therefore $E' \subseteq C$. Similarly, since $\ray(x) \le c_2' \eps$, $\delta(y) = c_1' \eps/\beta$, and 
$\ray(z) = c_2' \beta \eps$, it follows that $\ray(z) \ge (\beta/2)(\ray(x) + \delta(y))$. Consequently, by Lemma~\ref{lem:guarding}, any hyperplane passing through $x$ that does not intersect $K^*$ separates $K^*$ from $z$. Thus, taking the polar, $C \subseteq E$. Putting it together, we obtain $E' \subseteq C \subseteq E$.

Let $w$ denote the distance from $O$ to $\bd K^*$ along the ray $O y$. By basic properties of the polar transformation, we have
\[
    \width(E) ~ = ~ \left(\frac{1}{w} - \frac{1}{w + c_2'\beta \eps}\right)  \qquad\hbox{and}\qquad
    \width(E') ~ = ~ \left(\frac{1}{w} - \frac{1}{w + c_1'\eps/\beta}\right).
\]
It is now easy to verify that $\width(E) / \width(E') = O(\beta^2)$, which is bounded above by some constant $\sigma$. Thus $E^{1/\sigma} \subseteq E'$. Since $E' \subseteq C \subseteq E$, it follows that $E^{1/\sigma} \subseteq C \subseteq E$. 
\end{proof}

\section{Volume of Macbeath Regions}
\label{s:volume-sensitive-bound}

In this section we present one of the main results of the paper, a volume-sensitive bound on the number of disjoint shrunken Macbeath regions associated with $\eps$-width caps. This result has the side benefit of leading to a more elegant proof of the upper bound on the total number of disjoint Macbeath regions associated with $\eps$-width caps (see, e.g, \cite{AFM17a}). Our proofs are based on the relationship we developed in the prior section between a cap in a convex body and the corresponding cap in its polar.

\begin{lemma} 
\label{lem:mac-cover}
Let $K$ be a convex body in $\gamma$-canonical form for constant $\gamma$. There exists a set $\RR$ of shrunken Macbeath regions $M'(x)$ of $K$, where $\delta(x) = \eps$, such that the following properties hold:
\begin{enumerate}
    \item[(i)] For any $v > 0$, the number of Macbeath regions in $\RR$ of volume $\Omega(v)$ is $O(\eps / v)$.
    \item[(ii)] Any ray emanating from the origin  intersects some Macbeath region of $\RR$.
\end{enumerate}  
\end{lemma}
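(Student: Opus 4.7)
The plan is to construct $\RR$ via a maximal packing argument carried out at a slightly smaller Macbeath scale than $1/5$, then output the shrunken Macbeath regions. Fix $\lambda = 1/20$ and let $S = \{y \in K : \delta(y) = \eps\}$ be the hypersurface of points at distance exactly $\eps$ from $\bd K$. For $\eps < \sqrt{\gamma}$ the set $S$ is nonempty and separates the origin from $\bd K$, since $K$ contains a ball of radius $\sqrt{\gamma}$ about $O$ while $\delta = 0$ on $\bd K$. Choose a maximal collection of points $\{x_i\}_i \subseteq S$ such that the regions $M^{\lambda}(x_i)$ are pairwise disjoint, and set $\RR = \{M'(x_i)\}_i$. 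Working with two scales (the smaller $\lambda = 1/20$ for disjointness, the larger $1/5$ for the output) is the key design choice that will make both properties hold simultaneously.

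For property~(ii), consider any ray emanating from $O$. Since $\delta(O) \ge \sqrt{\gamma} > \eps$ and $\delta = 0$ on $\bd K$, by the intermediate value theorem the ray crosses $S$ at some point $y$. By maximality of the packing, $M^{\lambda}(y) \cap M^{\lambda}(x_i) \neq \emptyset$ for some $i$. Since $\lambda = 1/20 \le 1/5$, Lemma~\ref{lem:mac-mac} gives $M^{\lambda}(y) \subseteq M^{4\lambda}(x_i) = M'(x_i)$. As $y \in M^{\lambda}(y)$, we obtain $y \in M'(x_i)$, so the ray intersects $M'(x_i)$ at $y$.

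For property~(i), the regions $M^{\lambda}(x_i)$ are disjoint by construction, and each has volume $\vol(M^{\lambda}(x_i)) = (1/4)^d \vol(M'(x_i))$, a constant fraction of $\vol(M'(x_i))$. So the $M'(x_i) \in \RR$ of volume $\Theta(v)$ are in bijection with disjoint regions of volume $\Theta(v)$. By Lemma~\ref{lem:core-delta}, every point of $M'(x_i)$ (and hence of $M^{\lambda}(x_i)$) has $\delta$-distance in $[4\eps/5, 4\eps/3]$, so all these disjoint regions lie in the shell $L = \{p \in K : \delta(p) \le 4\eps/3\}$. Because $K$ is in $\gamma$-canonical form, $K$ is sandwiched between balls of constant radius, so $\bd K$ has $O(1)$ surface area and $\vol(L) = O(\eps)$ (equivalently, $\vol(K) - \vol(K_{4\eps/3}) = O(\eps)$ where $K_\eps$ is the inner parallel body). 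A standard volume-packing inequality then yields at most $O(\eps/v)$ disjoint regions of volume $\Theta(v)$, establishing~(i).

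The main subtlety is reconciling the two properties, which pull in opposite directions: directly packing $M'(x_i)$ would give~(i) at once but leave~(ii) out of reach, since disjoint $M'$'s cannot cover $S$; conversely, any cover would destroy the disjointness needed for~(i). The separation of scales (packing at $\lambda = 1/20$, output at $1/5$), combined with the expansion-by-four statement of Lemma~\ref{lem:mac-mac}, is precisely what permits both properties to coexist. A secondary point to take care of is the shell-volume bound $\vol(L) = O(\eps)$, which uses only the bounded diameter and in-radius guaranteed by $\gamma$-canonical form.
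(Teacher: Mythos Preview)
Your proof is correct and is essentially identical to the paper's own argument: both pack at scale $\lambda = 1/20$ on the level set $\{\delta = \eps\}$, output the $M'$-regions at the centers, use Lemma~\ref{lem:mac-mac} (with the factor-of-four expansion) to establish the ray-covering property~(ii), and use disjointness of the $M^{1/20}$-regions inside an $O(\eps)$-volume shell for the packing bound~(i). Your explicit discussion of why the two-scale trick is needed is a nice addition but not a departure in method.
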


\begin{proof}
Let $\RR'$ be a maximal set of disjoint Macbeath regions $M^{\lambda}(x)$, where $\lambda = 1/20$ and $\delta(x) = \eps$. We scale each Macbeath region of $\RR'$ by a factor of 4 about its center to obtain the set $\RR$. 

It is easy to see from Lemma~\ref{lem:core-delta} that every point in the Macbeath regions of $\RR'$ lies within distance $O(\eps)$ of $\bd K$, and so these Macbeath regions are fully contained in a region of volume $O(\eps)$. For any $v$, it follows from disjointness that $\RR'$ contains $O(\eps / v)$ Macbeath regions of volume $\Omega(v)$. Since the Macbeath regions of $\RR$ are obtained by scaling the Macbeath regions of $\RR'$ by a constant factor, this proves property (i).

To prove (ii), consider any ray emanating from the origin. Let $y$ be a point on this ray such that $\delta(y) = \eps$. Consider the Macbeath region $R_1 = M^{\lambda}(y)$. Since $\RR'$ is maximal, $R_1$ must intersect a Macbeath region $R_2 \in \RR'$. Thus, by Lemma \ref{lem:mac-mac}, the 4-factor expansion of $R_2$ fully encloses $R_1$. By construction, the 4-factor expansion of $R_2$ is an element $R \in \RR$. As $R$ encloses $R_1$, $R$ encloses $y$ as well. Thus, any ray emanating from the origin must intersect some element of $\RR$.
\end{proof}

We are now equipped to prove the main result of this section.

\begin{theorem}
\label{thm:vol-sensitive-bound}
Let $K \subset \RE^d$ be a convex body in $\gamma$-canonical form for constant $\gamma$, and let $v > 0$. Let $\CC$ be a set of caps of $K$ of width $\Theta(\eps)$ and volume $\Theta(v)$, such that the Macbeath regions $M'(x)$ centered at the centroids $x$ of the bases of these caps are disjoint. Then
\[
    |\CC| 
        ~ = ~ O\left(\min\left(\frac{\eps}{v}, \frac{v}{\eps^d} \right) \right).
\]
\end{theorem}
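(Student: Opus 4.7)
I plan to establish the two halves of the bound, $|\CC| = O(\eps/v)$ and $|\CC| = O(v/\eps^d)$, separately; the stated minimum follows immediately. The first is a direct packing argument in $K$. Each centroid $x_i$ of $\base(C_i)$ lies at distance $O(\eps)$ from $\bd K$ (since it sits on the base of a cap of width $\Theta(\eps)$), so each $M'(x_i)$ is contained in the boundary shell $\{p \in K \ST \delta(p) = O(\eps)\}$, which has volume $O(\eps)$ for $K$ in $\gamma$-canonical form. Combining $\vol(C_i) = \Theta(v)$ with Lemma~\ref{lem:mac-cap2-var} ($M(x_i) \subseteq C_i^2$) and Lemma~\ref{lem:mac2-var} ($C_i \subseteq M^{3d}(x_i)$) gives $\vol(M'(x_i)) = \Theta(v)$. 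Disjointness of the $M'(x_i)$ within an $O(\eps)$-volume shell then yields $|\CC| = O(\eps/v)$.

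For the $O(v/\eps^d)$ bound, the approach exploits the polar framework of Section~\ref{s:polar-caps}. I apply Lemma~\ref{lem:mac-cover} to $K^*$ (also in $\gamma$-canonical form) with distance parameter $\eps/c_0$, where $c_0$ is the constant required by Lemma~\ref{lem:sandwich}. This produces a family $\RR^*$ of shrunken Macbeath regions $M'(y)$ of $K^*$ such that every ray from the origin meets at least one member. For each $C \in \CC$, let $r_C$ be the ray from $O$ orthogonal to $\base(C)$ and entering the cap; by property~(ii) of Lemma~\ref{lem:mac-cover} this ray intersects some $R^* = M'(y) \in \RR^*$, and I assign $C$ to this $R^*$. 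Lemma~\ref{lem:mahler-mac} guarantees $\vol(C) \cdot \vol(R^*) = \Theta(\eps^{d+1})$, so every $R^*$ that receives an assignment has volume $\Theta(\eps^{d+1}/v)$. Property~(i) of Lemma~\ref{lem:mac-cover} therefore bounds the number of assigned regions by $O(\eps / (\eps^{d+1}/v)) = O(v/\eps^d)$.

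It remains to show that at most $O(1)$ caps of $\CC$ are assigned to any single $R^* = M'(y)$. For each such $C_i$, Lemma~\ref{lem:sandwich} produces a point $z_i$ on $Oy$ exterior to $K^*$ and a cap $E_i$ of $K$ bounded by $z_i^*$ satisfying $E_i^{1/\sigma} \subseteq C_i \subseteq E_i$. All the $E_i$ have bases orthogonal to the common direction $Oy$, hence are nested; let $E^\star$ denote the largest. Combining $\vol(C_i) = \Theta(v)$ with the sandwich inclusions and Lemma~\ref{lem:cap-exp} shows $\vol(E_i) = \Theta(v)$, so $\vol(E^\star) = \Theta(v)$ as well. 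Since $x_i \in C_i \subseteq E^\star$, Lemma~\ref{lem:mac-cap2-var} gives $M'(x_i) \subseteq M(x_i) \subseteq (E^\star)^2$, and Lemma~\ref{lem:cap-exp} gives $\vol((E^\star)^2) = O(v)$; disjointness of the $M'(x_i)$ (each of volume $\Theta(v)$) then caps the multiplicity at $O(1)$, yielding $|\CC| = O(v/\eps^d)$. The principal obstacle is this final step: converting the directional sandwiching of Lemma~\ref{lem:sandwich} into a volumetric packing statement inside a single cap $E^\star$ of $K$, which is what lets the disjointness hypothesis in the primal body extract a constant multiplicity per polar Macbeath region.
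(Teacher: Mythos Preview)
Your proposal is correct and follows essentially the same approach as the paper's proof: the $O(\eps/v)$ bound via a packing argument in the $O(\eps)$-shell, and the $O(v/\eps^d)$ bound by passing to Macbeath regions in $K^*$ via Lemma~\ref{lem:mac-cover}, invoking the Mahler-type relation (Lemma~\ref{lem:mahler-mac}) to pin their volumes, and using Lemma~\ref{lem:sandwich} to bound the multiplicity per polar region by a packing of the $M'(x_i)$ inside a single cap of volume $O(v)$. The only cosmetic difference is that the paper fixes one canonical cap $E$ per polar region $R'$ (the point $z$ in the proof of Lemma~\ref{lem:sandwich} depends only on $y$ and $\eps$), whereas you allow a separate $E_i$ for each $C_i$ and take the largest; since all these caps have parallel bases and volume $\Theta(v)$, the two variants are equivalent.
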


\begin{proof}
Let $\RR$ be the set of Macbeath regions described in the theorem. We will establish the bound given in the theorem for $\RR$. As $|\CC| = |\RR|$, this will prove the theorem. 

It is easy to see from Lemma~\ref{lem:mac-cap2} that that every point in the Macbeath regions of $\RR$ lies within distance $O(\eps)$ of $\bd K$, and so these Macbeath regions are fully contained in a region of volume $O(\eps)$. Since each such Macbeath region has volume $\Omega(v)$, it follows from disjointness that $|\RR| = O(\eps/v)$. In the rest of the proof, we will show that $|\RR| = O(v/\eps^d)$.

Construct the set of Macbeath regions as described in Lemma~\ref{lem:mac-cover}, with the convex body referred to in the lemma being the polar body $K^*$, and with $\eps$ in the lemma set to $\eps/c$ for sufficiently large constant $c$. Call this set $\RR'$. Let $c_1$ and $c_2$ be two constants such that for any cap $C \in \CC$, $c_1 \eps \le \width(C) \le c_2 \eps$. With each Macbeath region $R' \in \RR'$, we may apply Lemma~\ref{lem:sandwich} for constants $c_1$ and $c_2$ to obtain an associated cap $E$ of $K$, which we call the \emph{canonical cap} associated with $R'$. We will show that, for every Macbeath region $R \in \RR$, there is a Macbeath region $R' \in \RR'$ and an associated canonical cap $E$ satisfying the following properties: 
\begin{enumerate}
\item[(i)] $\vol(R \cap E) = \Omega(v)$,

\item[(ii)] $\vol(E) = O(v)$, and

\item[(iii)] $\vol(R') = \Omega(v')$, where $v' = \eps^{d+1} / v$.
\end{enumerate} 

Let $C$ be the cap of $\CC$ associated with $R$. Let $h$ be the hyperplane passing through the base of $C$. Shoot a ray orthogonal to $h$ from the origin. From Lemma~\ref{lem:mac-cover}(ii), we know that this ray intersects some Macbeath region of $\RR'$. Let $R'$ be such a Macbeath region and let $E$ be the associated canonical cap.
By Lemma~\ref{lem:sandwich}, we have $E^{1/\sigma} \subseteq C \subseteq E$, where $\sigma$ is a suitable constant. We will show that $R'$ and $E$ satisfy the above properties (i)--(iii). 

 To prove (i), recall that $R$ is a Macbeath region with scaling factor $1/5$ centered at the centroid of the base of $C$. It follows from Lemma~\ref{lem:mac2-var} that $C \subseteq R^{5 (3 d)} = R^{15 d}$. Thus $\vol(R) = \Omega(\vol(C)) = \Omega(v)$. Further, since $R$ is centrally symmetric and centered at a point on the base of $C$, we have $\vol(R \cap C) = \vol(R) / 2$. Thus $\vol(R \cap C) = \Omega(v)$.  Since $C \subseteq E$, it follows that $\vol(R \cap E) = \Omega(v)$.

To prove (ii), note that by Lemma~\ref{lem:cap-exp}, $\vol(E) = O(\vol(E^{1/\sigma}))$. Since $E^{1/\sigma} \subseteq C$, we have $\vol(E^{1/\sigma}) = O(\vol(C)) = O(v)$. Thus $\vol(E) = O(v)$. 

To prove (iii), we use the fact that $\vol(C) = O(v)$ and apply Lemma~\ref{lem:mahler-mac}. It follows that the volume of $R'$ is $\Omega(v')$, where $v' = \eps^{d+1} / v$.

In summary, we have shown that each Macbeath region of $\RR$ is associated with a Macbeath region $R' \in \RR'$ and an associated canonical cap $E$ satisfying properties (i)--(iii). Let $\RR'' \subseteq \RR'$ denote the subset of Macbeath regions associated with some Macbeath region of $\RR$ by the above construction, and let $\EE$ denote the set of associated canonical caps. It follows from disjointness of the members of $\RR$ and properties (i) and (ii) that the number of Macbeath regions of $\RR$ that can be contained in any canonical cap $E \in \EE$ is $O(1)$. Thus $|\RR|$ is asymptotically bounded by $|\RR''|$. By property (iii), the Macbeath regions of  $\RR''$ have volume $\Omega(v')$. By Lemma~\ref{lem:mac-cover}(i), the number of such Macbeath regions is $O(\eps / v') = O(v/\eps^d)$. This completes the proof.
\end{proof}

We note that using the previous theorem, the bound on the total number of disjoint Macbeath regions from~\cite{AFM17a} follows easily.

\begin{corollary}
Let $K \subset \RE^d$ be a convex body in $\gamma$-canonical form for constant $\gamma$. Let $\CC$ be a set of caps of $K$ of width $\Theta(\eps)$, such that the Macbeath regions $M'(x)$ centered at the centroids $x$ of the bases of these caps are disjoint. Then $|\CC| = O(1/\eps^{(d - 1) / 2})$.
\end{corollary}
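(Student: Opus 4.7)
The approach is a direct bucketing argument on top of Theorem~\ref{thm:vol-sensitive-bound}. The theorem gives a bound that depends on the volume $v$ of the caps, whereas the corollary only constrains the width to be $\Theta(\eps)$, so the plan is to partition $\CC$ into $O(\log(1/\eps))$ volume classes, apply the theorem within each class, and sum.

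First I would use Lemma~\ref{lem:width-vol} to observe that every cap $C \in \CC$ has volume $\vol(C) \in [c\eps^d,\, c'\eps]$ for constants $c,c'$ depending only on $d,\gamma$. I would then define, for each integer $j \ge 0$, the bucket
\[
    \CC_j ~ = ~ \{\, C \in \CC \ST \vol(C) \in [2^j \eps^d, 2^{j+1} \eps^d)\,\},
\]
so that $\CC = \bigcup_{j=0}^{J} \CC_j$ where $J = O(\log(1/\eps^{d-1})) = O(\log(1/\eps))$. The shrunken Macbeath regions associated with caps in $\CC_j$ are disjoint (since they are disjoint already in $\CC$), so Theorem~\ref{thm:vol-sensitive-bound} applied with $v = 2^j \eps^d$ yields
\[
    |\CC_j|
        ~ = ~ O\!\left(\min\!\left(\frac{\eps}{2^j \eps^d},\ \frac{2^j \eps^d}{\eps^d}\right)\right)
        ~ = ~ O\!\left(\min\!\left(\frac{1}{2^j \eps^{d-1}},\ 2^j\right)\right).
\]

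Finally I would sum across buckets. The two terms in the minimum balance at the critical index $j^*$ defined by $2^{j^*} = 1/\eps^{(d-1)/2}$. For $j \le j^*$ I use the bound $2^j$, whose sum is geometric and dominated by its last term $O(2^{j^*}) = O(1/\eps^{(d-1)/2})$. For $j > j^*$ I use the bound $1/(2^j \eps^{d-1})$, whose sum is also geometric and dominated by its first term $O(1/(2^{j^*}\eps^{d-1})) = O(\eps^{(d-1)/2}/\eps^{d-1}) = O(1/\eps^{(d-1)/2})$. Adding the two pieces gives $|\CC| = O(1/\eps^{(d-1)/2})$, as required.

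There is no real obstacle here: once Theorem~\ref{thm:vol-sensitive-bound} is in hand the only thing to verify is that the geometric sums do not pick up a logarithmic factor (which they do not, precisely because the two bounds in the minimum are increasing and decreasing in $j$, so the total sum is controlled by its value at the crossover). The $O(\log(1/\eps))$ number of buckets is therefore harmless.
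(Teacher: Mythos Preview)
Your proof is correct and follows essentially the same approach as the paper: both bucket $\CC$ by volume into $O(\log(1/\eps))$ classes via Lemma~\ref{lem:width-vol}, apply Theorem~\ref{thm:vol-sensitive-bound} within each class, and observe that the resulting bounds form two geometric series meeting at $v = \eps^{(d+1)/2}$ (your $j^*$), each summing to $O(1/\eps^{(d-1)/2})$. The only cosmetic difference is that you index buckets from the minimum volume $\eps^d$ upward, whereas the paper splits directly at the crossover volume.
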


\begin{proof}
Partition $\CC$ into disjoint sets containing caps of $\CC$ whose volumes differ by a factor of at most $2$. By Lemma~\ref{lem:width-vol}, the volume of any cap in $\CC$ is $\Omega(\eps^d)$ and $O(\eps)$, and so there are are $O(\log\inv{\eps})$ such sets. 
Consider a set $S$ in the partition containing caps of volume between $v$ and $2v$, where $v < \eps^{(d + 1) / 2}$. By Theorem~\ref{thm:vol-sensitive-bound}, $|S| = O\kern-1pt\left(\min\left(\frac{\eps}{v}, \frac{v}{\eps^d}\right)\right) = O\kern-1pt\left(\frac{v}{\eps^d}\right)$. Summing up the cardinalities of all those sets $S$ corresponding to volumes $v < \eps^{(d + 1) / 2}$, we obtain a geometric progression that sums to $O(1 / \eps^{(d - 1) / 2})$.
Next consider a set $S'$ in the partition containing caps of volume between $v$ and $2v$, where $v \ge \eps^{(d + 1) / 2}$. By Theorem~\ref{thm:vol-sensitive-bound}, $|S'| = O\kern-1pt\left(\min\left(\frac{\eps}{v}, \frac{v}{\eps^d}\right)\right) = O\kern-1pt\left(\frac{\eps}{v}\right)$. Summing up the cardinalities of all those sets $S'$ corresponding to volumes $v \ge \eps^{(d + 1) / 2}$, we obtain a geometric progression that also sums to $O(1 / \eps^{(d - 1) / 2})$.
Combining these two cases, which together address the entirety of $\CC$, we see that $|\CC| = O(1/\eps^{(d - 1) / 2})$.
\end{proof}

\section{Combinatorial Complexity}
\label{s:opt}

The bound on the total combinatorial complexity of approximating polytopes established in \cite{AFM17a} is sub-optimal by a factor that is polynomial in $\log\inv{\eps}$. In this section, we show how to apply the volume-sensitive bounds developed in the previous sections to eliminate this overhead and obtain an optimal bound. The key idea is to use Macbeath regions centered at points $x$ that are closer than $\eps$ to the boundary of $K$ when the volume of the corresponding Macbeath region is very small or very large. The reason we can afford to do so is that our volume-sensitive bounds show that their numbers are low, so we can afford the increase in their number that may come about through the use of thinner Macbeath regions. Overall, we can adjust parameters to maintain the asymptotic bound of $O(1/\eps^{(d-1)/2})$ on the total number of Macbeath regions. By using thinner Macbeath regions, we can house them in thinner layers during the construction. As the approximation quality of this method is determined by the total thickness of all the layers, our new strategy allows us to improve the approximation quality. To be precise, we can reduce the total thickness of all the layers from $O(\eps \log\inv{\eps})$ in the construction of \cite{AFM17a} to $O(\eps)$ which, in turn, allows us to eliminate the $\polylog(1/\eps)$ overhead and obtain an optimal bound for the combinatorial complexity.

\subsection{Cap Covering} \label{s:cap-cover}

In this subsection, we present a variant of the cap covering lemma proved in \cite{AFM17a}. We will employ this lemma later in establishing an optimal bound on the combinatorial complexity of an approximating polytope. A novel feature of this variant is that the canonical caps in the cover have different widths. 

Before presenting the lemma, we need some definitions. Throughout this section, we assume that the convex body $K$ is in $\gamma$-canonical form for constant $\gamma$. For any integer $j$ (which may be positive, zero, or negative), define:
\[
    v_j ~ = ~ 2^j \eps^{(d+1)/2}, \qquad
    a_j ~ = ~ \max(j^2,1), \qquad\hbox{and}\qquad
    w_j ~ = ~ \frac{\eps}{a_j}. 
\]
We say that a cap $C$ of $K$ is of \emph{type $j$} if $v_j \le \vol(C) < 2v_j$. By Lemma~\ref{lem:width-vol}, the volume of any $\eps$-width cap of $K$ is at least $\Omega(\eps^d)$ and at most $O(\eps)$. It follows that an $\eps$-width cap is of type $j$, where $j$ takes integral values ranging from $-O(\log\inv{\eps})$ to $+O(\log\inv{\eps})$. 

We wish to limit consideration to caps whose volumes and widths satisfy a particular relationship. We say that a set of caps $\CC$ is \emph{$\eps$-balanced} if there exist constants $b_1$ and $b_2$ such that, for all $j$, any type-$j$ cap $C \in \CC$ satisfies $b_1 w_j \le \width(C) \le b_2 w_j$. More precisely, we say that $\CC$ is $(\eps,b_1,b_2)$-balanced. Note, for example, that balanced caps of type 0 have volume $\Theta(\eps^{(d+1)/2})$ and width $\Theta(\eps)$. Up to constant factors, balanced caps of type $j$ (whose volumes differ from type-0 caps by a factor of $2^j$) have widths that are smaller by a factor of $j^2$. 

Our next lemma establishes the key ingredients of our approximation system. It shows that for any convex body $K$ in canonical form there exists a small collection of bodies (lying within $K$ and close to its boundary) and an associated set of $\eps$-balanced caps, such that the bodies form a packing of $K$, the caps cover the boundary of $K$, and each body and its corresponding cap have similar shapes. This generalizes the concept of the \emph{economical cap cover} of \cite{AFM17a} to the context of balanced caps.

\begin{lemma} \label{lem:new-cap-cover}
Let $K$ be a convex body in $\gamma$-canonical form for constant $\gamma$. There exists a set $\RR$ of disjoint, centrally symmetric convex bodies $R_1,\ldots, R_k$, and a collection $\CC$ of associated $\eps$-balanced caps $C_1,\ldots,C_k$ such that the following hold for some constant $\sigma$ (depending only on $d$ and $\gamma$):
\begin{enumerate}
\item[$(1)$] $k = O(1/\eps^{(d-1)/2})$.
\item[$(2)$] For each $i$, $R_i \subseteq C_i \subseteq R_i^{\sigma}$. 
\item[$(3)$] For any direction $u$, there is a cap $C$ whose base is orthogonal to $u$, and which satisfies $R_i \subseteq C$ and $C_i^{1/\sigma} \subseteq C \subseteq C_i$, for some $i$.
\end{enumerate}
\end{lemma}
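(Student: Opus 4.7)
The plan is to refine the layered construction from \cite{AFM17a} by giving each type index $j$ its own layer of thickness $w_j = \eps/a_j$, rather than using uniform thickness $\eps$. Concretely, for each integer $j$ in the range where type-$j$ caps exist inside $K$ (namely $-\Theta(\log(1/\eps)) \le j \le \Theta(\log(1/\eps))$, dictated by Lemma~\ref{lem:width-vol}), let $\Sigma_j \subseteq K$ denote the set of points $x$ with $\delta(x) = \Theta(w_j)$ whose minimal cap $C(x)$ has volume in $[v_j, 2v_j)$. Let $\mathcal{M}_j$ be a maximal family of pairwise-disjoint shrunken Macbeath regions $\{M'(x)\}_{x \in \Sigma_j}$, and set $R_i = M'(x_i)$ and $C_i = C(x_i)$ (applying a bounded expansion of $C(x_i)$ if needed to equalize widths at $\Theta(w_j)$). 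By Lemma~\ref{lem:width-delta}, each $C_i$ is then an $\eps$-balanced cap of type $j$.

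Property (2) follows from routine Macbeath-region facts: Lemma~\ref{lem:mac-cap2-var} gives $R_i \subseteq M(x_i) \subseteq C_i^2$, while Lemma~\ref{lem:mac2-var} gives $C_i \subseteq M^{3d}(x_i) = 15 d \cdot R_i$; absorbing constants yields $R_i \subseteq C_i \subseteq R_i^{\sigma}$ for a suitable $\sigma$. For property (1), apply Theorem~\ref{thm:vol-sensitive-bound} to each $\mathcal{M}_j$, substituting its ``$\eps$'' by $w_j$ and its ``$v$'' by $v_j$, to obtain $|\mathcal{M}_j| = O(\min(w_j/v_j,\, v_j/w_j^d))$. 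Plugging in $w_j = \eps/a_j$ and $v_j = 2^j \eps^{(d+1)/2}$, this simplifies to $O(1/(a_j 2^j \eps^{(d-1)/2}))$ for $j \ge 0$ and $O(2^j a_j^d / \eps^{(d-1)/2})$ for $j \le 0$. Both expressions are dominated by convergent series of the form $\sum_k k^{2d}/2^k$, so $k = \sum_j |\mathcal{M}_j| = O(1/\eps^{(d-1)/2})$, establishing (1).

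For property (3), given a direction $u$, shoot a ray from the origin in direction $u$ and, for each valid $j$, let $x^{(j)}$ be the point on this ray with $\delta(x^{(j)}) = w_j$. Consider $f(j) = \vol(C(x^{(j)}))/v_j$: as $j$ increases by $1$, $v_j$ doubles while $\vol(C(x^{(j)}))$ changes by only a bounded multiplicative factor (since $w_j$ scales by a bounded factor between consecutive layers, and the minimal-cap volume is controlled by depth via Lemmas~\ref{lem:width-vol} and \ref{lem:width-delta}). Hence $f(j)$ passes from $\gg 1$ at sufficiently negative $j$ to $\ll 1$ at sufficiently positive $j$, so some $j^*$ satisfies $f(j^*) = \Theta(1)$, i.e., $x^{(j^*)} \in \Sigma_{j^*}$ (after possibly widening the interval $[v_j, 2v_j)$ by a constant factor, which does not affect the asymptotic count). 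By maximality of $\mathcal{M}_{j^*}$, $M'(x^{(j^*)})$ meets some $R_i \in \mathcal{M}_{j^*}$, and Lemma~\ref{lem:mac-mac-var} gives that $C(x^{(j^*)})$ and $C_i$ sandwich each other up to a constant factor. Let $C$ be the cap of $K$ orthogonal to $u$ with base through $x^{(j^*)}$ of width $\Theta(w_{j^*})$; by Lemma~\ref{lem:raydist}, the ray $O x^{(j^*)}$ makes a bounded angle with the normal to the base of $C(x^{(j^*)})$, so $C$ and $C(x^{(j^*)})$ are within constant-factor expansions of one another. Composing the sandwiches and absorbing all constants into a single $\sigma$ yields $C_i^{1/\sigma} \subseteq C \subseteq C_i$, and since $R_i \cap M'(x^{(j^*)}) \neq \emptyset$ with $M'(x^{(j^*)}) \subseteq C^2$ by Lemma~\ref{lem:mac-cap2}, a further constant enlargement (again absorbed into $\sigma$) gives $R_i \subseteq C$.

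The main obstacle is property (3): securing, for every direction $u$, a layer index $j^*$ with $f(j^*) = \Theta(1)$ via the discrete ``intermediate-value'' step above, and then propagating the sandwich between the two minimal caps $C(x^{(j^*)})$ and $C_i$ through Lemma~\ref{lem:raydist} to a cap $C$ orthogonal to the arbitrary direction $u$. This requires careful bookkeeping of constants across the chain of Macbeath-region inclusions (Lemmas~\ref{lem:mac-cap2}, \ref{lem:mac2-var}, \ref{lem:mac-mac-var}, and \ref{lem:raydist}) so that a uniform $\sigma$ works across all directions.
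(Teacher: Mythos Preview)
Your approach diverges from the paper's and has a genuine gap in the argument for Property~(3). The difficulty is that the point $x^{(j^*)}$ you locate on the ray in direction $u$ need not be well-placed relative to caps orthogonal to $u$. Concretely, the cap of $K$ orthogonal to $u$ whose base passes through $x^{(j^*)}$ does \emph{not} in general have width $\Theta(w_{j^*})$: take $K$ to be a Euclidean ball in $\gamma$-canonical form with the origin off-center, and let $u$ be any direction for which the boundary point $q$ on the ray $Ou$ is not the point of $K$ farthest in direction $u$. Then $h_K(u) - \|q\|$ is a positive constant independent of $\eps$, so the cap orthogonal to $u$ through any point on the ray near $q$ has width $\Theta(1)$, not $\Theta(w_{j^*})$. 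Lemma~\ref{lem:raydist} only bounds the angle between $O x^{(j^*)}$ and the normal of the \emph{minimal} cap $C(x^{(j^*)})$; it gives no control over the cap orthogonal to $u$, whose base does not have $x^{(j^*)}$ as its centroid and which therefore falls outside the scope of Lemma~\ref{lem:mac2-var}. The chain you propose---from $C$ to $C(x^{(j^*)})$ to $C_i$---breaks at the first link, so no uniform $\sigma$ is produced.

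The paper avoids this by never shooting a ray. It takes a \emph{single} maximal collection $\AA$ of $(\eps,b_1,b_2)$-balanced caps whose associated shrunken Macbeath regions (centered at the base centroids of the caps $A_i^{1/\beta}$) are pairwise disjoint, and sets $R_i = M'(x_i)$, $C_i = A_i^{\beta}$. For Property~(3), given $u$ it starts directly with the $\eps$-width cap $F$ orthogonal to $u$, reads off its type $j$ from $\vol(F)$, and shrinks to $C = F^{1/a_j}$. A short calculation (Lemma~\ref{lem:balanced-cap}) shows that this shrunk cap is itself $(\eps,b_1,b_2)$-balanced---the point being that shrinking by $1/a_j$ changes the volume by at most a factor $a_j^d$, which shifts the type by only $O(\log a_j)$ and hence leaves $a_j$ essentially unchanged. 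Maximality of $\AA$ together with Lemma~\ref{lem:mac-mac-var} then immediately yields the sandwich $C_i^{1/\sigma} \subseteq C \subseteq C_i$. Working with caps orthogonal to $u$ throughout sidesteps the ray-versus-normal mismatch entirely and replaces your intermediate-value search with a direct computation. A secondary issue with your construction is that per-type maximal families $\mathcal{M}_j$ are not automatically disjoint from one another (Macbeath regions at comparable depths $w_j \approx w_{j'}$ for nearby $j,j'$ can overlap), whereas the lemma requires all the $R_i$ to be pairwise disjoint; the paper's single maximal family gives this for free.
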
 

We will establish each of the properties separately. Let $\AA = \{A_1, \ldots, A_k\}$ be any maximal set of $(\eps,b_1,b_2)$-balanced caps, such that the Macbeath regions $M'(x_i)$ centered at the centroids $x_i$ of the bases of the caps $A_i^{1/\beta}$ are disjoint. Here $\beta$ is the constant of Lemma~\ref{lem:mac-mac-var}, and $b_1$ and $b_2$ are constants to be specified later (in the proof of Lemma~\ref{lem:balanced-cap}). Let $R_i = M'(x_i)$ and $C_i = A_i^{\beta}$, and let $\RR, \CC$, and $\AA'$ be the sets consisting of $R_i, C_i$, and $A_i^{1/\beta}$, respectively, for $1 \le i \le k$. It follows from Lemma~\ref{lem:cap-exp} that constant factor expansions of $\eps$-balanced caps are $\eps$-balanced, so the sets $\CC$ and $\AA'$ are also $\eps$-balanced (for different constants $b_1$ and $b_2$).

In Lemma~\ref{lem:cover-size}, we establish Property~1 by showing a bound on $|\AA'|$. In Lemma~\ref{lem:bal-cap-prop2}, we establish Property~2, and in Lemmas~\ref{lem:balanced-cap} and \ref{lem:bal-cap-prop3}, we establish  Property~3.

\begin{lemma} \label{lem:cover-size}
The number of $\eps$-balanced caps such that the Macbeath regions $M'(x)$ centered at the centroids $x$ of the bases of these caps are disjoint is $O(1/\eps^{(d-1)/2})$.
\end{lemma}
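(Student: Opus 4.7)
The plan is to partition $\mathscr{A}'$ by cap type and invoke Theorem~\ref{thm:vol-sensitive-bound} separately on each type, reading the theorem's width-scale parameter as generic so that it may be set to $w_j$ rather than the global $\eps$.

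For each integer $j$, let $\mathscr{A}'_j \subseteq \mathscr{A}'$ denote the type-$j$ caps. By the $\eps$-balancedness of $\mathscr{A}'$ every cap in $\mathscr{A}'_j$ has width $\Theta(w_j)$, and by definition of type it has volume $\Theta(v_j)$; moreover, the shrunken Macbeath regions at the base centroids are disjoint by the construction of $\mathscr{A}$. Applying Theorem~\ref{thm:vol-sensitive-bound} with the role of $\eps$ played by $w_j$ and with volume parameter $v_j$ gives
\[
    |\mathscr{A}'_j|
    ~=~ O\!\left(\min\!\left(\frac{w_j}{v_j},\,\frac{v_j}{w_j^d}\right)\right)
    ~=~ O\!\left(\min\!\left(\frac{1}{a_j\, 2^j\, \eps^{(d-1)/2}},\,\frac{2^j\, a_j^d}{\eps^{(d-1)/2}}\right)\right),
\]
after substituting $w_j = \eps/a_j$ and $v_j = 2^j \eps^{(d+1)/2}$.

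Next I would split the sum over $j$ at zero, choosing the branch of the minimum that is conveniently summable on each side. For $j \ge 0$ use the first branch: since $a_j \ge 1$ we get $\sum_{j \ge 0} 1/(a_j 2^j) = 1 + \sum_{j \ge 1} 1/(j^2 2^j) = O(1)$, so these types contribute $O(1/\eps^{(d-1)/2})$ in total. For $j \le -1$ use the second branch and substitute $i = -j \ge 1$, giving $\sum_{i \ge 1} i^{2d}/2^i = O(1)$ (exponential decay dominates polynomial growth), which again contributes $O(1/\eps^{(d-1)/2})$. Adding the two cases yields $|\mathscr{A}'| = O(1/\eps^{(d-1)/2})$, as required. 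Note that the range restriction $|j| = O(\log(1/\eps))$ implied by Lemma~\ref{lem:width-vol} plays no role here since the infinite sums already converge.

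The only conceptual subtlety---and the reason the proof is so short---is recognizing that Theorem~\ref{thm:vol-sensitive-bound} is parameterized by a generic width $\eps$ (nothing in its proof ties that parameter to the global approximation parameter), so it applies verbatim at scale $w_j$. The quadratic decay $a_j = \max(j^2,1)$ in the layer widths $w_j = \eps/a_j$ has been calibrated precisely so that the Theorem's bound, summed over all types, telescopes into two convergent geometric-times-polynomial series rather than producing the $O(\log(1/\eps))$ blow-up suffered by the uniform-thickness cover of~\cite{AFM17a}; once the generic substitution is granted, the remaining bookkeeping is routine.
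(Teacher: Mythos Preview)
Your proof is correct and follows essentially the same approach as the paper's: partition the caps by type $j$, apply Theorem~\ref{thm:vol-sensitive-bound} at width scale $w_j$ with volume $v_j$, use the $w_j/v_j$ branch for $j \ge 0$ and the $v_j/w_j^d$ branch for $j < 0$, and observe that each resulting series converges to $O(1/\eps^{(d-1)/2})$. Your explicit remark that the theorem's width parameter is generic (so it may be instantiated with $w_j$) is exactly the point the paper uses implicitly.
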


\begin{proof}
First we bound the number of such caps of type $j$ where $j \ge 0$. As these caps are $\eps$-balanced, they have width $\Theta(w_j)$. By Theorem~\ref{thm:vol-sensitive-bound}, their number $n_j$ is $O(w_j / v_j)$. Recalling that $v_j = \eps^{(d+1)/2} \cdot 2^j$ and $w_j = \eps/\max(j^2,1)$, we have
\[
    n_j 
        ~ = ~ O\left( \frac{\eps/\max(j^2,1)}{\eps^{\frac{d+1}{2}} \cdot 2^j} \right)
        ~ = ~ O\left( \frac{1}{\eps^{\frac{d-1}{2}}} \kern-1pt \cdot \kern-1pt \frac{1}{2^j \max(j^2,1)} \right).
\]
Summing $n_j$ over all $j \ge 0$, we obtain a total of $N_+ = O(1/\eps^{(d-1)/2})$ caps.

Next we bound the number of such caps of type $j$ where $j < 0$. By Theorem~\ref{thm:vol-sensitive-bound}, their number $n_j$ is $O(v_j / w_j^d)$. Thus
\[
    n_j 
        ~ = ~ O\left( \frac{\eps^{(d+1)/2} \cdot 2^j}{(\eps/j^2)^d} \right) 
        ~ = ~ O\left( \frac{1}{\eps^{(d - 1)/2}} \cdot 2^j j^{2d} \right).
 \] 
Summing $n_j$ over all $j < 0$, we obtain a total of $N_- = O(1/\eps^{(d-1)/2})$ caps. Therefore, the total number of caps, $N_+ + N_- = O(1/\eps^{(d-1)/2})$, as desired.
\end{proof}

\begin{lemma} \label{lem:bal-cap-prop2}
There exists a constant $\sigma$ such that for each $i$, $R_i \subseteq C_i \subseteq R_i^{\sigma}$.
\end{lemma}

\begin{proof}
Recall that $R_i = M'(x_i)$. By Lemma~\ref{lem:mac-cap2}, the expansion $A_i^{2/\beta}$ will fully contain $R_i$. Since $A_i^{2/\beta} \subseteq A_i^{\beta} = C_i$, we obtain $R_i \subseteq C_i$. By Lemma~\ref{lem:mac2-var-exp}, 
\[
    C_i 
        ~ =         ~ (A_i^{1/\beta})^{\beta^2} 
        ~ \subseteq ~ M^{3d(2\beta^2 - 1)}(x_i) 
        ~ =         ~ R_i^{15d(2\beta^2 - 1)} 
        ~ \subseteq ~ R_i^{\sigma},
\]
for any constant $\sigma \ge 15d(2\beta^2 - 1)$. Thus, $R_i \subseteq C_i \subseteq R_i^{\sigma}$, as desired.
\end{proof}

\begin{lemma} \label{lem:balanced-cap}
Given any $\eps$-width cap $C$ of type $j$, the cap $C^{1/a_j}$ is a $(\eps,b_1,b_2)$-balanced cap for suitable constants $b_1$ and $b_2$ (which do not depend on $j$). 
\end{lemma}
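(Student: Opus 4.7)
The plan is to determine the type of $C^{1/a_j}$ by bounding its volume via Lemma~\ref{lem:cap-exp}, and then verify that the corresponding layer width matches $w_j$ up to constants. Because $C$ has width $\eps$, the definition of $\rho$-expansion immediately gives $\width(C^{1/a_j}) = \eps/a_j = w_j$. Let $V = \vol(C^{1/a_j})$. The containment $C^{1/a_j} \subseteq C$ yields the upper bound $V \le \vol(C) < 2 v_j$. For a matching lower bound I apply Lemma~\ref{lem:cap-exp} to the cap $C^{1/a_j}$ with scale factor $a_j \ge 1$, using the fact that $(C^{1/a_j})^{a_j} = C$ by the $\rho$-expansion definition. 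This yields $\vol(C) \le a_j^d \cdot V$, hence $V \ge \vol(C)/a_j^d \ge v_j/a_j^d$. Letting $j'$ denote the type of $C^{1/a_j}$, the two bounds translate to $j - d\log_2 a_j - 1 \le j' \le j$.

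The key step is to show $a_{j'} = \Theta(a_j)$, from which $w_{j'} = \eps/a_{j'} = \Theta(w_j)$ and the desired inequalities $b_1 w_{j'} \le \width(C^{1/a_j}) \le b_2 w_{j'}$ follow for suitable constants. Because $a_j = \max(j^2,1)$ behaves differently near the origin, this requires a short case analysis. I fix a threshold $J_d$ depending only on $d$, chosen large enough that $d\log_2 a_j + 1 \le |j|/2$ (equivalently, $2d\log_2|j|+1 \le |j|/2$) for every integer $|j| > J_d$. For $|j| \le J_d$, the crude bound $|j'| \le |j| + d\log_2 a_j + 1$ shows that both $a_j$ and $a_{j'}$ are bounded above by constants depending only on $d$, so their ratio is $\Theta(1)$. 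For $|j| > J_d$, the interval $[\,j - d\log_2 a_j - 1,\, j\,]$ containing $j'$ lies entirely on the same side of $0$ as $j$, with endpoints within a factor of $3/2$ of $j$ in absolute value; hence $|j'| = \Theta(|j|)$ and $a_{j'} = \Theta(a_j)$.

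Combining the width identity with $a_{j'} = \Theta(a_j)$ produces the claimed constants $b_1, b_2$ (depending only on $d$ and $\gamma$), establishing that $C^{1/a_j}$ is $(\eps,b_1,b_2)$-balanced. The main obstacle is precisely this sign-change case analysis for $j'$: without splitting on $|j|$ relative to a $d$-dependent constant, the $\log_2 a_j$ term in the lower bound on $j'$ could push $j'$ across zero and drive $a_{j'}/a_j$ down by a super-constant factor. Everything else is essentially a direct application of Lemma~\ref{lem:cap-exp} together with the volumetric identity $(C^{1/a_j})^{a_j} = C$.
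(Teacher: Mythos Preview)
Your proposal is correct and follows essentially the same route as the paper: bound $\vol(C^{1/a_j})$ above by $\vol(C)$ and below via Lemma~\ref{lem:cap-exp}, deduce that the type $j'$ (the paper calls it $k$) lies in $[\,j - d\log_2 a_j - 1,\, j\,]$, and conclude $a_{j'} = \Theta(a_j)$. The paper simply asserts this last step as ``easy to see,'' whereas you spell out the case split on $|j|$; one small slip is that for $j > J_d$ the lower endpoint can be as small as $j/2$, so the ``factor $3/2$'' should read ``factor $2$,'' but the conclusion $|j'| = \Theta(|j|)$ is unaffected.
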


\begin{proof}
Since $C$ is of type $j$, we have $\eps^{(d+1)/2} 2^j \le \vol(C) < \eps^{(d+1)/2} 2^{j+1}$. Let $C' = C^{1/a_j}$. Clearly, $\vol(C') \le \vol(C)$ and, by Lemma~\ref{lem:cap-exp}, $\vol(C) \le a_j^d \cdot \vol(C')$. Thus
\[
    \eps^{\frac{d+1}{2}} 2^{j+1} 
        ~ >    ~ \vol(C') 
        ~ \geq ~ \frac{\eps^{\frac{d+1}{2}} 2^j}{a_j^d}
        ~ =    ~ \eps^{\frac{d+1}{2}} 2^{j - d \log(a_j)}.
\]
Letting $k$ denote the type of $C'$, we have $\eps^{(d+1)/2} 2^k \le \vol(C') < \eps^{(d+1)/2} 2^{k+1}$. These inequalities readily imply that
\[
    j + 1 
        ~ > ~ k 
        ~ > ~ j - d \log(a_j) - 1.
\]
It is easy to see that $a_j = \Theta(a_k)$. As the width of $C'$ is $\eps/a_j$, which is $\Theta(\eps/a_k)$, it follows that we can choose $b_1$ and $b_2$ such that $C'$ is $(\eps,b_1,b_2)$-balanced.
\end{proof}

\begin{lemma} \label{lem:bal-cap-prop3}
There exists a constant $\sigma$ such that the following holds. For any direction $u$, there is a cap $C$ whose base is orthogonal to $u$, and which satisfies $R_i \subseteq C$ and $C_i^{1/\sigma} \subseteq C \subseteq C_i$, for some $i$.
\end{lemma}

\begin{proof}
Let $F$ be an $\eps$-width cap whose base is orthogonal to direction $u$. Suppose that $F$ is of type $j$. We will show that the cap $C = F^{1/a_j}$ satisfies the properties given in the statement of the lemma. 

By Lemma~\ref{lem:balanced-cap}, $C$ is $(\eps,b_1,b_2)$-balanced. Let $R = M'(x)$ be the Macbeath region centered at the centroid $x$ of the base of the cap $C^{1/\beta}$. By our construction, there must exist $i$ such that the Macbeath region $R_i$ intersects $R$. Recall that $C_i = A_i^{\beta}$ and $R_i = M'(x_i)$, where $x_i$ is the centroid of the base of the cap $A_i^{1/\beta}$. Since $M'(x_i) \cap M'(x) \neq \emptyset$, by Lemma~\ref{lem:mac-mac}, $M'(x_i) \subseteq M(x)$. Also, by Lemma~\ref{lem:mac-cap2-var}, $M(x) \subseteq C^{2/\beta}$. Clearly $C^{2/\beta} \subseteq C$. Putting it together, we have
$R_i = M'(x_i) \subseteq M(x) \subseteq C^{2/\beta} \subseteq C$, which proves the first part of the lemma.

It remains to show that $C_i^{1/\sigma} \subseteq C \subseteq C_i$. Since $M'(x_i) \cap M'(x) \neq \emptyset$, we can apply Lemma~\ref{lem:mac-mac-var} to caps $A_i^{1/\beta}$ and $C^{1/\beta}$ (for $\rho = 1$) to obtain $A_i^{1/\beta} \subseteq C$. Applying Lemma~\ref{lem:mac-mac-var} again to caps $C^{1/\beta}$ and $A_i^{1/\beta}$ (for $\rho = \beta$), we obtain $C \subseteq A_i^{\beta}$. Recalling that $C_i = A_i^{\beta}$, we have $C_i^{1/\beta^2} \subseteq C \subseteq C_i$. The claim now follows for any positive constant $\sigma \ge \beta^2$.
\end{proof}

Lemma~\ref{lem:new-cap-cover} follows by setting $\sigma$ to the maximum of the constants used in Lemmas~\ref{lem:bal-cap-prop2} and~\ref{lem:bal-cap-prop3}.

\subsection{Witness-Collector Technique} \label{s:witness}

In this section, we will provide a quick overview of the witness-collector approach~\cite{AFM17a}, which is central to our construction. Recall that $K$ is a convex body in $\gamma$-canonical form for some constant $\gamma$. The general strategy is as follows. First, we build a set $\RR$ of disjoint, centrally symmetric convex bodies lying within $K$ and close to its boundary. These bodies will possess certain key properties to be specified later. For each $R \in \RR$, we select a point arbitrarily from this body, and let $S$ denote this set of points. The approximation $P$ is defined as the convex hull of $S$. In Lemma~\ref{lem:apx}, we will prove that $P$ is an $\eps$-approximation of $K$ and, in Lemma~\ref{lem:fewfaces}, we will apply a deterministic variant of the witness-collector approach~\cite{DGGT16} to show that $P$ has low combinatorial complexity.

Let us recall the basic definitions from the introduction. We define a set $\WW$ of regions called \emph{witnesses} and a set $\CC$ of regions called \emph{collectors}, which satisfy the following properties:
\begin{enumerate}
\item[(1)] Each witness of $\WW$ contains a point of $S$ in its interior.

\item[(2)] Any halfspace $H$ either contains a witness $W \in \WW$ or $H \cap S$ is contained in a collector $C \in \CC$.

\item[(3)] Each collector $C \in \CC$ contains at most a constant number of points of $S$.
\end{enumerate}

The key idea of the witness-collector method is encapsulated in the following lemma, which was proved in~\cite{AFM17a}. For the sake of completeness, we repeat the proof here.

\begin{lemma} \label{lem:witness-collector}
Given a set of witnesses and collectors satisfying the above properties, the combinatorial complexity of the convex hull $P$ of $S$ is $O(|\CC|)$.
\end{lemma}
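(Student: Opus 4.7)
The plan is to charge each face of $P$ (of any dimension) to a collector in $\CC$ via its vertex set, and then argue that each collector is charged at most $O(1)$ times. Summing over $\CC$ will give the $O(|\CC|)$ bound on total combinatorial complexity.

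For each face $F$ of $P$, I would choose a supporting hyperplane $h_F$ with $h_F \cap P = F$, and let $H_F$ denote the closed halfspace bounded by $h_F$ on the side opposite to the interior of $P$. Since $S \subseteq P$, we have $H_F \cap S = F \cap S$, which is precisely the set of vertices of $F$. Applying property~(2) to $H_F$ yields two alternatives: either $H_F$ contains a witness $W \in \WW$, or $F \cap S$ is contained in some collector $C_F \in \CC$.

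The crux of the argument, which I expect to be the only subtle step, is to rule out the witness alternative. Suppose for contradiction that $W \subseteq H_F$ for some $W \in \WW$, and let $p \in S$ be a point in the interior of $W$ guaranteed by property~(1). Then some open ball around $p$ is contained in $W \subseteq H_F$, forcing $p$ to lie in the interior of $H_F$, i.e., strictly on the far side of $h_F$ from $P$. But $p \in S \subseteq P$ and $h_F$ supports $P$, so no point of $P$ can lie strictly on that side, a contradiction. Hence every face $F$ satisfies $F \cap S \subseteq C_F$ for some $C_F \in \CC$.

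It remains to count. By property~(3), $|C_F \cap S| = O(1)$, so $C_F \cap S$ has only $O(1)$ distinct subsets, and hence at most $O(1)$ distinct faces $F$ of $P$ can be charged to any single collector (since different faces of $P$ have different vertex sets, and $F$ is determined by $F \cap S$ as a face of $\conv(S) = P$). Summing over $\CC$ yields $O(|\CC|)$ faces of $P$ in total, as required. The entire proof reduces to the interior-point contradiction above; everything else is a straightforward counting argument using properties~(1)--(3).
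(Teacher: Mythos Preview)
Your proof is correct and follows the standard witness-collector argument. The paper itself does not prove this lemma; it states that the result ``was proved in~\cite{AFM17a}'' and simply cites it, so there is no in-paper proof to compare against. Your argument---supporting each face by a halfspace, ruling out the witness alternative via the interior-point contradiction, and then charging faces to collectors and bounding via the $O(1)$ subsets of $C\cap S$---is exactly the argument given in~\cite{AFM17a} (and originating in~\cite{DGG13}), so nothing further is needed.
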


\begin{proof}
We map each face $f$ of the convex hull of $P$ to any maximal subset $S_f \subseteq S$ of affinely independent points on $f$. Note that this is a one-to-one mapping and $|S_f| \le d$. In order to bound the combinatorial complexity of $P$ it suffices to bound the number of such subsets $S_f$. 

For a given face $f$, let $H$ be any halfspace such that $H \cap P = f$. Clearly $H$ does not contain any witness since otherwise, by Property~1, it would contain a point of $S$ in its interior. By Property~2, $H \cap S$ is contained in some collector $C \in \CC$. Thus $S_f \subseteq C$. Since $|S_f| \le d$, it follows that the number of such subsets $S_f$ that are contained in any collector $C$ is at most 
\[
    \sum_{1 \leq j \leq d} {\binom{|C \cap S|}{j}} 
	    ~ = ~ O(|C \cap S|^d) 
	    ~ = ~ O(1),
\]
where in the last step we have used the fact that $|C \cap S|  = O(1)$ (Property~3). Summing over all the collectors, it follows that the total number of sets $S_f$, and hence the combinatorial complexity of $P$, is $O(|\CC|)$.
\end{proof}

\subsection{Layered Construction} \label{s:layering}

A natural choice for the witnesses and collectors would be the convex bodies $R_i$ and the caps $C_i$, respectively, from Lemma~\ref{lem:new-cap-cover}. As shown in~\cite{AFM17a}, these bodies do not work for our purposes. The main difficulty is that Property~(3) could fail, since a cap $C_i$ could intersect a non-constant number of bodies of $\RR$, and hence contain a non-constant number of points of $S$. Following the general approach of that earlier paper, we show that it is possible to construct a set of witnesses and collectors that satisfy all the requirements by scaling and translating the convex bodies from Lemma~\ref{lem:new-cap-cover} into layers according to their volumes. The properties we obtain are specified below in Lemma~\ref{lem:layers}.

Our choice of witnesses and collectors will be based on the following lemma. Specifically, the convex bodies $R_1, \ldots, R_k$ given in the statement of the lemma, will play the role of the witnesses and the regions $C_1, \ldots, C_k$ (not to be confused with caps) will play the role of the collectors. The lemma strengthens Lemma~\ref{lem:new-cap-cover}, achieving the critical property that any collector $C_i$ intersects only a constant number of convex bodies of $\RR$. As each witness set $R_i$ will contain one point, this ensures that a collector contains only a constant number of input points (Property~(3) of the witness-collector system). This strengthening is achieved while maintaining the same number of collectors asymptotically, as in Lemma~\ref{lem:new-cap-cover}. Also, the collectors are no longer simple caps, but have a more complex shape as described in the proof (this, however, has no adverse effect in our application). 

\begin{lemma} \label{lem:layers}
Let $\eps > 0$ be a sufficiently small parameter. Let $K \subset \RE^d$ be a convex body in canonical form.  There exists a set $\RR$ of $k = O(1/\eps^{(d-1)/2})$ disjoint, centrally symmetric convex bodies $R_1, \ldots, R_k$ and a collection $\CC$ of associated regions $C_1,\ldots, C_k$ such that the following hold:
\begin{enumerate}
\item Let $C$ be any cap of width $\eps$. Then there is an $i$ such that $R_i \subseteq C$.

\item Let $C$ be any cap. Then there is an $i$ such that either (i) $R_i \subseteq C$ or (ii) $C \subseteq C_i$.

\item For each $i$, the region $C_i$ intersects at most a constant number of bodies of $\RR$.
\end{enumerate}
\end{lemma}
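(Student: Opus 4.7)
The plan is to construct $\RR$ and the collectors as a layered refinement of the output of Lemma~\ref{lem:new-cap-cover}. Apply that lemma to obtain disjoint bodies $R'_1, \ldots, R'_k$ with $k = O(1/\eps^{(d-1)/2})$ and associated $\eps$-balanced caps $C'_1, \ldots, C'_k$, each $C'_i$ of some type $j_i$ with $\width(C'_i) = \Theta(\tau_{j_i})$ where $\tau_j = \eps/a_j$. Partition $K$ into nested shells $L_j$ ordered outermost-to-innermost by decreasing $j$, where $L_j$ consists of points whose ray-distance from $\bd K$ lies in $[D_j, D_j + \tau_j)$ and $D_j = \sum_{k > j} \tau_k$. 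The middle shell $L_0$ has thickness $\eps$, and the total depth $\sum_j \tau_j = \eps\bigl(1 + 2\sum_{j \geq 1} 1/j^2\bigr) = O(\eps)$, which is the key savings over the uniform-thickness layered construction of~\cite{AFM17a} whose total depth was $\Theta(\eps \log(1/\eps))$.

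For the witnesses, translate each $R'_i$ radially inward from $O$ so that it lies in $L_{j_i}$. Since $R'_i$ has width $\Theta(\tau_{j_i})$ matching the shell's thickness, this translation is well-defined, and as in~\cite{AFM17a} pairwise disjointness of the translated bodies $R_i$ is preserved (bodies in distinct shells occupy disjoint radial zones, and within a single shell the translation is effectively uniform along radial directions). For each $i$, define the collector $C_i$ not as a simple cap but as a more complex region---essentially a prism-like region angularly localized near $C'_i$'s base footprint and extending radially from $\bd K$ inward just past layer $L_{j_i}$---so that $C_i$ contains both the translated $R_i$ and a constant-factor expansion of $C'_i$.

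Properties~1 and~2 are verified by reducing to Lemma~\ref{lem:new-cap-cover}(3). Given any cap $C$ with base normal $u$, that part of the lemma yields an index $i$ and a cap $\widetilde C$ with base normal $u$ such that $R'_i \subseteq \widetilde C$ and $(C'_i)^{1/\sigma} \subseteq \widetilde C \subseteq C'_i$. When $\width(C)$ exceeds the total layer depth---automatic for $\eps$-width caps after rescaling $\eps$ by the implicit constant in the bound $\sum_j \tau_j = O(\eps)$---the fact that the radial translation of $R'_i$ is at most $O(\eps)$ ensures that $R_i \subseteq C$, establishing Property~1 and the first clause of Property~2. For thinner caps, $C \subseteq C'_i \subseteq C_i$ since $C_i$ extends from $\bd K$ past the base of $C'_i$ within the angular cone of $u_i$ (which is close to $u$).

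The main obstacle is Property~3: that each $C_i$ intersects only $O(1)$ bodies of $\RR$. For inner-layer bodies ($j_{i'} < j_i$) the bound is immediate by the radial separation of the shells: such $R_{i'}$ lies at ray-depth at least $D_{j_{i'}} \geq D_{j_i} + \tau_{j_i}$, strictly beyond the inner boundary of $C_i$, so $R_{i'} \cap C_i = \emptyset$. For same-or-outer-layer bodies ($j_{i'} \geq j_i$), the bound relies on the combination of (i) the narrow angular extent of $C_i$, which restricts intersecting bodies to those whose original base footprint overlaps a constant expansion of $C'_i$'s, and (ii) a packing argument together with the volume-sensitive bound of Theorem~\ref{thm:vol-sensitive-bound} applied to the original bodies $R'_{i'}$, noting that each such body has volume $\vol(R'_{i'}) = \Theta(v_{j_{i'}}) \geq \Theta(v_{j_i})$. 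The delicate part is choosing $C_i$'s angular extent tight enough for this volume packing to yield an $O(1)$ count while remaining loose enough to absorb the thin caps of direction $u$ required by Property~2, and reconciling these conflicting demands with consistent constants throughout.
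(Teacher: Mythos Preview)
Your high-level plan matches the paper's: start from Lemma~\ref{lem:new-cap-cover}, stratify the bodies into variable-thickness layers indexed by type (so that the total depth is $O(\eps)$ rather than $O(\eps\log\frac1\eps)$), and take each collector to be a union of expanded caps restricted to the outer layers. But two steps are genuinely missing.

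First, the paper moves bodies into their layers by uniform \emph{scaling} about the origin (the maps $T_j$, with $K_j = T_j(K)$), not by radial translation. This is not cosmetic: scaling sends shrunken Macbeath regions of $K$ to shrunken Macbeath regions of $K_j$, so within each $K_j$ the standard containment lemma (Lemma~\ref{lem:mac-cap2}) still applies and gives the per-layer packing bound cleanly. Radial translation destroys this structure, and your assertion that ``within a single shell the translation is effectively uniform along radial directions'' is not enough to recover it.

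Second, and more seriously, your Property~3 argument for the outer layers does not work. Theorem~\ref{thm:vol-sensitive-bound} is a \emph{global} count of disjoint Macbeath regions in a volume class; it says nothing about how many of them meet a fixed collector $C_i$, so invoking it here is a red herring. The paper's actual mechanism is a local polynomial-versus-exponential comparison. The collector for a type-$j$ body is $\bigcup_{r\ge j} E_r^{\sigma}\cap L_r$, where $E_r$ is a specific cap of $K_r$; the crucial estimate (Lemma~\ref{lem:cap-vol-poly}) is that $\vol(E_r) = O((r-j)^{3d})\,\vol(E_j)$, because the extra width accumulated across layers $j,\dots,r$ is $\sum_{i=j}^r w_i$, only a polynomial multiple of $w_j$. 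On the other hand, by the definition of type, bodies in layer $r$ have volume $\Omega(2^{r-j})\,\vol(E_j)$. A packing argument inside $E_r^{2\sigma}$ then gives $O((r-j)^{3d}/2^{r-j})$ bodies per layer, and the sum over $r\ge j$ is $O(1)$. Your sketch names neither the polynomial growth of the collector across layers nor the exponential growth of the body volumes, and your closing sentence concedes that the angular width of $C_i$ has not actually been chosen; this tradeoff is the entire content of Property~3 and cannot be left unresolved.
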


As mentioned earlier, our proof of this lemma is based on a layered placement of the convex bodies from Lemma~\ref{lem:new-cap-cover}, which are distributed among $O(\log\inv{\eps})$ layers that lie close to the boundary of $K$. Let $\alpha = c_0 \, \eps$, where $c_0$ is a suitable constant to be specified later. We begin by applying Lemma~\ref{lem:new-cap-cover} to $K$ using $\eps = \alpha$. This yields a collection $\RR'$ of $k = O(1/\alpha^{(d-1)/2})$ disjoint, centrally symmetric convex bodies $\{R'_1, \ldots, R'_k\}$ and associated caps $\CC' = \{C'_1, \ldots, C'_k\}$. Our definition of the convex bodies $R_i$ and regions $C_i$ required in Lemma~\ref{lem:layers} will be based on $R'_i$ and $C'_i$, respectively. In particular, the convex body $R_i$ will be obtained by translating a scaled copy of $R'_i$ into an appropriate layer, based on the type of the cap $C'_i$. 

Recall that the caps of $\CC'$ are $(\alpha,b_1,b_2)$-balanced for some constants $b_1$ and $b_2$, and have integral types ranging from $-t$ to $+t$, where $t = O(\log\inv{\alpha})$. By definition, the width of any cap of type $j$ in $\CC'$ is at most $b_2 w_j$, where $w_j = \alpha/\max(j^2,1) = c_0 \eps / \max(j^2,1)$. 

We partition the set $\RR'$ of convex bodies into $2t+1$ groups, based on the type of the associated cap $C'$. More precisely, for $-t \le j \le t$, \emph{group $j$} consists of bodies $R' \in \RR'$, such that the associated cap $C'$ is of type $j$. 

Next we describe how the layers are constructed. We will construct $2t+1$ layers corresponding to the $2t+1$ groups of $\RR'$. Our construction uses a constant parameter $c_1$. For $-t-1 \le j \le t-1$, let $T_j$ denote the linear transformation that represents a uniform scaling by a factor of $\prod_{i=j+1}^t (1 - c_1 w_i)$ about the origin, and let $T_t$ denote the identity transformation. Further, define $K_j = T_j(K)$ (see Figure~\ref{f:layers}(a)) and let $K_t = K$. For $-t \le j \le t$, define layer $j$, denoted $L_j$, to be the difference $K_j \setminus K_{j-1}$. Whenever we refer to parallel supporting hyperplanes for two bodies $K_i$ and $K_j$, we assume that both hyperplanes lie on the same side of the origin.

\begin{figure*}[tbp]
  \centerline{\includegraphics[scale=.65]{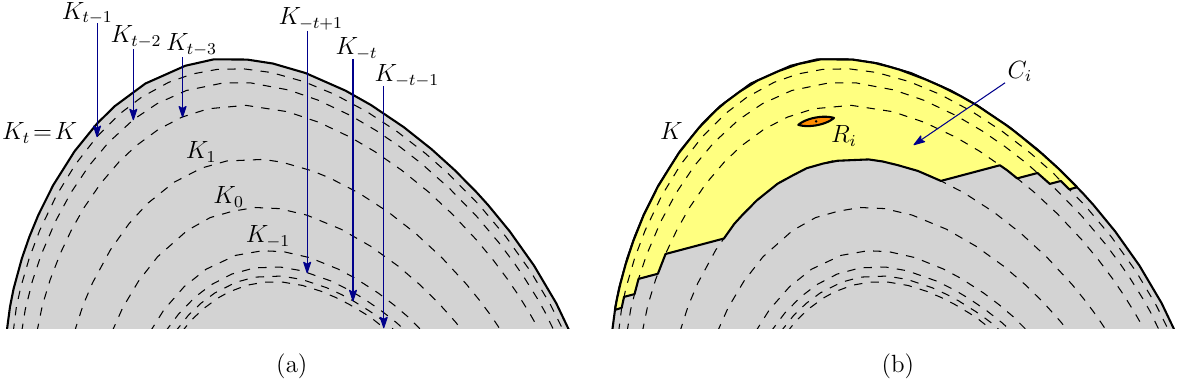}}
  \caption{\label{f:layers}(a) The layers of $K$ in the layered construction and (b) the region $C_i$ corresponding to a body $R_i$.}
\end{figure*}

The following lemma describes some straightforward properties of these layers and the scaling transformations. In particular, the lemma shows that the $t$ layers lie close to the boundary of $K$ (within distance $\eps$) and layer $j$ has a ``thickness'' of $\Theta(w_j)$.

\begin{lemma} \label{lem:scale}
Let $\eps > 0$ be a sufficiently small parameter and $c_1$ be any constant. For sufficiently small constant $c_0$ in the definition of $\alpha$ (depending on $c_1)$, the layered construction and the scaling transformations described above satisfy the following properties:
 \begin{enumerate}[label=(\alph*)]
 \item\label{a} For $-t \le j \le t$, the distance between parallel supporting hyperplanes of $K_{j-1}$ and $K_j$ is at most $c_1 w_j / \sqrt{\gamma}$.
 
 \item\label{b}  For $-t \le j \le t$, the distance between parallel supporting hyperplanes of $K_{j-1}$ and $K_j$ is at least $\sqrt{\gamma} c_1 w_j / 2$.

 \item\label{c} The distance between parallel supporting hyperplanes of $K$ and $K_{-t-1}$ is at most $\eps$.
  
 \item\label{d} For $-t-1 \le j \le t$, the scaling factor for $T_j$ is at least 1/2 and at most 1.
  
 \item\label{e} For $-t-1 \le j \le t$, $T_j$ preserves volumes up to a constant factor.
\end{enumerate}

\end{lemma}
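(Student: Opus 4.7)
The plan is to reduce everything to a single computation of the scaling factor $s_j := \prod_{i=j+1}^{t}(1 - c_1 w_i)$ (with $s_t = 1$), together with the support-function bounds that come from $\gamma$-canonical form. Since $K_j = T_j(K) = s_j K$, its support function satisfies $h_{K_j}(u) = s_j h_K(u)$ in every unit direction $u$, and because $\sqrt{\gamma}B_0 \subseteq K \subseteq B_0/\sqrt{\gamma}$ we have $\sqrt{\gamma} \le h_K(u) \le 1/\sqrt{\gamma}$. The distance between the two parallel supporting hyperplanes of $K_{j-1}$ and $K_j$ in direction $u$ (taken on the same side of the origin) is therefore $(s_j - s_{j-1})\, h_K(u)$, and the identity $s_j - s_{j-1} = s_j\,c_1 w_j$ reduces every one of the five claims to bounding $s_j$ and $1 - s_{-t-1}$.

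I would next prove \ref{d} and \ref{e}. The upper bound $s_j \le 1$ is immediate, since every factor $1 - c_1 w_i$ lies in $(0,1)$ once $c_0$ is small enough to make $c_1 w_i < 1/2$. For the lower bound $s_j \ge 1/2$, observe that
\[
  \sum_{i=-t}^{t} w_i \;=\; c_0 \eps \sum_{i=-t}^{t} \frac{1}{\max(i^2,1)} \;\le\; c_0 \eps \Bigl(1 + 2\sum_{i\ge 1}\tfrac{1}{i^2}\Bigr) \;=\; O(c_0 \eps),
\]
and therefore, using $\log(1-x) \ge -2x$ for $x \le 1/2$,
\[
  \log s_j \;\ge\; -2c_1 \sum_{i=j+1}^{t} w_i \;\ge\; -O(c_0 c_1 \eps).
\]
Choosing $c_0$ sufficiently small (as a function of $c_1$) makes $\log s_j \ge -\log 2$, i.e.\ $s_j \ge 1/2$. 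This gives \ref{d}; \ref{e} follows since a uniform scaling by $s_j$ changes volumes by the factor $s_j^d$, and $s_j^d \in [1/2^d, 1]$ is a constant.

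Claims \ref{a} and \ref{b} are now straightforward. From $s_j - s_{j-1} = s_j c_1 w_j$, the parallel hyperplane distance equals $s_j c_1 w_j h_K(u)$. The upper bound uses $s_j \le 1$ and $h_K(u) \le 1/\sqrt{\gamma}$, giving at most $c_1 w_j/\sqrt{\gamma}$; the lower bound uses $s_j \ge 1/2$ (from \ref{d}) and $h_K(u) \ge \sqrt{\gamma}$, giving at least $\sqrt{\gamma}c_1 w_j/2$. Finally, for \ref{c}, the distance between parallel supporting hyperplanes of $K$ and $K_{-t-1}$ in direction $u$ is $(1 - s_{-t-1})h_K(u) \le (1 - s_{-t-1})/\sqrt{\gamma}$. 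From the same estimate as above, $1 - s_{-t-1} \le 1 - e^{-O(c_0 c_1 \eps)} = O(c_0 c_1 \eps)$, so making $c_0$ small enough (now as a function of $c_1$ and $\gamma$) yields a distance bounded by $\eps$.

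The only real choice in the argument is the threshold on $c_0$: it must simultaneously make each factor $1 - c_1 w_i$ exceed $1/2$, push $\log s_j$ above $-\log 2$, and force $1 - s_{-t-1}$ below $\eps\sqrt{\gamma}$. All three of these are single absolute-constant conditions because $\sum_i w_i = O(c_0 \eps)$ via the convergence of $\sum 1/i^2$; I expect this convergence (which makes the total ``thickness'' of the layering finite rather than $\log(1/\eps)$) to be the only conceptual point, the remaining work being the routine manipulations above.
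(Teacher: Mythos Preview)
Your proposal is correct and follows essentially the same approach as the paper's proof: both reduce everything to the support values $s_j h_K(u)$, use the identity $s_{j-1} = (1-c_1 w_j)s_j$, and rely on the convergence of $\sum_i 1/\max(i^2,1)$ to make the total thickness $O(c_0\eps)$. The only difference is organizational: the paper proves \ref{a} first (needing only $s_j\le 1$), sums to get \ref{c}, and then derives \ref{b} and \ref{d} from \ref{c} via the lower bound $s_j h_K(u)\ge \sqrt{\gamma}-\eps$; you instead establish \ref{d} directly via the logarithmic estimate $\log s_j \ge -2c_1\sum_i w_i$ and then read off \ref{a}, \ref{b}, \ref{c} from $s_j\in[1/2,1]$. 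Your ordering is slightly cleaner but not materially different.
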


\begin{proof}
To prove \ref{a}, let $h_1,h_2$ denote parallel supporting hyperplanes of $K_j,K_{j-1}$, respectively. Since $K$ is in $\gamma$-canonical form, and the scaling factor of the transformation $T_j$ is at most 1, it follows that $h_1$ is at distance at most $1/\sqrt{\gamma}$ from the origin.  Since $h_2$ is obtained by scaling $h_1$ by a factor of $1 - c_1 w_j$ about the origin, it follows that the distance between $h_1$ and $h_2$ is at most $c_1 w_j / \sqrt{\gamma}$.

To prove \ref{c}, let $h_1,h_2$ denote parallel supporting hyperplanes of $K, K_{-t-1}$, respectively. The upper bound of \ref{a} implies that the distance between $h_1$ and $h_2$ is at most $\sum_{j=-t}^t c_1 w_j / \sqrt{\gamma}$. Recall that $w_j = \alpha / \max(j^2,1)$, where $\alpha = c_0 \eps$. By choosing a sufficiently small constant $c_0$ in the definition of $\alpha$ (depending on $c_1$ and $\gamma$), we can ensure that the distance between $h_1$ and $h_2$ is at most $\eps$. 

In the rest of this proof, we will assume that $c_0$ in the definition of $\alpha$ is sufficiently small, so \ref{c} holds. To prove \ref{b}, let $h_1,h_2$ denote parallel supporting hyperplanes of $K_j,K_{j-1}$, respectively. Since $K$ is in $\gamma$-canonical form, it follows from (c) that $h_1$ is at distance at least $\sqrt{\gamma} - \eps$ from the origin. Since $h_2$ is obtained by scaling $h_1$ by a factor of $1 - c_1 w_j$ about the origin, it follows that the distance between $h_1$ and $h_2$ is at least $c_1 (\sqrt{\gamma} - \eps) w_j$, which is at least $\sqrt{\gamma} c_1 w_j / 2 $ for $\eps \le \sqrt{\gamma}/2$. 

To prove \ref{d}, note that we only need to show the lower bound on the scaling factor of $T_j$, since the upper bound is obvious. Again, let $h_1,h_2$ denote parallel supporting hyperplanes of $K, K_{-t-1}$, respectively. Since $K$ is in $\gamma$-canonical form, $h_1$ is at distance at least $\sqrt{\gamma}$ from the origin. Recall that $T_{-t-1}$ maps $h_1$ to $h_2$ and, as shown above, the distance between $h_1$ and $h_2$ is at most $\eps$. It follows that the scaling factor of $T_{-t}$ is at least $1 - \eps / \sqrt{\gamma}$. By choosing $\eps \le \sqrt{\gamma}/2$, we can ensure that the scaling factor of $T_{-t-1}$ is at least $1/2$. Clearly, this lower bound on the scaling factor also applies to any transformation $T_j$, $-t-1 \le j \le t$. This proves \ref{d}. Note that \ref{e} is an immediate consequence.
\end{proof}

Recall that the width of a cap of type $j$ in $\CC'$ is at most $b_2 w_j$ for some constant $b_2$. In order to ensure that layer $j$ can accommodate caps of type $j$, we construct the layered construction of Lemma~\ref{lem:scale} for a constant $c_1 = 2 b_2/\sqrt{\gamma}$. This choice ensures that the distance between parallel supporting hyperplanes of $K_{j-1}$ and $K_j$, respectively, is at most $(2b_2/\gamma) w_j$ and at least $b_2 w_j$ (properties \ref{a} and \ref{b} in Lemma~\ref{lem:scale}).

Let $H'$ be a halfspace and let $C' = K \cap H'$ be a type-$j$ cap in $\CC'$. It will be convenient to associate a set of caps with $C'$ that occur frequently in our construction and analysis. For $j \le r \le t$, define $E_r = K_r \cap T_j(H')$ and define $F_r = T_r(C')$. Both $E_r$ and $F_r$ are caps of $K_r$. This is obviously true for $E_r$. To see that $F_r$ is a cap of $K_r$, note that $F_r = T_r(C') = T_r(K \cap H') = T_r(K) \cap T_r(H') = K_r \cap T_r(H')$. Also, note that $E_j = F_j$.

We are now ready to define the sets $\RR$ and $\CC$ required in Lemma~\ref{lem:layers}. Let $R' \in \RR'$ be a body in group $j$ and let the cap in $\CC'$ associated with it be $C' = K \cap H'$. We define a body $R \in \RR$ and an associated region $C \in \CC$ corresponding to $R'$ and $C'$ as
\begin{equation} \label{eq:RCdef}
    R ~ = ~ T_j(R') \quad\text{and}\quad
    C ~ = ~ \bigcup_{r = j}^{t} E_r^{\sigma} \cap L_r,
\end{equation}
where $\sigma$ is the constant of Lemma~\ref{lem:new-cap-cover} (see Figure~\ref{f:layers}(b)). 

In Lemma~\ref{lem:strat-prop0}, we show that the regions $R$ are contained in layer $j$ if $R'$ is in group $j$. In Lemmas~\ref{lem:strat-prop12-aux} and \ref{lem:strat-prop12}, we establish Properties 1 and 2 of Lemma~\ref{lem:layers}. Finally, in Lemmas~\ref{lem:cap-vol-poly} and \ref{lem:strat-prop3}, we establish Property~3 of Lemma~\ref{lem:layers}.

\begin{lemma} \label{lem:strat-prop0}
If $R'$ is in group $j$, then $R \subseteq F_j \subseteq L_j$.
\end{lemma}

\begin{proof}
By Property~2 of Lemma~\ref{lem:new-cap-cover}, $R' \subseteq C'$. Applying the transformation $T_j$ to these two sets yields $R \subseteq F_j$. Next we show that $F_j \subseteq L_j$. Since $C'$ is a cap of type $j$, its width is at most $b_2 w_j$. By Lemma~\ref{lem:scale}\ref{d}, the scaling factor for $T_j$ is at most 1. Thus, the width of $F_j$ is at most $b_2 w_j$. By Lemma~\ref{lem:scale}\ref{b} and our remarks following Lemma~\ref{lem:scale}, the distance between any parallel supporting hyperplanes of $K_{j-1}$ and $K_j$, respectively, is at least $b_2 w_j$. It follows that $F_j \subseteq K_j \setminus K_{j-1} = L_j$. This completes the proof.
\end{proof}

\begin{lemma} \label{lem:strat-prop12-aux}
For any direction $u$, there is a cap $A$ whose base is orthogonal to $u$, and which satisfies $R \subseteq A \subseteq C$, for some $R \in \RR$ and $C \in \CC$. Further, the width of the cap $A$ is at most $\eps$.
 \end{lemma}

\begin{proof}
By Property~3 of Lemma~\ref{lem:new-cap-cover}, there exists a cap $\widehat{A} = K \cap \widehat{H}$ whose base is orthogonal to $u$, and which satisfies $R' \subseteq \widehat{A}$ and $(C')^{1/\sigma}  \subseteq \widehat{A} \subseteq C'$ for some $R' \in \RR'$ and $C' \in \CC'$. Let $C'$ be a type-$j$ cap. Define $H = T_j(\widehat{H})$ and $A = K \cap H$. We will show that the cap $A$ possesses the properties given in the statement of the lemma. See Figure~\ref{f:prop12} for a representation of the definitions.

\begin{figure}[htb]
  \centerline{\includegraphics[trim={0 0 1.8cm 0},clip,scale=.65]{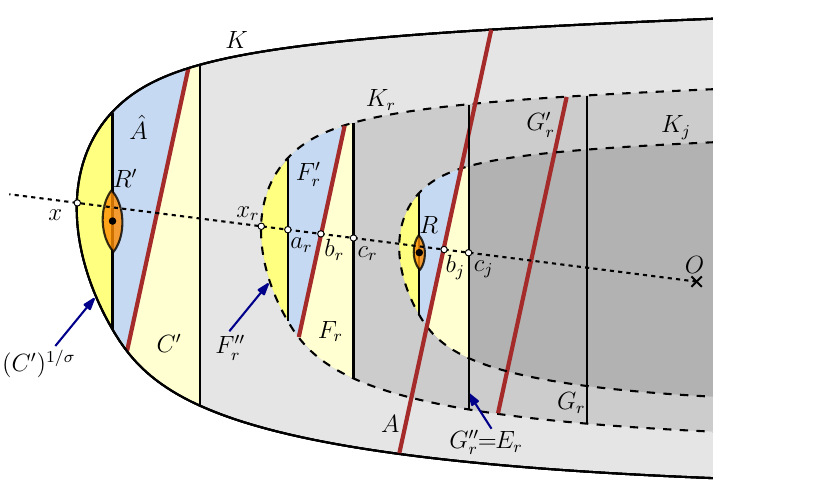}}
  \caption{\label{f:prop12}Proof of Lemma~\ref{lem:strat-prop12-aux}.}
\end{figure}

Recall the body $R \in \RR$ and region $C \in \CC$ (defined in Eq.~\eqref{eq:RCdef}), corresponding to $R'$ and $C'$. Since $R' \subseteq \widehat{A} = K \cap \widehat{H}$, we can apply the transformation $T_j$ to these sets to obtain $R \subseteq K_j \cap H \subseteq K \cap H = A$.

Next we show that the width of the cap $A$ is at most $\eps$. Recall that $\widehat{A} \subseteq C'$. Applying the transformation $T_j$ to these sets, we obtain $K_j \cap H \subseteq F_j$. By Lemma~\ref{lem:strat-prop0}, $F_j \subseteq L_j$. Thus $K_j \cap H \subseteq L_j$. Also, by Lemma~\ref{lem:scale}\ref{c}, the distance between any parallel supporting hyperplanes of $K$ and $K_{-t-1}$ is at most $\eps$. Since $K_j \cap H \subseteq L_j$, it follows that the width of the cap $A = K \cap H$ is at most $\eps$. 

It remains to show that $A \subseteq C$. By the definition of $A$ and $C$, it suffices to show that $K_j \cap H \subseteq E_j^{\sigma} \cap L_j$ and, for $j+1 \le r \le t$, $K_r \cap H ~\subseteq~ E_r^{\sigma}$. Applying $T_j$ to both sides of $\widehat{A} \subseteq C'$, we obtain $K_j \cap H ~\subseteq~ F_j = E_j \subseteq E_j^{\sigma}$ (for any $\sigma \ge 1$). By Lemma~\ref{lem:strat-prop0}, $F_j \subseteq L_j$. Thus $K_j \cap H \subseteq E_j^{\sigma} \cap L_j$.

Next we consider the case when $r \geq j+1$.
In this case, we will need to exploit the fact that $\widehat{A}$ is sandwiched between two caps with parallel bases, that is, $(C')^{1/\sigma}  \subseteq \widehat{A} \subseteq C'$. Recall that $F_r$ and $F_j$ are the caps of $K_r$ and $K_j$, respectively, defined as $F_r = T_r(C')$ and $F_j = T_j(C')$. Define $F'_r = T_r(\widehat{A}), F'_j = T_j(\widehat{A})$, and $F''_r = T_r((C')^{1/\sigma})$. We have $F''_r \subseteq F'_r \subseteq F_r$ and $F'_j \subseteq F_j$.

Let $x$ denote the apex of $C'$ and $x_r$ denote the point $T_r(x)$. Let $a_r, b_r$, and $c_r$ denote the points of intersection of the bases of the caps $F''_r, F'_r$, and  $F_r$, respectively, with the line segment $O x$. Similarly, let $b_j$ and $c_j$ denote the points of intersection of the bases of the caps $F'_j$ and $F_j$, respectively, with the line segment $O x$. Consider scaling caps $F''_r, F'_r$ and $F_r$ as described in Lemma~\ref{lem:cap-containment}, about the point $x_r$ with scaling factor $\|c_j x_r\| / \|a_r x_r\|$. Let $G''_r, G'_r,$ and $G_r$ denote the caps of $K_r$ obtained from $F''_r, F'_r,$ and $F_r$, respectively, through this transformation. By Lemma~\ref{lem:cap-containment}, $G''_r \subseteq G'_r \subseteq G_r$. Note that $F''_r$ and $E_r$ are caps of $K_r$ with parallel bases and the base of $E_r$ passes through the point $c_j$ (since the hyperplanes passing through the bases of the caps $E_r$ and $F_j$ are the same). Also, $K_r \cap H$ and $G'_r$ are caps of $K_r$ with parallel bases. Our choice of the scaling factor thus implies that $G''_r = E_r$ and $K_r \cap H \subseteq G'_r$.

Putting these facts together, we have $K_r \cap H  \subseteq G_r$. Note that $G_r$ and $E_r$ are caps of $K_r$ with parallel bases. Thus, to prove that $K_r \cap H \subseteq E_r^{\sigma}$, it suffices to show that $\width(G_r) / \width(E_r) \le \sigma$. Note that 
\[
    \frac{\width(G_r)}{\width(E_r)} 
        ~ = ~ \frac{\width(G_r)}{\width(G''_r)} 
        ~ = ~ \frac{\width(F_r)}{\width(F''_r)} 
        ~ = ~  \frac{\width(C')}{\width((C')^{1/\sigma})} 
        ~ = ~ \sigma,
\]
which completes the proof.
\end{proof}

\begin{lemma} \label{lem:strat-prop12}
Let $A$ be any cap of $K$. Then either (i) there is a body $R \in \RR$ such that $R \subseteq A$ or (ii) there is a region $C \in \CC$ such that $A \subseteq C$. Furthermore, if the width of $A$ is $\eps$, then (i) holds.
\end{lemma}

\begin{proof}
Taking $u$ to be the unit vector orthogonal to the base of the cap $A$ and applying Lemma~\ref{lem:strat-prop12-aux}, it follows that there exists a cap $A'$ whose base is parallel to the base of $A$ and which satisfies $R \subseteq A' \subseteq C$, for some $R \in \RR$ and $C \in \CC$. Further, the width of the cap $A'$ is at most $\eps$.

We consider two cases, depending on whether $A' \subseteq A$ or $A \subseteq A'$. In the first case, we have $R \subseteq A' \subseteq A$ and, in the second case, we have $A \subseteq A' \subseteq C$. Thus, either $R \subseteq A$ or $A \subseteq C$.

Further, if the width of $A$ is $\eps$, then $A' \subseteq A$ because the width of $A'$ is at most $\eps$. Thus the first case holds implying that $R \subseteq A$.
\end{proof}

In Lemma~\ref{lem:strat-prop3}, we bound the number of bodies of $\RR$ that overlap any region $C \in \CC$ (Property~3 of Lemma~\ref{lem:layers}). Recall that $C$ corresponds to a cap $C' \in \CC'$ (by Eq.~\eqref{eq:RCdef}). Let $C'$ be of type $j$. We first establish a constant bound on the number of bodies of $\RR$ that overlap $E_j^{\sigma} \cap L_j$. Then we bound the number of bodies of $\RR$ that overlap $E_r^{\sigma} \cap L_r$ for $r > j$. Our analysis exploits the fact that the volume of $E_r$ exceeds the volume of $E_j$ by a factor that is at most polynomial in $r-j$ (i.e., the number of layers between $K_r$ and $K_j$), while the volume of the bodies of $\RR$ in layer $r$ exceeds the volume of the bodies of $\RR$ in layer $j$ by a factor that is exponential in $r-j$. This allows us to show that the number of bodies of $\RR$ that overlap $C$ is bounded by a constant. 

Before presenting Lemma~\ref{lem:strat-prop3}, we establish a polynomial bound on the growth rate of the volume of the caps $E_r$ in the following lemma.

\begin{lemma} \label{lem:cap-vol-poly}
Let $C'$ be a type-$j$ cap. For $j+1 \le r \le t$, $\vol(E_r) = O((r-j)^{3d}) \cdot \vol(E_j)$.
\end{lemma}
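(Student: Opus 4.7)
The plan is to reduce the comparison of volumes of two caps living in different bodies ($E_r$ in $K_r$ versus $E_j$ in $K_j$) to a single-body cap expansion, so that Lemma~\ref{lem:cap-exp} applies. Since $T_r T_j^{-1}$ is a pure scaling about the origin, $K_r = s_{j,r}\,K_j$ where $s_{j,r} = 1/\prod_{i=j+1}^r(1 - c_1 w_i) \ge 1$. Hence the shrunken cap $\tilde E := s_{j,r}^{-1} E_r = K_j \cap (s_{j,r}^{-1} T_j(H'))$ is a cap of $K_j$ whose base lies on a hyperplane parallel to, but strictly closer to the origin than, the base of $E_j$. It follows that $E_j \subseteq \tilde E$ and that $\tilde E$ coincides with the $\rho$-expansion $E_j^{\rho}$ within $K_j$, where $\rho = \width(\tilde E)/\width(E_j) \ge 1$. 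Lemma~\ref{lem:cap-exp} then gives $\vol(\tilde E) \le \rho^d \vol(E_j)$, and since $\vol(E_r) = s_{j,r}^d \vol(\tilde E)$, we obtain
\[
    \vol(E_r) ~ \le ~ (s_{j,r}\,\rho)^d \cdot \vol(E_j).
\]

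It then remains to estimate the two scale factors. The inequality $1 - \prod_{i=j+1}^r (1 - c_1 w_i) \le c_1 \sum_{i=j+1}^r w_i$, together with the tail bound $\sum_i w_i = O(\eps)$ already used in the proof of Lemma~\ref{lem:scale}\ref{c}, shows that $s_{j,r} = O(1)$. For $\rho$, let $D$ denote the distance from the origin to the base hyperplane of $E_j$ and $D'$ denote the distance from the origin to the parallel supporting hyperplane of $K_j$ on the apex side. Then $\width(E_j) = D' - D$ and $\width(\tilde E) = D' - s_{j,r}^{-1} D$, so
\[
    \rho
    ~ = ~ 1 + \frac{D\,(1 - s_{j,r}^{-1})}{\width(E_j)}
    ~ \le ~ 1 + O\!\left(\frac{\sum_{i=j+1}^r w_i}{w_j}\right),
\]
using $D \le 1/\sqrt{\gamma}$ (since $K_j \subseteq K \subseteq B_0/\sqrt{\gamma}$), the above bound on $1 - s_{j,r}^{-1}$, and $\width(E_j) = \Omega(w_j)$, which follows because $E_j = T_j(C')$, $\width(C') \ge b_1 w_j$, and $T_j$ has scaling factor in $[1/2, 1]$ by Lemma~\ref{lem:scale}\ref{d}.

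The remaining, and most delicate, step is to verify the elementary inequality $\sum_{i=j+1}^r w_i / w_j = O((r-j)^3)$; I expect this case analysis to be the main obstacle. Setting $n = r - j$ and recalling $w_i = \alpha/\max(i^2, 1)$, there are three cases. When $j \ge 0$, each $w_i$ with $i \in [j+1, r]$ is $O(w_j)$, so the ratio is $O(n)$. When $j < 0 \le r$, the entire sum is $O(\alpha)$ because $\sum_{i \ne 0} 1/i^2$ converges; since $|j| \le n$, the ratio is $O(j^2) = O(n^2)$. The worst case is $j < r < 0$, where each $1/i^2 \le 1/r^2$, so the sum is at most $n\,\alpha/r^2$ and the ratio is $n\,j^2/r^2 = n\,(1 + n/|r|)^2 = O(n^3)$. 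Taking the maximum over all cases yields $\rho = O(n^3)$, and therefore $\vol(E_r) \le (s_{j,r}\rho)^d \vol(E_j) = O((r-j)^{3d}) \vol(E_j)$, as claimed.
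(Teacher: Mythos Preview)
Your argument is correct and follows essentially the same approach as the paper: reduce the volume ratio to a width ratio via Lemma~\ref{lem:cap-exp}, bound that width ratio by $O\bigl(\sum_{i} w_i / w_j\bigr)$, and then do the same three-case analysis (yielding $O((r-j)^3)$). The only cosmetic difference is that the paper compares $E_r$ with $F_r = T_r(C')$ inside $K_r$ and bounds $\width(E_r)$ by the cumulative layer thickness from Lemma~\ref{lem:scale}\ref{a}, whereas you scale $E_r$ down into $K_j$ and compute the width of $\tilde E$ directly; both routes give the same estimate.
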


\begin{proof}
Recall that $F_r = T_r(C')$ and $E_j = F_j = T_j(C')$. By Lemma~\ref{lem:scale}\ref{e}, $T_j$ and $T_r$ preserve volumes up to constant factors, and so $\vol(F_r) = \Theta(\vol(F_j))$. Thus, to prove the lemma, it suffices to show that $\vol(E_r) / \vol(F_r) = O((r-j)^{3d})$. In turn, in light of Lemma~\ref{lem:cap-exp}, it suffices to prove that $\width(E_r) / \width(F_r) = O((r-j)^3)$.

Towards this end, recall that the width of $E_r$ is upper bounded by the distance between parallel supporting hyperplanes of $K_r$ and $K_{j-1}$ which, by Lemma~\ref{lem:scale}\ref{a}, is at most $O(\sum_{i=j}^r w_i)$. Further, by Lemma~\ref{lem:scale}\ref{d}, the width of $F_r$ is at least half the width of $C'$. As $C'$ is of type $j$, by definition its width is $\Theta(w_j)$. It follows that the width of $F_r$ is $\Omega(w_j)$. Thus, we have shown that 
\[
    \frac{\width(E_r)}{\width(F_r)} 
        ~ = ~ O \left( \frac{\sum_{i=j}^r w_i}{w_j} \right).
\]
Clearly,
\[
    \frac{\sum_{i=j}^r w_i}{w_j} 
        ~ \leq ~ (r-j+1) \cdot \left(\frac{\max_{i=j}^r w_i}{w_j}\right).
\]
To complete the proof, we will show that $(\max_{i=j}^r w_i) / w_j = O((r-j)^2)$. Recall that for any $i$, $w_i = \alpha / \max(i^2,1)$. We consider three cases: (1) $r \ge j \ge 0$, (2) $0 > r \ge j$, and (3) $r \ge 0 > j$. In Case 1, we have $\max_{i=j}^r w_i = w_j$, and so the quantity of interest is 1. In Case 2, $\max_{i=j}^r w_i = w_r$. Thus
\[
    \frac{\max_{i=j}^r w_i}{w_j}
        ~ =    ~ \frac{w_r}{w_j} = \frac{1/r^2}{1/j^2} = \frac{j^2}{r^2} 
        ~ =    ~ \left( \frac{(r-j) + |r|}{|r|} \right)^2 
        ~ =    ~ \left( \frac{r-j}{|r|} + 1 \right)^2
        ~ \leq ~ (r-j+1)^2.
\]
In Case 3, $\max_{i=j}^r w_i = w_0$. Thus
\[
    \frac{\max_{i=j}^r w_i}{w_j} 
        ~ =    ~ \frac{1}{1/j^2} 
        ~ =    ~ j^2 
        ~ \leq ~ (r-j)^2.
\]
In all three cases, we have shown that $(\max_{i=j}^r w_i) / w_j = O((r-j)^2)$, as desired.
\end{proof}

\begin{lemma} \label{lem:strat-prop3}
Any region $C \in \CC$ intersects $O(1)$ bodies of $\RR$.
\end{lemma}

\begin{figure}[tbp]
  \centerline{\includegraphics[scale=.65]{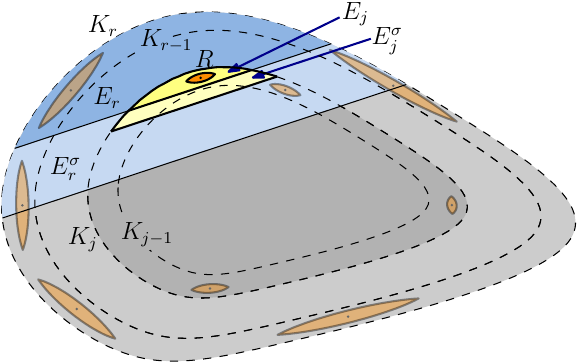}}
  \caption{\label{f:prop3}Proof of Lemma~\ref{lem:strat-prop3}.}
\end{figure}

\begin{proof}
Recall by Eq.~\eqref{eq:RCdef} that $C$ corresponds to a cap $C' \in \CC'$ of some type $j$, where $C = \bigcup_{r = j}^{t} (E_r^{\sigma} \cap L_r)$. We begin by bounding the number of bodies of $\RR$ that overlap $E_j^{\sigma} \cap L_j$ (see Figure~\ref{f:prop3}). We assert that all the bodies of $\RR$ in layer $j$ have volumes $\Omega(\vol(E_j))$. To prove this, recall that the type-$j$ caps of $\CC'$ have the same volume as $C'$ to within a factor of 2. Also, recall that the body of $\RR'$ associated with a cap of $\CC'$ has the same volume as the cap to within a constant factor (immediate consequence of Property~2 of Lemma~\ref{lem:new-cap-cover}). It follows that all the bodies of $\RR'$ in group $j$ have volumes $\Omega(\vol(C'))$. By Lemma~\ref{lem:scale}\ref{e}, the scaling transformations used in our construction preserve volumes to within a constant factor. Also, recall that $E_j$ is a scaled copy of $C'$, and by Lemma~\ref{lem:strat-prop0}, the bodies of $\RR$ in layer $j$ are scaled copies of the bodies of $\RR'$ in group $j$. It follows that the bodies of $\RR$ in layer $j$ all have volumes $\Omega(\vol(E_j))$.

Next, we assert that any body of $\RR$ that overlaps $E_j^{\sigma} \cap L_j$ is contained within the cap $E_j^{2 \sigma}$. To prove this, recall from the proof of Lemma~\ref{lem:new-cap-cover} that the bodies of $\RR'$ are  $(1/5)$-scaled disjoint Macbeath regions with respect to $K$. It follows that the bodies of $\RR$ in layer $j$ are $(1/5)$-scaled disjoint Macbeath regions with respect to $K_j$. Furthermore, by Lemma~\ref{lem:strat-prop0} any body of $\RR$ that overlaps $L_j$ lies entirely within $L_j$. By Lemma~\ref{lem:mac-cap2}, it now follows that any body of $\RR$ that overlaps $E_j^{\sigma} \cap L_j$ is contained within the cap $E_j^{2 \sigma}$. By Lemma~\ref{lem:cap-exp}, $\vol(E_j^{2 \sigma}) = O(\vol(E_j))$. Since the bodies of $\RR$ in layer $j$ have volumes $\Omega(\vol(E_j))$, it follows by a simple packing argument that at most a constant number of bodies of $\RR$ are contained within $E_j^{2 \sigma} \cap L_j$. Hence, the number of bodies of $\RR$ that overlap $E_j^{\sigma} \cap L_j$ is $O(1)$.

Next we bound the number of bodies of $\RR$ that overlap $E_r^{\sigma} \cap L_r$, where $j+1 \le r \le t$ (see Figure~\ref{f:prop3}). By Lemma~\ref{lem:cap-vol-poly}, we have $\vol(E_r) = O((r-j)^{3d}) \cdot \vol(E_j)$. Recall that the volume of the bodies of $\RR'$ in group $r$ exceeds the volume of the bodies of $\RR'$ in group $j$ by a factor of $\Omega(2^{r-j})$. It follows from Lemma~\ref{lem:scale}\ref{e} and our construction that the volume of the bodies of $\RR$ in layer $r$ exceeds the volume of the bodies of $\RR$ in layer $j$ by a factor of $\Omega(2^{r-j})$. For the same reasons as discussed above, any body of $\RR$ that overlaps $E_r^{\sigma} \cap L_r$ is contained within $E_r^{2\sigma}$, and $\vol(E_r^{2\sigma}) = O(\vol(E_r))$. Putting this together with the upper bound on $\vol(E_r)$ given above, we have $\vol(E_r^{2 \sigma}) = O((r-j)^{3d}) \cdot \vol(E_j)$. By a simple packing argument, it follows that the number of bodies of $\RR$ that are contained within $E_r^{2\sigma} \cap L_r$ is $O((r-j)^{3d} / 2^{r-j})$. This bounds the number of bodies of $\RR$ that overlap $E_r^{\sigma} \cap L_r$. It follows that the number of bodies of $\RR$ that overlap $C =  \bigcup_{r = j+1}^{t} (E_r^{\sigma} \cap L_r)$ is on the order of $\sum_{j+1 \le r \le t} (r-j)^{3d} / 2^{r-j} = O(1)$, which completes the proof.
\end{proof}

\subsection{Polytope Approximation}

Finally, we can assemble all the pieces to obtain the desired approximation. Let $S$ be a set of points containing one point inside each body of $\RR$ defined in Lemma~\ref{lem:layers} and no other point.

\begin{lemma} \label{lem:apx}
The polytope $P = \conv(S)$ is an inner $\eps$-approximation of $K$.
\end{lemma}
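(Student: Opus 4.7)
The plan is to establish the two components of being an inner $\eps$-approximation separately: containment $P \subseteq K$, and Hausdorff-distance bound. Both steps will follow very quickly from the structural properties collected in Lemma~\ref{lem:layers}, so the real work has been done upstream.

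First I would verify $P \subseteq K$. Each body $R_i \in \RR$ has the form $T_{j(i)}(R'_i)$ for some layer index $j(i)$, and Lemma~\ref{lem:prop0} shows $R_i \subseteq L_{j(i)} = K_{j(i)} \setminus K_{j(i)-1} \subseteq K_{j(i)} \subseteq K$ (the last inclusion uses Lemma~\ref{lem:scale}\ref{d}, which says the scaling factor of $T_{j(i)}$ is at most $1$, together with the fact that $K$ in canonical form contains the origin). Hence $S \subseteq K$, and convexity of $K$ gives $P = \conv(S) \subseteq K$.

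Second, I would bound the Hausdorff distance using the support-function characterization: since $P \subseteq K$, we have $d_H(P,K) = \max_{u \in S^{d-1}} (h_K(u) - h_P(u))$, where $h_X(u) = \sup_{x \in X}\langle u,x\rangle$. Fix any unit vector $u$ and let $C_u$ be the cap of $K$ of width $\eps$ whose base is orthogonal to $u$ (i.e., the slab between the supporting hyperplane of $K$ in direction $u$ and the parallel hyperplane shifted inward by $\eps$). By Property~1 of Lemma~\ref{lem:layers}, there exists some $R_i \in \RR$ with $R_i \subseteq C_u$. Since $S$ contains a point $s \in R_i$, we get $s \in C_u \cap P$, which means $\langle u, s \rangle \ge h_K(u) - \eps$. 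Therefore $h_P(u) \ge \langle u, s \rangle \ge h_K(u) - \eps$, so $h_K(u) - h_P(u) \le \eps$. Taking the supremum over $u$ yields $d_H(P, K) \le \eps$, completing the proof.

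I do not anticipate any serious obstacle here; both steps are essentially bookkeeping on top of Lemma~\ref{lem:layers}. The only mild subtlety is ensuring that each $R_i$ really lives inside $K$ (handled via Lemma~\ref{lem:prop0} and Lemma~\ref{lem:scale}\ref{d}) and invoking the correct direction of Property~1 (any $\eps$-width cap contains some $R_i$) so that the supporting hyperplane of $P$ in direction $u$ lies within $\eps$ of the supporting hyperplane of $K$ in the same direction. Everything difficult about obtaining the optimal combinatorial complexity was already absorbed into the construction of $\RR$; the approximation-quality claim itself is a two-line consequence of Property~1.
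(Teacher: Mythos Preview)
Your proposal is correct and follows essentially the same approach as the paper. The paper's proof is even terser: it invokes the well-known cap-stabbing criterion (citing~\cite{BrI76}) that if $S \subset K$ stabs every $\eps$-width cap then $\conv(S)$ is an inner $\eps$-approximation, and then applies Property~1 of Lemma~\ref{lem:layers} exactly as you do; your support-function argument simply unpacks that criterion explicitly, and your separate verification that $S \subset K$ via Lemma~\ref{lem:prop0} is something the paper leaves implicit.
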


\begin{proof}
A set of points $S$ is said to \emph{stab} a cap if the cap contains at least one point of $S$. It is well known that if a set of points $S \subset K$ stabs all caps of width $\eps$ of $K$, then $\conv(S)$ is an inner $\eps$-approximation of $K$~\cite{BrI76}. Let $C$ be a cap of width $\eps$. By Lemma~\ref{lem:layers}, Property~{1}, there is a convex body $R_i \subseteq C$. Since $S$ contains a point that is in $R_i$, we have that the cap $C$ is stabbed.
\end{proof}

To bound the combinatorial complexity of $\conv(S)$, and hence conclude the proof of Theorem~\ref{thm:main}, we use the witness-collector approach~\cite{DGGT16}. The proof is the analogue to the one in~\cite{AFM17a} and is included for completeness.

\begin{lemma} \label{lem:fewfaces}
The number of faces of $P=\conv(S)$ is $O(1/\eps^{(d-1)/2})$.
\end{lemma}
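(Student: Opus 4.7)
The plan is to apply the witness-collector machinery encapsulated in Lemma~\ref{lem:witness-collector}, using the convex bodies $R_1,\ldots,R_k$ from Lemma~\ref{lem:layers} as the set of witnesses $\WW$ and the associated regions $C_1,\ldots,C_k$ as the set of collectors $\CC$. Since $|\CC| = k = O(1/\eps^{(d-1)/2})$ by the first guarantee of Lemma~\ref{lem:layers}, it suffices to verify that this choice of $\WW$ and $\CC$ satisfies the three defining properties of a witness-collector system with respect to the point set $S$.

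Property~(1) is immediate from the construction of $S$: by definition, $S$ contains exactly one point from the interior of each body $R_i \in \RR$, so every witness contains a point of $S$. Property~(3) follows from the third guarantee of Lemma~\ref{lem:layers}, which asserts that each region $C_i$ intersects at most a constant number of bodies of $\RR$; since $S$ contributes at most one point per body of $\RR$, at most a constant number of points of $S$ can lie inside any given collector $C_i$.

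The main work lies in establishing Property~(2). Let $H$ be an arbitrary halfspace. If $H \cap K = \emptyset$, then $H \cap S \subseteq H \cap K = \emptyset$, and the property holds vacuously (taking any collector). Otherwise, $A = H \cap K$ is a cap of $K$, and we apply the second guarantee of Lemma~\ref{lem:layers}: either (i) there exists $i$ with $R_i \subseteq A \subseteq H$, which means $H$ contains a witness, or (ii) there exists $i$ with $A \subseteq C_i$, in which case $H \cap S \subseteq K \cap S \cap H = S \cap A \subseteq C_i$, so $H \cap S$ is contained in a collector. Either alternative satisfies Property~(2).

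Having verified all three properties, Lemma~\ref{lem:witness-collector} yields that the combinatorial complexity of $P = \conv(S)$ is $O(|\CC|) = O(k) = O(1/\eps^{(d-1)/2})$, as claimed. I do not expect any genuine obstacle here, since all the substantive geometric content has already been absorbed into Lemma~\ref{lem:layers}; the only point requiring mild care is the passage from an arbitrary halfspace to a cap of $K$, which is handled by the trivial case distinction above. Combined with Lemma~\ref{lem:apx}, which shows that $P$ is an inner $\eps$-approximation of $K$, this completes the proof of Theorem~\ref{thm:main}.
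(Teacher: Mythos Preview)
Your proof is correct and follows essentially the same approach as the paper: both define the witnesses and collectors to be the bodies $R_i$ and regions $C_i$ from Lemma~\ref{lem:layers}, verify the three witness-collector properties (with the same case split on whether the halfspace meets $K$ for Property~(2)), and invoke Lemma~\ref{lem:witness-collector}. Your write-up is slightly more explicit in justifying $H \cap S \subseteq C_i$ via $S \subseteq K$, but otherwise the arguments coincide.
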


\begin{proof}
Define the witness set $\WW = R_1, \ldots, R_k$ and the collector set $\CC = C_1, \ldots, C_k$, where the $R_i$'s and $C_i$'s are as defined in Lemma~\ref{lem:layers}. As there is a point of $S$ in each body $R_i$, Property~(1) of the witness-collector method is satisfied. To prove Property~(2), let $H$ be any halfspace. If $H$ does not intersect $K$, then Property~(2) of the witness-collector method holds trivially. Otherwise let $C = K \cap H$. By Property~2 of Lemma~\ref{lem:layers}, there is an $i$ such that either $R_i \subseteq C$ or $C \subseteq C_i$. It follows that $H$ contains witness $R_i$ or $H \cap S$ is contained in collector $C_i$. Thus Property~(2) of the witness-collector method is satisfied. Finally, Property~3 of Lemma~\ref{lem:layers} implies Property~(3) of the witness-collector method. Thus, we can apply Lemma~\ref{lem:witness-collector} to conclude that the number of faces of $P$ is $O(|\CC|) = O(k)$, which proves the lemma.
\end{proof}

\section{Conclusion and Open Problems} \label{s:conclusion}

We have shown that it is possible to $\eps$-approximate any convex body $K \subset \RE^d$ by a polytope of combinatorial complexity $O(1/\eps^{(d-1)/2})$. This bound removes the polylogarithmic factor of our previous construction~\cite{AFM17a} and is the best possible since a unit ball requires this number of vertices. This improved bound arises from a deeper understanding of the distribution of Macbeath regions and caps depending on volume. In particular, we have shown that the number of disjoint $\eps$-width Macbeath regions of volume $v$ is $O\kern-1pt\left(\min\left(\frac{\eps}{v}, \frac{v}{\eps^d}\right)\right)$. This volume-sensitive bound led to a more efficient application of the witness-collector method.

A number of open problems remain. The first is how to efficiently compute the approximating polytope. By applying the techniques from~\cite{AFM17c}, it is possible to compute the vertices of the approximation in time $O(n + 1/\eps^{3d/2})$. Can this be improved to match time needed to compute an $\eps$-kernel? The current best bound is $O(n \log\inv\eps + 1/\eps^{(d-1)/2 + \alpha})$, for any constant $\alpha > 0$~\cite{AFM17c}. Extending this, is it possible to compute not only the vertices but the whole combinatorial description of $P$ in this time? Note that, by Avis and Fukuda's convex hull algorithm~\cite{AvF92}, the combinatorial description can be computed from the vertices in $O(1/\eps^{d-1})$ time. Again, we may ask if the exponent may be reduced to roughly $(d-1)/2$.

Another valuable direction for future research is that of simplifying the construction. An important question in this regard is whether the constructions of Dudley or Bronshteyn and Ivanov lead to polytopes of optimal combinatorial complexity. Building mathematical tools that would allow us to analyze the complexity of these constructions is a major open problem. With these tools in hand, it would be desirable to analyze the combinatorial complexity of the convex hull of the point sets produced by the fastest $\eps$-kernel construction algorithms known~\cite{AFM17c,Cha18}.

As mentioned earlier, our result is optimal in the worst case, but it may be suboptimal for particular convex bodies. Clarkson~\cite{Cla93} presented an algorithm to compute an $\eps$-approximation of a given convex body $K$ using at most $O(\log\inv\eps)$ times the minimum number of vertices needed to $\eps$-approximate $K$. (Clarkson's result applies to the $\ell_1$, rather than Euclidean, metric.) We wonder if a similar approximation algorithm exists for the total combinatorial complexity, or, failing this, whether hardness results can be proved. A related problem involves obtaining bounds on the combinatorial complexity that depend on certain parameters of the input polytope, such as its surface area, volume, or number of facets.

Throughout, we have assumed that the dimension $d$ is a constant. A careful examination of our proofs reveals factors of the form $d^{O(d)} = 2^{O(d \log d)}$ hidden in our asymptotic forms for the number of witnesses and collectors. Since a collector may intersect $d^{O(d)}$ witnesses, the witness-collector technique produces a factor of $d^{O(d^2)}$ in the combinatorial complexity. In contrast, the $d$-dependency in the number of vertices obtained by Bronshteyn and Ivanov's construction is of the form $2^{O(d)}$. Another natural question is whether this gap can be closed. In fact, even for a constant $\eps$, it is not known whether there exists an approximating polytope of combinatorial complexity $2^{O(d)}$.

\section{Acknowledgements} \label{s:acks}

We would like to thank the reviewers of both the conference and journal versions of this paper for their many helpful suggestions.

\bibliographystyle{abbrv}
\bibliography{shortcuts,macbeath}

\end{document}